\documentclass[11pt, letterpaper, fleqn]{article}
\pdfoutput=1
\usepackage{amsmath}
\usepackage{amsthm}
\usepackage{latexsym}
\usepackage{xspace}
\usepackage{amssymb}
\usepackage{fancybox}
\usepackage{epsfig, url}
\usepackage{fullpage, times}
\usepackage{mathtools, wasysym}
\usepackage{cite}
\usepackage{thmtools}
\usepackage{thm-restate}
\usepackage{enumerate}

\newtheorem{theorem}{Theorem}

\newtheorem{definition}[theorem]{Definition}
\newtheorem{lemma}[theorem]{Lemma}
\newtheorem{corollary}[theorem]{Corollary}
\newtheorem{fact}[theorem]{Fact}
\newtheorem{proposition}[theorem]{Proposition}

\theoremstyle{plain}

\newtheorem{problem}{Problem}

 {
    \begin{enumerate}}{\end{enumerate}}

\newcommand{\genericdomain}{\mathcal{X}}
\newcommand{\domain}{\mathcal{D}}
\newcommand{\reals}{\mathbb{R}}
\newcommand{\R}{\mathbb{R}}

\newcommand{\sgn}{\mathrm{sign}}

\newcommand{\poly}{\mathrm{poly}}

\newcommand{\polylog}{\mathrm{polylog}}

\newcommand{\ED}{\mathsf{ ED}}

\newcommand{\SURJ}{\mathsf{ SURJ}}

\newcommand{\OR}{\mathsf{ OR}}

\newcommand{\MAJ}{\mathsf{MAJ}}

\newcommand{\bits}{\{-1,1\}}

\newcommand{\promcomp}{G^{\le N}}
\newcommand{\symcomp}{G^{\operatorname{prop}}}
\newcommand{\symsymcomp}{\tilde{G}^{\operatorname{prop}}}
\newcommand{\ls}{\star}

\DeclareMathAlphabet{\mathpzc}{OT1}{pzc}{m}{it}
\DeclareMathAlphabet{\mathcal}{OMS}{cmsy}{m}{n}

\newcommand{\eat}[1]{}

\newcommand{\N}{\mathbb{N}}

\newcommand{\Ind}{\mathbb{I}}
\newcommand{\eps}{\varepsilon}

\usepackage{color}

\newcommand{\ignore}[1]{}
\newcommand{\provisionallyremove}[1]{}

\newcommand{\adeg}{\widetilde{\operatorname{deg}}}

\newcommand{\AND}{\mathsf{AND}}

\renewcommand{\sgn}{\operatorname{sgn}}

\makeatletter
\newcommand*{\itemequation}[3][]{%
  \item
  \begingroup
    \refstepcounter{equation}%
    \ifx\\#1\\%
    \else
      \label{#1}%
    \fi
    \sbox0{#2}%
    \sbox2{$\displaystyle#3\m@th$}%
    \sbox4{ \@eqnnum}%
    \dimen@=.5\dimexpr\linewidth-\wd2\relax
    \let\CenterInSpace=N%
    \ifcase
        \ifdim\wd0>\dimen@
          \z@
        \else
          \ifdim\wd4>\dimen@
            \z@
          \else
            \@ne
          \fi
        \fi
      \let\CenterInSpace=Y%
    \fi
    \ifdim\dimexpr\wd0+\wd2+\wd4\relax>\linewidth
      \@latex@warning{Equation is too large}%
    \fi
    \noindent
    \rlap{\copy0}%
    \ifx\CenterInSpace Y%
      \rlap{\hbox to \linewidth{\kern\wd0\hss\copy2\hss\kern\wd4}}%
    \else
      \rlap{\hbox to \linewidth{\hfill\copy2\hfill}}%
    \fi
    \hbox to \linewidth{\hfill\copy4}%
    \hspace{0pt}
  \endgroup
  \ignorespaces
}
\makeatother

\title{A Nearly Optimal Lower Bound on the Approximate Degree of AC$^0$}
\author{Mark Bun\thanks{Princeton University.}\\ \texttt{mbun@cs.princeton.edu}
   \and 
Justin Thaler\thanks{Georgetown University.}   \\ \texttt{justin.thaler@georgetown.edu}
}

\date{}
\begin{document}
\maketitle

\begin{abstract}
The approximate degree of a Boolean function $f \colon \{-1, 1\}^n \rightarrow \{-1, 1\}$ 
is the least degree of a real polynomial that approximates $f$ pointwise to error at most $1/3$. We introduce a generic method for increasing the approximate degree of a given function, while preserving its computability by constant-depth circuits.

Specifically, we show how to transform any Boolean function $f$ with approximate degree 
$d$ into a function $F$ on $O(n \cdot \polylog(n))$ variables
with approximate degree at least $D = \Omega(n^{1/3} \cdot d^{2/3})$.
In particular, if $d= n^{1-\Omega(1)}$, then $D$ is polynomially larger than $d$.
Moreover, if $f$ is computed by a polynomial-size Boolean circuit of constant depth, then so is $F$.

By recursively applying our transformation, for any constant $\delta > 0 $ we exhibit  an AC$^0$ function of approximate degree $\Omega(n^{1-\delta})$. This improves over the best previous lower bound of $\Omega(n^{2/3})$ due to Aaronson and Shi (J. ACM 2004), and nearly matches the trivial upper bound of $n$ that holds for any function. Our lower bounds also apply to (quasipolynomial-size) DNFs of polylogarithmic width.

We describe several applications of these results. We give:
\begin{itemize}
\item For any constant $\delta > 0$, an $\Omega(n^{1-\delta})$ lower bound on the quantum communication complexity of a function in AC$^0$.
\item A Boolean function $f$ with approximate degree at least $C(f)^{2-o(1)}$,
where $C(f)$ is the \emph{certificate complexity} of $f$. This separation is optimal up to the $o(1)$ term in the exponent.
\item Improved secret sharing schemes with reconstruction procedures in AC$^0$.

\end{itemize}
\end{abstract}

\section{Introduction}
The $\eps$-approximate degree of a Boolean function $f \colon \{-1, 1\}^n \rightarrow \{-1, 1\}$,
denoted $\adeg_{\eps}(f)$, 
is the least degree of a real polynomial that approximates $f$ pointwise to error at most $\eps$.
By convention, $\adeg(f)$ is used to denote $\adeg_{1/3}(f)$, and this quantity is referred
to without qualification as the \emph{approximate degree} of $f$. 
The choice of the constant $1/3$ is arbitrary, as
$\adeg(f)$ is related to $\adeg_{\eps}(f)$ by a constant factor for any constant $\eps \in (0, 1)$. Any Boolean function $f$ has an exact representation as a multilinear polynomial of degree at most $n$, so the approximate degree of $f$ is always at most $n$.

Approximate degree is a natural measure of the complexity of a Boolean function,
with a wide variety of applications throughout theoretical computer science. 
For example, upper bounds on approximate degree underly many state-of-the-art learning 
algorithms \cite{ksdnf, klivansservedioomb, agnostic, servediotanthaler,
readonceformulae, colt, osnewbounds}, algorithmic approximations for the inclusion-exclusion principle~\cite{kahn,sherstovinclusion}, and algorithms for differentially private data release
\cite{difpriv1, difpriv2}. Very recently, approximate degree upper bounds have also been used to show new complexity-theoretic \emph{lower bounds}.
In particular, upper bounds on the approximate degree of Boolean formulae underly
the best known lower bounds
on the formula complexity and graph complexity of explicit functions \cite{tal1, tal2, tal3}.

Meanwhile, lower bounds on approximate degree
have enabled significant progress in quantum query complexity~\cite{qqc1, qqc2, aaronsonshi}, communication complexity~\cite{patmat, bvdw, comm1, comm2, comm3, comm4, comm6, comm7, comm8, sherstovsurvey}, circuit complexity~\cite{mp, sherstovmajmaj},
oracle separations~\cite{beigel, bchtv}, and secret-sharing~\cite{viola}.
In particular, approximate degree has been established as one of the most promising tools available 
for understanding the complexity of constant-depth Boolean circuits\footnote{In this paper, all circuits are Boolean and
of polynomial size unless otherwise specified.} (captured
by the complexity class AC$^0$).
Indeed, approximate degree lower bounds lie at the heart of the best known
bounds on the complexity of AC$^0$ under measures such as
sign-rank, discrepancy and margin complexity,
Majority-of-Threshold and Threshold-of-Majority circuit size, and more. 

Despite all of these applications, progress in understanding
approximate degree has been
slow and difficult. 
As noted by many authors,
the following basic problem remains unresolved~\cite{bt14, btdl, bt16, bttoc, beame, sherstov15, sherstov14, viola}. 

\begin{problem} \label{problem:bounded}
Is there a constant-depth circuit in $n$
variables with approximate degree $\Omega(n)$?
\end{problem}

Prior to this work, the best result in this direction was 
Aaronson and Shi's well-known $\Omega(n^{2/3})$ lower bound
on the approximate degree of the Element Distinctness function ($\ED$ for short). 
In this paper, we nearly resolve Open Problem~\ref{problem:bounded}. Specifically, for any constant $\delta > 0$, we
exhibit
an explicit constant-depth circuit $\mathcal{C}$ with approximate degree $\Omega(n^{1-\delta})$. 
Moreover, the circuit $\mathcal{C}$ that we exhibit has depth $O(\log(1/\delta))$.
Our lower bound
also applies to DNF formulae of polylogarithmic width (and quasipolynomial size).

 \medskip \noindent \textbf{Applications.}
We describe several consequences of the above results in complexity theory and cryptography. (Nevertheless, the list of applications we state here is not exhaustive.) We state these results somewhat informally in this introduction,
 leaving details to Section \ref{sec:apps}. Specifically:
 \begin{itemize}
 \item  For any constant $\delta > 0$, we obtain an $\Omega(n^{1-\delta})$ lower bounds on the quantum communication complexity of AC$^0$. This nearly matches the trivial $O(n)$ upper bound that holds for any function.
   \item We exhibit a function $f$ with approximate degree at least $C(f)^{2-o(1)}$,
where $C(f)$ is the \emph{certificate complexity} of $f$. This separation is optimal up to the $o(1)$ term in the exponent. 
 The previous best result was a power-7/6 separation, reported by Aaronson et al. \cite{cheatsheets}. 
   \item We give improved secret sharing schemes with reconstruction procedures in AC$^0$.
  \end{itemize}

While the first and third applications follow by combining our approximate degree lower bounds
 with prior works in a black box manner \cite{patmat, viola}, the second application
requires some additional effort.

\subsection{Prior Work on Approximate Degree}
\subsubsection{Early Results via Symmetrization}
The notion of approximate degree was introduced in seminal work of
Nisan and Szegedy \cite{nisanszegedy}, who proved a tight $\Omega(n^{1/2})$
lower bound on the approximate degree of the functions $\OR_n$ and $\AND_n$.\footnote{Whenever
it is not clear from context, we use subscripts to denote the number of variables on which a function is defined.}
Nisan and Szegedy's proof exploited a powerful technique known as
\emph{symmetrization}, which was introduced in the late 1960's by Minsky and Papert \cite{mp}.
Until recently, symmetrization was the primary tool available for proving approximate degree lower bounds \cite{ambainis, paturi, aaronsonshi,
beigel, servediotanthaler, podolskii}.

Symmetrization arguments proceed in two steps. First, a polynomial 
$p$ on $n$ variables (which is assumed to approximate
the target function $f$) is transformed into a univariate polynomial $q$ in such a way that
$\deg(q) \leq \deg(p)$. Second, a lower bound on $\deg(q)$ is proved, using techniques 
tailored to the analysis of univariate polynomials.

Although powerful,
symmetrization is inherently lossy: by turning a polynomial $p$ on $n$ variables into a univariate polynomial $q$, 
information about $p$ is necessarily thrown away. 
Hence, several works identified the development of non-symmetrization techniques for lower bounding the approximate degree of Boolean functions as an important research direction (e.g., \cite{focstutorial, sherstovhalfspaces1, sherstovsurvey}).
A relatively new such lower-bound technique called the \emph{method of dual polynomials} plays an essential role in our paper.

\subsection{The Method of Dual Polynomials and the AND-OR Tree}

A dual polynomial is a dual solution to a certain linear program capturing the approximate degree of
any function. These polynomials act as certificates of the high approximate degree of a function.
Strong LP duality implies that the technique is lossless, in contrast to symmetrization. That is, for any
function $f$ and any $\eps$, there is always some dual polynomial $\psi$ that witnesses a tight $\eps$-approximate degree
lower bound for $f$. 

A dual polynomial that witnesses the fact that $\adeg_{\eps}(f_n) \geq d$ is a 
function $\psi \colon \bits^n \to \bits$ satisfying three properties:
\begin{itemize}
\item $\sum_{x \in \bits^n}\psi(x) \cdot f(x) > \eps$. If $\psi$ satisfies this condition, it is said to be \emph{well-correlated} with $f$.
\item  $\sum_{x \in \bits^n} |\psi(x)| = 1$. If $\psi$ satisfies this condition, it is said to have $\ell_1$-norm equal to 1.
\item For all polynomials $p \colon \bits^n \to \R$ of degree less than $d$, we have $\sum_{x \in \bits^n} p(x) \cdot \psi(x) = 0$. 
If $\psi$ satisfies this condition, it is said to have \emph{pure high degree} at least $d$.
\end{itemize}

One success story for the method of dual polynomials is the resolution of the approximate degree of the two-level AND-OR tree. For many years, this was the simplest function whose approximate degree resisted characterization by symmetrization methods~ \cite{nisanszegedy, shi, ambainis, sherstovhalfspaces1}. Given two functions $f_M, g_N$,  let $f \circ g \colon \bits^{M \cdot N} \to \bits$ denote their \emph{block composition}, i.e.,
$f \circ g = f(g, \dots, g)$.

\begin{theorem} \label{theorem:and-or}
The approximate degree of the function $\AND_M \circ \OR_N$ is $\Theta(\sqrt{M \cdot N})$.
\end{theorem}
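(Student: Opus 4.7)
The statement has two halves, and I would treat them separately.

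\paragraph{Upper bound $O(\sqrt{MN})$.}
For the upper bound I would appeal to the fact that approximate degree is at most twice bounded-error quantum query complexity, together with H\o yer--Mosca--de Wolf style iterated Grover search for the $\AND$-$\OR$ tree, which solves $\AND_M \circ \OR_N$ with bounded error in $O(\sqrt{MN})$ queries. Alternatively, one can give an explicit polynomial by taking the standard Nisan--Szegedy approximation of $\OR_N$ of degree $O(\sqrt{N})$, making it \emph{robust} to input noise in the sense of Sherstov, and composing it with the $O(\sqrt{M})$-degree approximation of $\AND_M$; robustness is what keeps the composed error bounded and gives the product degree $O(\sqrt{M}\cdot\sqrt{N})$ instead of anything larger.

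\paragraph{Lower bound $\Omega(\sqrt{MN})$: the dual construction.}
The interesting direction is the matching lower bound, which I would prove by the method of dual polynomials described just above the theorem. By Nisan--Szegedy there exist dual witnesses $\psi_M \colon \bits^M \to \reals$ for $\adeg(\AND_M) \geq c\sqrt{M}$ and $\psi_N \colon \bits^N \to \reals$ for $\adeg(\OR_N) \geq c\sqrt{N}$, each having $\ell_1$-norm $1$, the stated pure high degree, and correlation at least a constant with the respective function. I would then form the Sherstov--Lee \emph{dual block composition}
\[
\psi(x_1,\dots,x_M) \;=\; 2^M \cdot \psi_M\bigl(\sgn(\psi_N(x_1)),\dots,\sgn(\psi_N(x_M))\bigr) \cdot \prod_{i=1}^M \bigl|\psi_N(x_i)\bigr|,
\]
and verify the three dual conditions for $f = \AND_M \circ \OR_N$ with $d = \Omega(\sqrt{MN})$. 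The $\ell_1$ normalization is immediate from the product structure. Pure high degree $\sqrt{M}\cdot\sqrt{N}$ follows from a standard Fourier-analytic expansion: any monomial of total degree less than $\sqrt{MN}$ must either contain fewer than $\sqrt{M}$ blocks with positive degree, in which case $\psi_M$'s orthogonality kills it, or it must restrict to a single block with degree less than $\sqrt{N}$, in which case $\psi_N$'s orthogonality kills that factor.

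\paragraph{Main obstacle: well-correlation.}
The step I expect to be the hard one is showing $\sum_x \psi(x)\,f(x) \geq \eps$. A naive composition does not work, because $\sgn(\psi_N)$ is not guaranteed to agree with $\OR_N$, and any disagreement is amplified multiplicatively across the $M$ blocks. The fix, which I would invoke explicitly, is to start with a \emph{one-sided} dual witness for $\OR_N$: one can modify the Nisan--Szegedy dual so that almost all of the $\ell_1$-mass of $\psi_N$ sits on inputs where $\sgn(\psi_N(x)) = \OR_N(x)$, while preserving its pure high degree and correlation up to constants. Given such a one-sided inner dual, a direct calculation collapses $\sum_x \psi(x)\,f(x)$ to (essentially) $\sum_y \psi_M(y)\,\AND_M(y)$, yielding the constant correlation needed. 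Constructing and analyzing this one-sided dual, and tracking the multiplicative error across the $M$ blocks, is the delicate core of the argument; everything else is bookkeeping.
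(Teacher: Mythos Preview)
Your proposal is correct and follows essentially the same route the paper sketches: the upper bound via H{\o}yer--Mosca--de Wolf or robustification, and the lower bound via the dual block composition $\psi_{\AND} \ls \psi_{\OR}$, with the correlation step handled by exploiting one-sidedness of the inner dual.

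One point to sharpen: your description of one-sidedness as ``almost all of the $\ell_1$-mass of $\psi_N$ sits on inputs where $\sgn(\psi_N(x)) = \OR_N(x)$'' is not the condition that actually does the work, and taken literally would not suffice---any dual with correlation $>1/3$ already has most of its mass on agreement points, yet the naive composition still fails because disagreement compounds multiplicatively across the $M$ blocks. The correct condition (as in the paper's Proposition~\ref{prop:bt-amp} and Proposition~\ref{prop:or-dual}) is that $\psi_N$ agrees in sign with $\OR_N$ on \emph{every} input where $\OR_N$ evaluates to $+1$ (here, just the single input $1^N$). The point is that all sign errors of $\psi_N$ then occur only on TRUE inputs of $\OR$, and since $\AND$ is a monotone conjunction, flipping a $-1$ input to $+1$ can never flip $\AND$ from $+1$ to $-1$; this asymmetry is what prevents the errors from compounding and lets the correlation collapse to that of the outer dual. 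Your concluding sentence indicates you understand this consequence, but the hypothesis you stated should be tightened to the exact one-sided condition.
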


Ideas pertaining to both the upper and lower bounds of Theorem~\ref{theorem:and-or} will be useful to understanding the results in this paper. The upper bound of Theorem~\ref{theorem:and-or} was established by
H{\o}yer, Mosca, and de Wolf \cite{hoyer}, who designed a quantum query algorithm to prove that
$\adeg(\AND_M \circ \OR_N) = O(\sqrt{MN})$. Later, Sherstov \cite{sherstovrobust}
proved the following more general result.

\begin{theorem}[Sherstov \cite{sherstovrobust}] For \emph{any}
Boolean functions $f, g$, we have $\adeg(f \circ g) = O(\adeg(f) \cdot \adeg(g))$. 
\label{thm:robust}
\end{theorem}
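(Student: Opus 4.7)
The plan is to reduce Theorem~\ref{thm:robust} to a \emph{robust approximation} lemma. Call $P : \R^n \to \R$ a \emph{$\delta$-robust $\eps$-approximator} of $f : \bn \to \bits$ if $|P(y) - f(x)| \leq \eps$ whenever $x \in \bn$ and $\norm{y - x}_\infty \leq \delta$. The lemma to establish is that every Boolean $f$ admits a $(1/3)$-robust $(1/3)$-approximator of degree $O(\adeg(f))$. Granting this lemma, Theorem~\ref{thm:robust} is one step: let $P$ of degree $O(\adeg(f))$ be robust for the outer function $f : \bits^M \to \bits$, and let $q$ of degree $\adeg(g)$ be a standard $(1/3)$-approximator of the inner $g$; then
\[
R(x^{(1)}, \ldots, x^{(M)}) \;:=\; P\bigl(q(x^{(1)}), \ldots, q(x^{(M)})\bigr)
\]
has degree $O(\adeg(f) \cdot \adeg(g))$, and since each $q(x^{(i)})$ lies within $1/3$ of $g(x^{(i)}) \in \bits$, the robustness of $P$ forces $|R - (f \circ g)| \leq 1/3$ pointwise.

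For the lemma, the naive approach---substituting into a standard approximator of $f$ a univariate polynomial that ``snaps'' each coordinate $y_i$ back to $\sgn(y_i)$---requires degree $\Omega(1/\delta^2)$ per coordinate and blows up the product. The plan instead is: (i) take an approximator $p$ of $f$ of degree $d = \adeg(f)$ with error $1/3$; (ii) compose $p$ with a low-degree univariate Chebyshev-type amplifier $A$ to produce $\tilde p(y) := A(p(y))$, of degree $O(d)$ and pointwise error on $\bn$ driven down to a tiny constant $\eps$; (iii) show via a multivariate Taylor/derivative argument that $\tilde p$ extended to $[-1,1]^n$ is automatically $\delta$-robust for some constant $\delta > 0$. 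The amplification step costs only a constant factor in degree because $A$ has constant degree, and it pushes $\tilde p$ so close to $\pm 1$ on the Boolean cube that only a pathological perturbation could knock it outside the $1/3$ band.

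The main obstacle is step (iii). First-order contributions across all $n$ coordinates could in principle compound linearly in $n$, overwhelming $\eps$. The resolution uses that $\tilde p$ is globally bounded in $[-1,1]$ and invokes a Markov/Bernstein-type inequality for multivariate polynomials, ruling out $\tilde p$ having many large partial derivatives simultaneously; this keeps the total perturbation from $y \in [x-\delta, x+\delta]^n$ within budget. Balancing $\delta$, $\eps$, and the degree of $A$ so that these bounds align is the technical crux of the argument, and is what enables a clean $O(\adeg(f) \cdot \adeg(g))$ bound with no logarithmic overhead.
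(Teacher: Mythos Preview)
Your high-level reduction---prove a robust-approximation lemma for the outer function, then compose with a standard approximator for the inner one---is correct and is exactly how Sherstov proceeds (the paper sketches this in the ``Robustification'' paragraph immediately following the theorem). The gap is entirely in your step (iii).

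The assertion that a Markov/Bernstein-type inequality ``rules out $\tilde p$ having many large partial derivatives simultaneously'' is not backed by any known inequality, and the conclusion you draw from it is false. Take $f$ to be parity on $n$ bits, so $\adeg(f)=n$ and the unique multilinear representation is $p(x)=\prod_{i=1}^n x_i$, already exact on $\bn$. Your amplification step is then vacuous: $\tilde p = A(p)$ for any $A$ with $A(\pm 1)=\pm 1$. Perturb $x = 1^n$ by $e_i = -1/3$ in every coordinate. Then $p(x+e)=(2/3)^n$, which tends to $0$, and a constant-degree $A$ applied to a value near $0$ cannot land near $f(x)=1$. Thus $\tilde p$ is not $(1/3)$-robust, despite having error $0$ on $\bn$ and being bounded on $[-1,1]^n$. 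The underlying problem is that post-composing with a univariate amplifier acts only on the \emph{output} of $p$ and cannot repair what happens when the perturbed input drives $p$ itself far from $\pm 1$; robustness has to be built in on the \emph{input} side.

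That is what Sherstov's construction does: it transforms an arbitrary $p$ into a new polynomial $p_{\text{robust}}$ of degree $O(\deg p + \log(1/\delta))$ satisfying $|p_{\text{robust}}(y+\mathbf{e}) - p(y)| < \delta$ for all $y \in \bits^M$ and $\mathbf{e} \in [-1/3,1/3]^M$. The construction is not an output-side amplification, and the additive $\log(1/\delta)$ dependence already signals that something more delicate than a constant-degree post-composition is required.
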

Sherstov's remarkable proof of Theorem \ref{thm:robust} is via a technique we call \emph{robustification}.
This approximation technique will be an important source of intuition for our new results.

\paragraph{Robustification.} \label{sec:robust} Sherstov \cite{sherstovrobust} showed that for any polynomial
$p \colon \bits^M \to \bits$, and every $\delta > 0$, there is a polynomial 
$p_{\text{robust}}$ of degree $O(\deg(p) +
\log(1/\delta))$ that is robust to noise in the sense that $|p(y) - p_{\text{robust}}(y + \textbf{e})| < \delta$ for all 
$y \in \bits^M$, and $\mathbf{e} \in [-1/3, 1/3]^M$.
Hence, given functions $f_M, g_N$, one can obtain an $(\eps +\delta)$-approximating polynomial
for the block composition $f_M \circ g_N$ as follows. Let $p$ be an $\eps$-approximating polynomial for $f_M$, and $q$
a $(1/3)$-approximating polynomial for $g_N$. Then the block composition 
$p^*:= p_{\text{robust}}(q, \dots, q)$ is an $(\eps + \delta)$-
approximating polynomial for $f_M \circ g_N$. Notice that the degree of $p^*$
is at most the product of the degrees
of $p_{\text{robust}}$ and $q$.

\medskip

Sherstov \cite{sherstovandor} and the authors \cite{bt13} independently
used the method of dual polynomials to obtain the matching $\Omega(\sqrt{M \cdot N})$ lower bound of Theorem~\ref{theorem:and-or}. These lower bound proofs work by
constructing (explicitly in \cite{bt13} and implicitly in \cite{sherstovandor}) an optimal dual polynomial $\psi_{\text{AND-OR}}$ for the AND-OR tree. 
Specifically, $\psi_{\text{AND-OR}}$ is obtained by taking dual polynomials $\psi_{\text{AND}}, \psi_{\text{OR}}$
respectively witnessing the fact that $\adeg(\AND_M) = \Omega(\sqrt{M})$ and $\adeg(\OR_N) = \Omega(\sqrt{N})$,
and combining them in a precise manner. 

For arbitrary Boolean functions $f$ and $g$, this method of combining dual polynomials $\psi_f$ and $\psi_g$
to obtain a dual polynomial $\psi_f \ls \psi_g$ for $f \circ g$ was introduced in earlier
line of work by Shi and Zhu~\cite{shizhu}, Lee \cite{lee} and Sherstov \cite{sherstovhalfspaces1}. Specifically, writing $x=(x_1, \dots, x_M) \in \left(\bits^N\right)^M$,
$$ (\psi_f \ls \psi_g)(x) := 2^M \cdot \psi_f(\dots, \sgn(\psi_g(x_i)), \dots) \cdot \prod_{i = 1}^{M} |\psi_g(x_i)|.$$
This technique of combining dual witnesses, which we call the ``dual block'' method, will also be central to this paper. The lower bound of~\cite{sherstovandor, bt13} was obtained by refining the analysis of $\psi_f \ls \psi_g$ from~\cite{sherstovhalfspaces1} in the case where $f = \AND_M$ and $g = \OR_N$. 


As argued in subsequent work of Thaler \cite[Section 1.2.4]{thaler}, the combining method
$\psi_f \ls \psi_g$ is specifically tailored to showing optimality of the polynomial
approximation  $p^*$ for $f \circ g$ obtained via robustification.
This assertion can be made precise via complementary slackness: the dual solution 
 $\psi_f \ls \psi_g$ can be shown to obey complementary slackness in an approximate (yet precise) sense with respect
to the solution to the primal linear program corresponding to $p^*$.\label{sec:compslack}

\subsubsection{Additional Prior Work} \label{sec:additionalpriorwork}
The method of dual polynomials has recently been used 
to establish a number of new lower bounds for approximate degree
\cite{osnewbounds, thaler, sherstovhalfspaces1, sherstovhalfspaces2, 
bt14, lijie1, bchtv}.
All of these results focus on block composed functions, and can
be viewed as \emph{hardness amplification} results. Specifically,
they show that the block composition $f \circ g$ is strictly harder
to approximate by low-degree polynomials (requiring either higher degree or higher error)
than either $f$ or $g$ individually. These results have enabled progress on a number of open questions regarding the complexity of AC$^0$,
as well as oracle separations involving the polynomial hierarchy and various notions of statistical zero-knowledge proofs. 

Recently,
a handful of works  have proved stronger hardness amplification results for approximate degree
by moving beyond block composed functions \cite{btdl, podolskii}. These papers use very different
techniques than the ones we introduce in this work, as they
are focused on a different form of hardness amplification for polynomial approximation (specifically, they amplify
approximation error instead of degree). 

\subsection{Our Results and Techniques}
A major technical hurdle to progress on Problem~\ref{problem:bounded} is the need to go
beyond the block composed functions that were the focus 
of prior work. Specifically, Theorem \ref{thm:robust} 
implies that the approximate degree of $f_M \circ g_N$ (viewed
as a function of the number of inputs $M \cdot N$) is \emph{never}
higher than the approximate degree of $f_M$ or $g_N$ individually (viewed
as a function of $M$ and $N$ respectively). For example, if $f_M$ and $g_N$
both have approximate degree equal to the square root of the number of inputs (i.e., 
$\adeg(f_M) = O(\sqrt{M})$ and $\adeg(g_N) = O(\sqrt{N})$),
then the block composition $f_M \circ g_N$ has the same property (i.e., $\adeg(f_M \circ g_N) = O(\sqrt{M \cdot N})$). Our results introduce an analysis of non-block-composed functions that overcomes this hurdle.

\medskip
Quantitatively, our main lower bounds for constant-depth circuits and DNFs are as follows. 
To obtain the tightest possible results for a given circuit depth, our analysis pays close attention to whether a circuit $\mathcal{C}$ is monotone ($\mathcal{C}$ is said to be monotone if it contains no $\mathsf{NOT}$ gates).

\begin{theorem} \label{thm:quant1} \label{thm:acz}
Let $k \ge 1$ be any constant integer. Then there is an (explicitly given) monotone circuit on $n \cdot \log^{4k-4}(n)$ variables of depth $2k$, with 
$\AND$ gates at the bottom, which computes a function with approximate degree $\Omega(n^{1-2^{k-1}/3^k} \cdot \log^{3 - 2^{k+2}/3^{k}}(n))$. 
\end{theorem}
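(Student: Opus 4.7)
The plan is to prove Theorem~\ref{thm:quant1} by induction on $k$, using the generic degree-amplification transformation announced in the abstract, which I will abbreviate as $\mathcal{T}$. Recall that $\mathcal{T}$ takes any Boolean function $f$ on $m$ variables with approximate degree $d$ and produces a function $\mathcal{T}(f)$ on $O(m \cdot \polylog(m))$ variables with approximate degree $\Omega(m^{1/3} \cdot d^{2/3})$. For the proof to go through, I need the precise form of $\mathcal{T}$ to (i) blow up the number of variables by exactly a $\log^{4}$ factor, (ii) preserve constant-depth monotone computability with AND gates at the bottom layer while adding exactly two layers of depth and a polynomial blow-up in size, and (iii) carry an extra multiplicative $\log(m)$ factor in the degree lower bound, so that the log-exponent recursion comes out cleanly.

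For the base case $k=1$, I would invoke a (slightly strengthened) Aaronson--Shi type bound giving a depth-$2$ monotone function with bottom AND gates (i.e., a DNF) whose approximate degree is $\Omega(n^{2/3} \log^{1/3}(n))$. This matches the exponents $a_1 := 1 - 2^{0}/3^{1} = 2/3$ and $b_1 := 3 - 2^{3}/3^{1} = 1/3$ appearing in the theorem statement.

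For the inductive step, write $N_k := n \cdot \log^{4k-4}(n)$, $a_k := 1 - 2^{k-1}/3^k$, $b_k := 3 - 2^{k+2}/3^k$, and $D_k := n^{a_k} \log^{b_k}(n)$. Suppose we have a monotone depth-$2k$ circuit with bottom AND gates on $N_k$ variables, computing a function with approximate degree $\Omega(D_k)$. Applying $\mathcal{T}$ yields a monotone depth-$2(k+1)$ circuit with bottom AND gates on $O(N_k \cdot \log^{4}(N_k)) = O(N_{k+1})$ variables, with approximate degree at least $\Omega\bigl(n^{1/3} \cdot D_k^{2/3} \cdot \log(n)\bigr)$ after absorbing the $N_k^{1/3}$ contribution properly and accounting for the extra $\log$ factor delivered by $\mathcal{T}$. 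A direct algebraic check then verifies that the exponents satisfy the recursions $a_{k+1} = \tfrac{1}{3} + \tfrac{2}{3}\,a_k$ and $b_{k+1} = \tfrac{2}{3}\,b_k + 1$, both of which are satisfied by the claimed closed forms $a_k = 1 - 2^{k-1}/3^k$ and $b_k = 3 - 2^{k+2}/3^k$; the size and depth accounting is immediate.

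The principal obstacle is of course the construction and analysis of $\mathcal{T}$ itself, which presumably occupies the bulk of the paper. In particular, $\mathcal{T}$ cannot be realized by block composition: by Theorem~\ref{thm:robust} (robustification), the block composition $f \circ g$ with a gadget $g$ of small arity $\polylog(m)$ and approximate degree $\Theta(\sqrt{\polylog m})$ would yield only $\tilde{O}(d)$ approximate degree, far less than the desired $m^{1/3} d^{2/3}$. Instead, the dual witness for $\mathcal{T}(f)$ must be built via a new combining operation going beyond the dual block method $\psi_f \ls \psi_g$, presumably by fusing a dual witness for $f$ with a dual witness for an element-distinctness-style gadget in a way that exploits the global structure of the composed function. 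Verifying the required $\ell_1$-norm, pure high degree, and correlation bounds for this object -- and checking that the resulting $F$ is computed by a monotone constant-depth circuit with bottom AND gates -- is where I would expect essentially all of the technical difficulty to lie.
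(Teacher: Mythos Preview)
Your proposal is correct and matches the paper's proof essentially line for line: the paper proves Theorem~\ref{thm:acz} by induction on $k$, invoking the Aaronson--Shi bound (in the form of Lemma~\ref{thm:ed}, a monotone DNF realization of $\overline{\ED}$ with approximate degree $\Omega(n^{2/3}\log^{1/3} n)$) for the base case $k=1$, and applying the hardness-amplification transformation of Theorem~\ref{thm:main} (your $\mathcal{T}$, with exactly the three properties you list) for the inductive step. Your recursions $a_{k+1}=\tfrac13+\tfrac23 a_k$ and $b_{k+1}=\tfrac23 b_k+1$ are the ones that govern the stated closed forms; the only minor correction to your final paragraph is that the dual witness for $\mathcal{T}(f)$ is in fact built \emph{via} the dual block method $\varphi\ls\Psi\ls\psi$ and then post-processed (using a Razborov--Sherstov correction to zero out mass on high-Hamming-weight inputs), rather than by abandoning dual block composition altogether.
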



For example, Theorem \ref{thm:quant1} implies a Boolean circuit of depth $6$ on $n$ variables with approximate degree $\tilde{\Omega}(n^{23/27}) = \tilde{\Omega}(n^{0.851...})$.
\medskip

 \begin{theorem} \label{thm:quant2} \label{thm:dnfs} \label{thm:dnf}
Let $k \ge 1$ be any constant integer. Then there is an (explicitly given) monotone DNF on $n \cdot \log^{4k-4}(n)$ variables of width $O\left(\log^{2k-1}(n)\right)$ (and size $2^{O(\log^{2k}(n))}$) which computes a function with approximate degree $\Omega(n^{1-2^{k-1}/3^k} \cdot \log^{3 - 2^{k+2}/3^k}(n))$.
\end{theorem}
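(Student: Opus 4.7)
My plan is to derive Theorem~\ref{thm:quant2} by iteratively applying the hardness-amplification transformation whose existence is announced in the abstract: given a Boolean function $f$ on $N$ variables with $\adeg(f) = d$, one obtains a function $F$ on $N \cdot \polylog(N)$ variables with $\adeg(F) = \Omega(N^{1/3} \cdot d^{2/3})$, preserving polynomial-size constant-depth circuit computability. To obtain the DNF version, one additionally needs to verify that applying the transformation to a monotone DNF yields a monotone DNF with controlled growth in width and size.

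For the base case $k = 1$, I would take $f_1$ to be a surjectivity-style function $\SURJ$, naturally expressible as a monotone DNF of the form $\bigvee_j \bigwedge_i [x_i \neq j]$ after a standard $\log n$-bit encoding of colors. This DNF has width $O(\log n)$ and size $2^{O(\log^2 n)}$ on $n$ variables, and its approximate degree is $\Omega(n^{2/3} \log^{1/3} n)$ (the $\log^{1/3}$ factor being either a refined form of the Aaronson--Shi lower bound or established as part of this paper). These parameters match the statement of Theorem~\ref{thm:quant2} at $k = 1$.

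For $k \ge 2$, I would apply the transformation $k - 1$ times and track three recurrences. The variable count grows multiplicatively by $\polylog$ per step, yielding $N_k = n \cdot \log^{4k-4} n$ after calibrating the polylog exponent in the transformation. The degree exponent satisfies $\alpha_{j+1} = 1/3 + (2/3)\alpha_j$ with $\alpha_1 = 2/3$, which solves to $\alpha_k = 1 - 2^{k-1}/3^k$; a parallel linear recurrence for the polylog exponent with base value $\beta_1 = 1/3$ solves to $\beta_k = 3 - 2^{k+2}/3^k$. The DNF width grows additively by the width of the gadget used at each transformation step (expected to be $O(\log^2 n)$), giving $w_k = O(\log^{2k-1} n)$, and the DNF size grows as $s_{j+1} = s_j^{O(\log^2 n)}$, giving $s_k = 2^{O(\log^{2k} n)}$.

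The main obstacle is verifying that the transformation preserves monotone DNF structure with the precise width and size bounds asserted. The abstract statement only promises preservation of \emph{general} constant-depth polynomial-size circuits, which is insufficient here. The proof must therefore unpack the construction underlying the transformation---most plausibly, composing $f$ with a $\SURJ$-like ``selector'' gadget of $\polylog(n)$ bits and then folding the gadget into the bottom clauses of $f$ via distributivity---and show that when $f$ is a monotone DNF, the result remains a monotone DNF whose width and size grow only as claimed. These structural checks, rather than the degree calculation or the recurrences themselves, are where the bulk of the work lies.
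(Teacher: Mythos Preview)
Your overall inductive strategy matches the paper's: establish a hardness-amplification theorem (the paper's Theorem~\ref{thm:main}) that takes a monotone DNF $f$ of width $w$ on $n$ variables to a monotone DNF $g$ of width $O(w \log^2 n)$ on $O(n \log^4 n)$ variables with $\adeg(g) = \Omega(n^{1/3} \adeg(f)^{2/3} \log n)$, then apply it $k-1$ times. The DNF-preservation property you correctly flag as the main obstacle is exactly Property~\eqref{propdnfhard} of Theorem~\ref{thm:main}, proved via Theorem~\ref{thm:main-reduction}.

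There is, however, a genuine gap in your base case. The function you write down, $\bigvee_j \bigwedge_i [s_i \neq j]$, is \emph{not} a DNF of width $O(\log n)$: the inner conjunction ranges over all $N$ list positions $i$, so each term already has width at least $N$ even before expanding the predicates $[s_i \neq j]$ (which are themselves disjunctions over the $\log R$ bits of $s_i$, not single literals). Neither $\SURJ$ nor its negation admits a polylogarithmic-width DNF representation. The paper instead starts from the negation of Element Distinctness, $\overline{\ED} = \bigvee_{i < j} [s_i = s_j]$ (Lemma~\ref{thm:ed}, due to Sherstov refining Aaronson--Shi). Here each term $[s_i = s_j]$ is a conjunction of only $O(\log n)$ bit-equalities, so $\overline{\ED}$ is a polynomial-size width-$O(\log n)$ DNF (made monotone via the variable-doubling trick of Definition~\ref{def:h}), and it has approximate degree $\Omega(n^{2/3} \log^{1/3} n)$ as required for the $k=1$ parameters.

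A minor slip: you write that the width grows \emph{additively} by $O(\log^2 n)$ per step, but your stated conclusion $w_k = O(\log^{2k-1} n)$ requires \emph{multiplicative} growth $w_{j+1} = O(w_j \log^2 n)$, which is indeed what Property~\eqref{propdnfhard} provides.
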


Theorems \ref{thm:quant1} and \ref{thm:quant2} are in fact corollaries of a more general
hardness amplification theorem. This result shows how to take any Boolean function $f$ and transform it into a related function $g$
on roughly the same number of variables that has significantly higher approximate
degree (unless the approximate degree of $f$ is already $\tilde{\Omega}(n)$). Moreover, if $f$ is computed by a low-depth circuit, then $g$ is as well. 

\begin{theorem}\label{thm:main}
Let $f \colon \bits^n \to \bits$ with $\adeg(f) = d$. Then $f$ can be transformed into a related function $g \colon \{-1, 1\}^{m} \rightarrow \{-1, 1\}$ with $m=O(n\log^4 n)$ and $\adeg(g) = \Omega(n^{1/3} \cdot d^{2/3} \cdot \log n).$ Moreover, $g$ satisfies the following additional properties.
\begin{align}
\bullet \quad &\text{If $f$ is computed by a circuit of depth $k$, then $g$ is computed by a circuit of depth $k+3$.} \label{propeasy} \\
\bullet \quad &\text{If $f$ is computed by a monotone circuit of depth $k$ with 
$\AND$ gates at the bottom,} \nonumber \\ & \text {then $g$ is computed by a monotone circuit of depth $k+2$ with $\AND$ gates at the bottom.\label{propaczhard}} \\
\bullet \quad &\text{If $f$ is computed by monotone DNF of width $w$, then $g$ is computed by monotone DNF of} \nonumber \\ &\text{width $O(w \cdot \log^2 n)$.}\label{propdnfhard}
\end{align}

 \end{theorem}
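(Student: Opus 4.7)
The plan is to define $g$ as an ``$f$-ified'' surjectivity function and to build a dual polynomial for it by compressing a dual polynomial for the block composition $f \circ \OR_R$. Let $R$ be a parameter (eventually $R=\Theta(n\log^3 n)$, so that the input length is $m = R\log n = O(n\log^4 n)$). An input to $g$ encodes a function $F\colon[R]\to[n]$ using $\log n$ bits per entry. For each $i\in[n]$, let $v_i := \OR_R\bigl(\Ind[F(j)=i] : j\in[R]\bigr)$, which indicates whether $i$ lies in the range of $F$, and set $g(F) := f(v_1,\ldots,v_n)$. Each indicator $\Ind[F(j)=i]$ is an $\AND$ over the $\log n$ bits comparing $F(j)$ to the binary encoding of $i$, so $g$ has the form $f\circ \OR_R \circ \AND_{\log n}$, from which the circuit-preserving properties (\ref{propeasy})--(\ref{propdnfhard}) follow by routine gate counting: when $f$ already has $\AND$ gates at its bottom, that layer fuses with the inner $\AND_{\log n}$, saving a level in the monotone cases (\ref{propaczhard}) and (\ref{propdnfhard}). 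Taking $f=\AND_n$ as a sanity check recovers the standard $\SURJ$ function.

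For the lower bound I pass to the dual. Let $\psi_f$ be an optimal dual witness for $f$ (pure high degree $d$, unit $\ell_1$-norm, correlation greater than $1/3$ with $f$), and let $\psi_{\OR_R}$ be an optimal dual witness for $\OR_R$ of pure high degree $\Omega(\sqrt R)$. Using the Shi--Zhu/Lee/Sherstov block-composition $\ls$ recalled above, form $\Psi := \psi_f \ls \psi_{\OR_R}$ on $\{-1,1\}^{n\times R}$. By the same reasoning as in the lower bound half of Theorem~\ref{theorem:and-or}, $\Psi$ certifies $\adeg(f\circ\OR_R)=\Omega(d\sqrt R)$. The key step is to convert $\Psi$ into a dual witness $\Psi'$ for $g$ on the $R\log n$-bit compressed domain. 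Each $F\colon[R]\to[n]$ corresponds to an $(n\times R)$ matrix $x(F)$ with exactly one $1$ per column, and $g(F)=(f\circ\OR_R)(x(F))$. My plan is to symmetrize $\Psi$ over all row-permutations of $[n]$ (which preserves pure high degree exactly) and then redistribute its $\ell_1$-mass onto matrices of column weight at most $1$ via a rescaling together with a low-degree correction chosen to preserve orthogonality against all polynomials of the target degree.

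The main obstacle will be quantitatively controlling this redistribution. The block-composed $\Psi$ places most of its $\ell_1$-mass on matrices whose column weights concentrate near $R/n$ rather than at $1$, and stripping off that off-support mass while maintaining both orthogonality and correlation with $g$ costs a cube-root factor in the pure high degree. Concretely I expect $\adeg(g) = \Omega\bigl((d\sqrt R)^{2/3}\bigr) = \Omega(d^{2/3} R^{1/3})$, which reduces to Aaronson--Shi's $\Omega(n^{2/3})$ bound for $\SURJ$ when $f=\AND_n$ and $R=\Theta(n)$. Setting $R=\Theta(n\log^3 n)$ then simultaneously meets the input-length budget $R\log n=O(n\log^4 n)$ and yields the advertised $\Omega(n^{1/3}d^{2/3}\log n)$ lower bound. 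The technically delicate component is the construction of the low-degree correction: it relies on tight hypergeometric/Bernstein-type concentration of the column-weight statistic under the distribution $|\Psi|$, together with the complementary-slackness interpretation of the block composition $\ls$ discussed in Section~\ref{sec:compslack}, which is used to argue that $\Psi$ already looks essentially correct on the weight-$1$ slice of the domain and hence only a small, low-degree perturbation is required to finish the job.
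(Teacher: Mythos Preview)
Your construction omits the $\AND_{O(\log n)}$ gadget that the paper inserts between $f$ and $\OR$, and this is not a cosmetic difference: without it both your primal construction and your dual analysis break. On the primal side, take $f=\OR_n$. Then your $g(F)=\OR_n(v_1,\dots,v_n)$ asks whether \emph{some} $i\in[n]$ lies in the range of $F\colon[R]\to[n]$, which is always true; so $g$ is constant and $\adeg(g)=0$. The paper points out exactly this failure (Section~1.3.1) and repairs it by setting $F_R=f\circ\AND_{10\log n}$ and working with $F_R\circ\OR_N$.

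On the dual side, your claim that $\Psi=\psi_f\ls\psi_{\OR_R}$ certifies $\adeg(f\circ\OR_R)=\Omega(d\sqrt{R})$ ``by the same reasoning as the AND--OR tree'' does not go through for general $f$. The correlation bound for the dual block composition (Proposition~\ref{prop:sherstov-degree}) is $\langle\Psi,f\circ\OR_R\rangle\ge \eps-4n\delta$, where $1-\delta$ is the inner dual's correlation with $\OR_R$; with $\delta\approx 2/3$ this is hopelessly negative once $n>1$. The AND--OR argument works only because the one-sidedness of $\psi_{\OR}$ interacts specially with the outer $\AND$. The paper's fix is again the $\AND_{10\log n}$ layer: Proposition~\ref{prop:bt-amp} amplifies the inner dual's correlation with $\AND_{10\log n}\circ\OR_N$ to $1-n^{-\Omega(1)}$, after which composing with $\psi_f$ is safe. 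Your ``symmetrize over row-permutations of $[n]$'' step is also wrong for non-symmetric $f$: it preserves pure high degree but destroys correlation with $f\circ\OR_R$; the relevant symmetry (Lemma~\ref{lem:symm}) is over the \emph{columns}, coming from the symmetry of $\OR$.

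Finally, your ``stripping the off-support mass costs a cube root'' heuristic is not how the exponent arises. The paper does not start from the full $\Omega(\sqrt{R})$-degree dual for $\OR$ and pay a price; it begins with a dual for $\OR_N$ supported on Hamming weight $\le k$, which has pure high degree only $\Omega(\sqrt{k})$ but the crucial decay $\sum_{|x|=t}|\psi(x)|\le O(1/t^2)$ (Proposition~\ref{prop:or-dual}). After composing through $\AND_{10\log n}$ and $\psi_f$, this decay forces the mass on total weight $>N$ to be $\le (2NR)^{-2R/k}$ (Lemma~\ref{lem:combinatorial}), which is what makes the Razborov--Sherstov correction (Lemma~\ref{lem:rs}) affordable at degree $D=\Omega(d\sqrt{k})$. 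Optimizing $k=(n/d)^{2/3}$ is what yields $D=\Omega(n^{1/3}d^{2/3})$; the cube root is a parameter trade-off, not a penalty incurred by a post-hoc redistribution.
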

 

 
 
 \label{sec:theorems}
 
 \subsubsection{Hardness Amplification Construction}
 The goal of this subsection is to convey the main ideas underlying
 the transformation of $f$ into the harder-to-approximate function $g$ in the statement of Theorem \ref{thm:main}. 
We focus on illustrating these ideas when we start with the function $f = \AND_R$, where we assume for simplicity that $R$ is a power of $2$. Let $n=N \log R$ for a parameter $N$ to be determined later.\footnote{All logarithms in this paper are taken in base $2$.} Consider the function $$\textsf{SURJECTIVITY}\colon \bits^n \to \bits$$ (\textsf{SURJ}$_{N, R}$ for short) defined as follows. \textsf{SURJ}$_{N, R}$ interprets its input $s$ as a list of $N$ numbers $(s_1, \dots, s_N)$
from a range $[R]$. The function \textsf{SURJ}$_{N, R}(s)=-1$ if and only if every element of the range $[R]$ appears at least once in the list.\footnote{As is standard, we associate $-1$ with logical TRUE and $+1$ with logical FALSE throughout.}

When we apply Theorem \ref{thm:main} to $f=\AND_R$, the harder function $g$ we construct is precisely $\SURJ_{N, R}$ (for a suitable choice of $N \leq \tilde{O}(R)$).
Before describing our transformation for general $f$, we provide some intuition for why $\SURJ_{N, R}$ is harder to approximate than $\AND_R$.

\paragraph{Getting to Know \textsf{SURJECTIVITY}.}
It is known that  $\adeg(\textsf{SURJ}_{N, R}) = \tilde{\Omega}(n^{2/3})$ when $R=N/2$ \cite{aaronsonshi}. We do not improve this lower bound for $\textsf{SURJ}_{N, R}$, but we give a much more general and intuitive proof for it.
The best known upper bound on $\adeg(\textsf{SURJ}_{N, R})$ is the trivial $O(n)$ that holds 
for any function on $n$ variables. 

\label{sec:y}
Although this upper bound is trivial, the following is an instructive way to achieve it. For $(i, j) \in [R] \times [N]$,
let\footnote{We clarify that earlier work \cite{ambainis} using similar notation reverses
the roles of $i$ and $j$ in the definition of $y_{ij}(s)$. We depart from the convention
of earlier work because it simplifies the expression of the harder function $g$ exhibited in
Theorem \ref{thm:main}.} \[y_{ij}(s) =  \begin{cases} -1 \text{ if } s_j=i \\
1 \text{ otherwise.} \end{cases} \] 
Observe that $y_{ij}(s)$ is exactly computed by a polynomial in $s$ of degree at most $\log R$,
as $y_{ij}(s)$ depends on only $\log R$ bits of $s$. For brevity, we will typically denote
$y_{ij}(s)$ by $y_{ij}$, but the reader should always bear in mind that $y_{ij}$ is a function of $s$.

\begin{figure}
\begin{center} 
\includegraphics[width=2.5in]{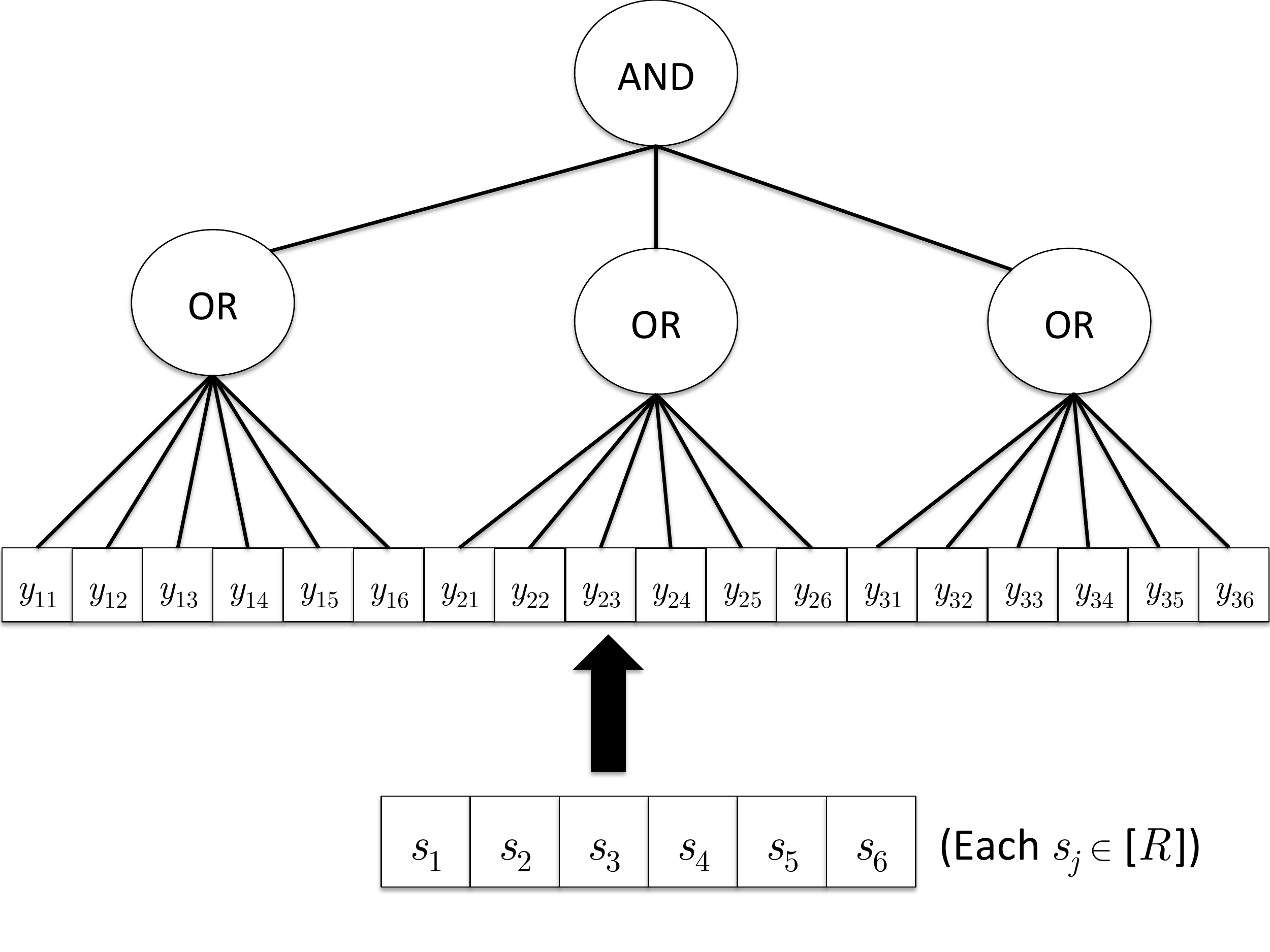} 
\mbox{                          }
\includegraphics[width=2.5in]{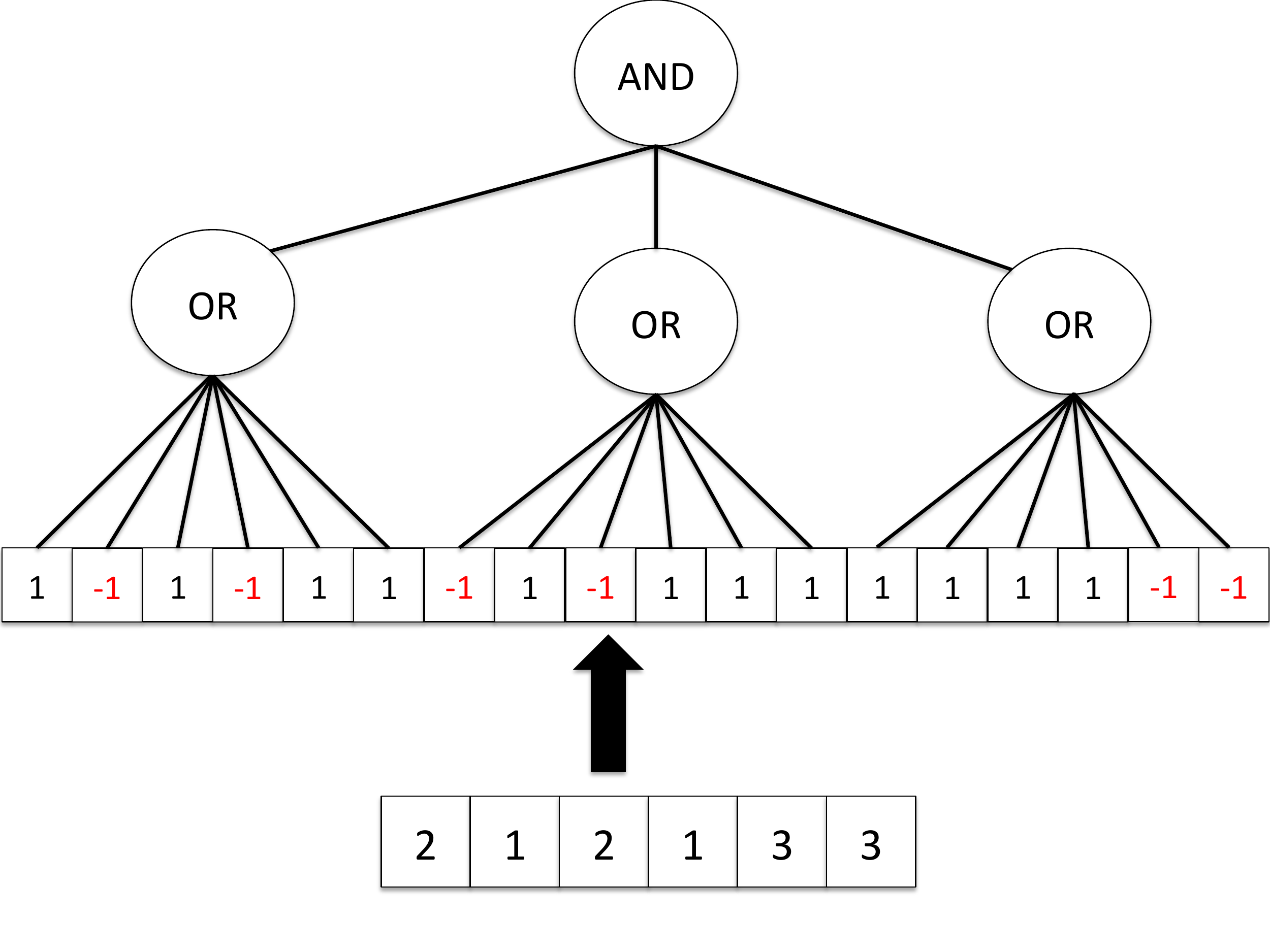}
\end{center} 
\caption{Depiction of Equation \eqref{eq:surj} when $N=6$ and $R=3$.}
\label{fig:surj}
\end{figure}

Clearly, it holds that: \begin{equation} \label{eq:surj} \textsf{SURJ}_{N, R}(s) \!=\! \AND_R(\OR_N(y_{1,1}, \dots, y_{1,N}), \dots, \OR_N(y_{R,1}, \dots, y_{R,N})).\end{equation}
Equality \eqref{eq:surj} is depicted in Figure \ref{fig:surj} in the special case $N=6, R=3$. 
Let $p^*$ be the polynomial approximation of degree $O(\sqrt{R \cdot N})$ for
the block composed function $\AND_R \circ \OR_N$ obtained via robustification (cf. Section \ref{sec:robust}). 
Then $$p^*(y_{1, 1}, \dots, y_{1, N}, \dots, y_{R, 1}, \dots, y_{R, N})$$
approximates $\textsf{SURJ}_{N, R}$, and has degree $O(\deg(p^*) \cdot \log R)$. If $N=O(R)$, then this degree
bound is $O(N \log R) = O(n)$. 

Our analysis in the proof of Theorem \ref{thm:main} is tailored to showing a sense in which this robustification-based approximation method is nearly optimal.
Unsurprisingly, our analysis makes heavy use of the dual block method of combining dual witnesses~\cite{shizhu,lee,sherstovhalfspaces1}, as this method is tailored to showing optimality of robustification-based approximations (cf. Section \ref{sec:compslack}). However, there are several technical challenges to overcome,
owing to the fact that Equation \eqref{eq:surj} does not express $\SURJ$ as a genuine block composition
(since a single bit of the input $s \in \bits^{N \cdot \log R}$ affects $R$ of the variables $y_{ij}$).

\medskip 
\noindent \textbf{The Transformation for General Functions.}
Recall from the preceding discussion that when applying
our hardness-amplifying transformation to the function $f = \AND_R$,
the harder function (on $n = N \cdot \log R$ bits, for some $N=\tilde{O}(R)$) takes the form $\SURJ_{N, R} =  \AND_R(\OR_N(y_{1,1}, \dots, y_{1,N}), \dots, \OR_N(y_{R,1} \dots, y_{R,N}))$.
This suggests that for general functions $f \colon \bits^R \to \bits$, one should consider the
transformed function 
$$F(s) := f(\OR_N(y_{1,1}, \dots, y_{1,N}), \dots, \OR_N(y_{R,1}, \dots, y_{R,N})).$$
Unfortunately, this simple candidate fails spectacularly.
Consider the particular case where $f=\OR_R$. 
It is easy to see that $$\OR_R(\OR_N(y_{1,1}, \dots, y_{1,N}), \dots, \OR_N(y_{R,1}, \dots, y_{R,N}))$$
evaluates to $-1$ on \emph{all} inputs $s \in \bits^{N \cdot \log R}$. Hence, it has (exact) degree equal to 0.

Fortunately, we are able to show that a modification of the above candidate does work for general functions $f_R$.
Let $R' = R \log R$.
Still simplifying, but only slightly,  
the harder function that we exhibit
is $g \colon \bits^{N \cdot \log(R')} \to \bits$ defined via:
$$g(s) = (f \circ \AND_{\log R})(\OR_N(y_{1,1},\! \dots,\! y_{1,N}), \!\dots, \!\OR_N(y_{R',1},\! \dots\!, y_{R',N})).$$
 \subsubsection{Hardness Amplification Analysis}
 For expository purposes, we again describe the main ideas of our analysis in the case where $f=\AND_R$.
Recall that in this case, the harder function $g$ exhibited in Theorem \ref{thm:main}
is \textsf{SURJ}$_{N, R}$ on $n=N \cdot \log R$ bits. 
Moreover,
 in order to approximate  $\textsf{SURJ}_{N, R}$, it is \textit{sufficient}
to approximate the \emph{block composed function} $\AND_R \circ \OR_N$. 
This can be done by a polynomial of degree $O(\sqrt{R \cdot N})$ using robustification. 

The goal of our analysis is to show that there is a sense in which this approximation method for \textsf{SURJ}$_{N, R}$ is essentially optimal.
Quantitatively, our analysis yields an $\Omega(R^{2/3})$ lower bound on the approximate degree of \textsf{SURJ}$_{N, R}$.

At a high level, our analysis proceeds in two stages.
In the first stage (Section~\ref{sec:step1}), we give a reduction showing that to approximate $\textsf{SURJ}_{N, R}(x)$,
it is \textit{necessary} to approximate 
$\AND_R \circ \OR_N$, 
under the promise that
the input has Hamming weight \textit{at most} $N$.
This reduction is somewhat subtle, but conceptually crucial to our results. Nevertheless, at the technical level, it is
a straightforward application of a symmetrization
argument due to Ambainis \cite{ambainis}.

In the second stage (Section~\ref{sec:step2}), we prove that approximating $\AND_R \circ \OR_N$ 
under the above promise requires degree $\Omega(R^{2/3})$. Executing this second stage is the more technically involved part of our proof, and we devote the remainder of this informal overview to it.
Specifically, for some $N=\tilde{O}(R)$, we must construct a dual polynomial $\psi_{\text{AND-OR}}$ witnessing
the fact that 
$\adeg(\AND_R \circ \OR_N) = \Omega(R^{2/3})$, such that  
$\psi_{\text{AND-OR}}$ is supported exclusively on inputs of Hamming weight at most $N$. 

As a first attempt, one could consider the dual polynomial 
$\psi_{\text{AND}} \ls \psi_{\text{OR}}$  (cf. Section \ref{sec:compslack}) used in our prior work \cite{bt13}
to lower bound the approximate degree of the AND-OR tree.
Unfortunately, this dual polynomial has inputs of Hamming weight as large as $\Omega(R \cdot N)$
in its support. 

Our strategy for handling this issue is to modify $\psi_{\text{AND}} \ls \psi_{\text{OR}}$ by post-processing it to zero out all of the mass it places on inputs of Hamming weight more than $N$. This must be done 
without significantly affecting its pure high degree, its $\ell_1$-norm, or its correlation with $\AND_R \circ \OR_N$.
In more detail, let $|y|$ denote the Hamming weight of an input $y \in \bits^{R \cdot N}$, and
suppose that we can show \begin{equation} \label{eq:introkey} \sum_{|y| >  N} |(\psi_{\text{AND}} \ls \psi_{\text{OR}})(y)| \ll R^{-D}.\end{equation}
Intuitively, if Inequality \eqref{eq:introkey} holds for a large value of $D$, then inputs 
of Hamming weight greater than $N$ are not very important to the dual witness $\psi_{\text{AND}} \ls \psi_{\text{OR}}$,
and hence it is plausible that the lower bound witnessed by $\psi_{\text{AND}} \ls \psi_{\text{OR}}$ holds even
if such inputs are ignored completely. 

To make the above intuition precise, we use a result of Razborov and Sherstov \cite{sherstovrazborov}
to establish that Inequality \eqref{eq:introkey} implies 
the existence of a (explicitly given) function $\psi_{\text{corr}} \colon \bits^{N \cdot R} \to \bits$ such that:
\begin{itemize} \item $\psi_{\text{corr}}(y)=\psi_{\text{AND}} \ls \psi_{\text{OR}}(y)$ for all $|y| > N$,
\item $\psi_{\text{corr}}$ has pure high degree $D$, and
\item $\sum_{|y| > N} |\psi_{\text{corr}}(y)| \ll R^{-D}$.
\end{itemize}

Let $\psi_{\text{AND-OR}} = C \cdot \left(\psi_{\text{AND}} \ls \psi_{\text{OR}} - \psi_{\text{corr}}\right)$, where $C \geq 1-o(1)$ is chosen so that 
 the resulting function has $\ell_1$-norm equal to $1$. Then $\psi_{\text{AND-OR}}$ has:
\begin{enumerate}
\item Pure high degree $\min\{D, \sqrt{R \cdot N}\}$,
\item The same correlation, up to a factor of $1-o(1)$, as $\psi_{\text{AND}} \ls \psi_{\text{OR}}$ has with $\AND_R \circ \OR_N$, and
\item Support restricted to inputs of Hamming weight at most $N$.
\end{enumerate}

Hence, Step 2 of the proof is complete if we can show that Inequality \eqref{eq:introkey} holds
for $D=\Omega(R^{2/3})$. 
Unfortunately, Inequality \eqref{eq:introkey} does \emph{not} hold unless we modify the dual witness $\psi_{\text{OR}}$ to satisfy
additional properties. First,
we modify $\psi_{\text{OR}}$ so that 
\begin{equation} \label{prop1} \psi_{\text{OR}} \text{ is supported
only on inputs of Hamming weight at most } R^{1/3}. \end{equation} 
Moreover, we further ensure that $\psi_{\text{OR}}$ 
is biased toward inputs of low Hamming weight in the sense
that \begin{equation} \label{prop2} \text{ For all } t \ge 0\text{, }  \sum_{|x| = t} |\psi_{\text{OR}}(x)| \lesssim 1/(t+1)^2. \end{equation} 
We can guarantee that both Conditions \eqref{prop1} and \eqref{prop2} hold
while still ensuring that $\psi_{\text{OR}}$ has pure high degree $\Omega(R^{1/6})$, as well as the same $\ell_1$-norm and correlation with $\OR_N$. (The fact that this modified dual polynomial $\psi_{\text{OR}}$ has pure high degree $\Omega(R^{1/6})$
rather than $\Omega(R^{1/2})$ is the reason we are only able to establish an $\Omega(R^{2/3})$
lower bound on the approximate degree of $\SURJ_{N, R}$, rather than $\Omega(R)$.)

We now explain why these modifications imply that 
Inequality \eqref{eq:introkey} holds
for $D=\Omega(R^{2/3})$.
Recall that $$(\psi_{\text{AND}} \ls \psi_{\text{OR}})(y_1, \dots, y_R) = 2^R \cdot \psi_{\text{AND}}(\dots, \sgn\left( \psi_{\text{OR}}(y_i)\right), \dots) \cdot \prod_{i = 1}^R |\psi_{\text{OR}}(y_i)|.$$
For intuition, let us focus on the final factor in this expression, $\prod_{i = 1}^R |\psi_{\text{OR}}(y_i)|$. Since $\psi_{\text{OR}}$ has $\ell_1$-norm equal to 1, the function $|\psi_{\text{OR}}|$ is a probability distribution, and $\prod_{i = 1}^R |\psi_{\text{OR}}(y_i)|$
is a product distribution over $\left(\bits^{N}\right)^R$. At a high level, our analysis shows that this product distribution is ``exponentially more biased''
toward inputs of low Hamming weight than is $\psi_{\text{OR}}$ itself. 

More specifically, Conditions \eqref{prop1} and \eqref{prop2} together imply that,
if $y=(y_1, \dots, y_R) \in \bits^{N \cdot R}$ is drawn from the product distribution  $\prod_{i = 1}^R |\psi_{\text{OR}}(y_i)|$,
then the probability that $y$ has Hamming weight more than $N = \tilde{O}(R)$ is dominated by the probability that 
roughly $R^{2/3}$ of the $y_i$'s each have Hamming weight close to $R^{1/3}$ (and the remaining $y_i$'s have low
Hamming weight). But then Condition \eqref{prop2} ensures that the probability that this occurs is at most $R^{-\Omega(R^{2/3})}$.

\subsection{Paper Organization}
Section \ref{sec:prelims} covers technical preliminaries.
Stage 1 of the proof of our main hardness amplification theorem, Theorem \ref{thm:main}, is completed in Section \ref{sec:step1}.
In Section \ref{sec:step2}, we execute Stage 2 of the proof of Theorem \ref{thm:main}, and use it to
establish Theorems \ref{thm:acz} and \ref{thm:dnf} from the introduction.
Finally, Section \ref{sec:apps} describes applications of our results to complexity theory and cryptography.

\section{Preliminaries}
\label{sec:prelims}
We begin by formally defining the notion of approximate degree
 of any partial function defined on a subset of $\R^n$.
 Throughout, for any subset $\genericdomain \subseteq \R^n$
 and polynomial $p \colon \genericdomain \to \R$, we
 use $\deg(p)$ to denote the total degree of $p$, and refer
 to this without qualification as the degree of $p$.
 
\begin{definition} \label{def:adeg}
Let $\genericdomain \subseteq \R^n$, and let $f : \genericdomain \to \{-1,1\}$.
The $\eps$-\emph{approximate degree} of $f$, denoted $\adeg_\epsilon(f)$, is the least degree of a real polynomial $p \colon \R^n \to \R$ such that $|p(x)-f(x)| \le \epsilon$ for all $x \in \genericdomain$. We refer to such a $p$ as an $\eps$-\emph{approximating polynomial} for $f$. We use $\adeg(f)$ to denote $\adeg_{1/3}(f)$.
\end{definition}

We highlight two slightly non-standard aspects of Definition \ref{def:adeg}. 
The first is that it considers subsets of $\R^n$ rather than $\bits^n$. This level of generality has been considered in
some prior works \cite{sherstovhalfspaces1, bt16, ambainis}, and we will require it in our proof of Theorem \ref{thm:main} (cf. Section \ref{sec:step1}). Second, our definition of an $\eps$-approximating polynomial $p$ for $f$ above does not place any restriction
on $p(x)$ for $x$ outside of the domain of $f$. This is in contrast to some other works (e.g. \cite{bchtv, comm2, bttoc, comm3}) 
that do require $p(x)$ to be bounded for some inputs $x$ outside of the domain of $f$. Our definition is the most natural and convenient 
for the purposes of our analyses.

Strong LP duality implies the following characterization of approximate degree (see, e.g., \cite{patmat}).

\begin{theorem} \label{thm:duality} \label{thm:dual}
Let $\genericdomain$ be a finite subset of $\R^n$, and let $f : \genericdomain \to \{-1,1\}$.
Then $\adeg_{\eps}(f) \geq d$ if and only if there exists a function 
$\psi \colon \genericdomain \to \R$ satisfying the following properties.
\begin{equation} \label{eq:corr} \sum_{x \in \genericdomain}\psi(x) \cdot f(x) > \eps, \end{equation} 
\begin{equation} \label{eq:unitnorm} \sum_{x \in \genericdomain} |\psi(x)| = 1, and \end{equation} 
\begin{equation} \label{eq:phd} \text{ For every polynomial } p \colon \genericdomain \to \R \text{ of degree less than } d, \sum_{x \in \genericdomain} p(x) \cdot \psi(x) = 0. 
\end{equation}
\end{theorem}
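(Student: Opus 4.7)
The plan is to derive Theorem~\ref{thm:duality} from strong LP duality applied to the natural linear program capturing $\adeg_\eps(f) < d$. First, I would phrase polynomial approximation as a min-max LP. Let $\mathcal{M}_d$ denote the set of monomials of degree less than $d$ restricted to $\genericdomain$, and let the decision variables be a real coefficient $c_m$ for each $m \in \mathcal{M}_d$ together with an error variable $\eta$. The primal is
\begin{equation*}
\min \, \eta \quad \text{subject to} \quad -\eta \,\le\, \sum_{m \in \mathcal{M}_d} c_m \, m(x) \, - \, f(x) \,\le\, \eta \quad \text{for every } x \in \genericdomain.
\end{equation*}
Since $\genericdomain$ is finite, this is a genuine finite-dimensional LP, and it is always feasible (take all $c_m = 0$ and $\eta = 1$) and bounded below by $0$, so strong duality applies without any regularity concerns. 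Its optimum $\eta^\ast$ satisfies $\adeg_\eps(f) \ge d$ if and only if $\eta^\ast > \eps$.

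Next, I would compute the dual. Introduce nonnegative multipliers $\alpha_x, \beta_x$ for the upper and lower constraints at each $x \in \genericdomain$, and set $\psi(x) := \alpha_x - \beta_x$ and $\mu(x) := \alpha_x + \beta_x \ge |\psi(x)|$. Stationarity in $\eta$ forces $\sum_x \mu(x) = 1$; stationarity in each $c_m$ forces $\sum_x \psi(x) \, m(x) = 0$ for every $m \in \mathcal{M}_d$, which is exactly the pure high degree condition \eqref{eq:phd}. The dual objective becomes $\max \sum_x \psi(x) f(x)$. Since replacing $(\alpha_x, \beta_x)$ by $(\max(\psi(x),0), \max(-\psi(x),0))$ can only decrease $\sum_x \mu(x)$ while preserving the objective and the orthogonality constraints, the dual optimum equals
\begin{equation*}
\eta^\ast \;=\; \max \Bigl\{\, \sum_{x \in \genericdomain} \psi(x) f(x) \;:\; \sum_x |\psi(x)| \le 1, \;\; \psi \perp \mathcal{M}_d \,\Bigr\}.
\end{equation*}

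For the final step, I would translate the inequality $\eta^\ast > \eps$ into the claimed existence statement. If $\adeg_\eps(f) \ge d$, pick a dual-optimal $\psi_0$ with $\sum_x \psi_0(x) f(x) = \eta^\ast > \eps$ and $\sum_x|\psi_0(x)| \le 1$; then $\psi_0 \ne 0$, so rescaling to $\psi := \psi_0 / \sum_x|\psi_0(x)|$ preserves \eqref{eq:phd}, yields the equality \eqref{eq:unitnorm}, and only increases the correlation with $f$, giving \eqref{eq:corr}. Conversely, any $\psi$ satisfying \eqref{eq:corr}--\eqref{eq:phd} is dual-feasible and witnesses $\eta^\ast > \eps$, hence $\adeg_\eps(f) \ge d$.

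There is no genuinely hard step here — the proof is a routine LP-duality calculation — but the one point that needs slight care is the passage between the norm \emph{inequality} $\sum_x|\psi(x)| \le 1$ produced naturally by duality and the norm \emph{equality} stated in \eqref{eq:unitnorm}. This is handled by the rescaling argument above, using the fact that $\eps > 0$ forces the optimal $\psi_0$ to be nonzero (and in fact to have $\ell_1$-norm exactly $1$, since otherwise it could be scaled up to improve the objective). The only other minor subtlety is ensuring that the constraints $\psi \perp \mathcal{M}_d$ cover all polynomials $p$ of degree less than $d$, which follows by linearity.
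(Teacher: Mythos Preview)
Your proposal is correct and is exactly the standard LP-duality derivation. The paper itself does not prove this statement at all: it simply asserts that ``Strong LP duality implies the following characterization'' and cites \cite{patmat}, so you have in fact supplied more detail than the paper does. One tiny quibble: in your final paragraph you justify $\psi_0 \ne 0$ by ``$\eps > 0$'', but what you actually use (and what suffices) is $\eta^\ast > \eps \ge 0$; the hypothesis $\eps > 0$ is not needed for that step.
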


For  functions $\psi_1 \colon \genericdomain \to \R$ and $\psi_2 \colon \genericdomain' \to \R$ 
defined on finite domains $\genericdomain, \genericdomain'$ with $\genericdomain \subseteq \genericdomain'$, 
we define $$\langle \psi_1, \psi_2 \rangle := \sum_{x \in \genericdomain}\psi_1(x) \cdot \psi_2(x),$$ and we refer to this as the correlation of $\psi_1$ with $\psi_2$. (We define $\langle \psi_1, \psi_2 \rangle$ similarly if instead $\genericdomain' \subseteq \genericdomain$.) An equivalent way to define
$\langle \psi_1, \psi_2 \rangle$ is to first extend the domain of $\psi_1$ to $\genericdomain'$ by setting $\psi_1(x) = 0$ for all $x \in \genericdomain'\setminus \genericdomain$,
and then define $$\langle \psi_1, \psi_2 \rangle := \sum_{x \in \genericdomain'}\psi_1(x) \cdot \psi_2(x).$$

We refer to the right hand side of Equation \eqref{eq:unitnorm} as the $\ell_1$-norm of $\psi$, and denote this quantity by $\|\psi\|_1$. If $\psi$ satisfies Equation \eqref{eq:phd}, it is said to have \emph{pure high degree} at least $d$.

\medskip
\noindent \textbf{Additional Notation.} 
For an input $x \in \bits^n$, we use $|x|$ to denote the Hamming weight of $x$, i.e., $|x| := \sum_{i=1}^n (1-x_i)/2$. 
Let $\bits^{N}_{\le k} := \{x \in \bits^N : |x| \le k\}$.
We denote the set $\{1, \dots, N\}$ by $[N]$ and the set $\{0, \dots, N\}$ by $[N]_0$.
Given $t \in \R$, we define $\sgn(t)$ to equal $1$ if $t > 0$ and to equal $-1$ otherwise. The function $\mathbf{1}_N \colon \bits^N \to \bits$
denotes the constant function that always evaluates to 1. We denote by $1^N$ the 
$N$-dimensional vector with all entries equal to $1$.

\medskip
\noindent \textbf{Minsky-Papert Symmetrization.}
The following well-known lemma is due to Minsky and Papert \cite{mp}. 
\begin{lemma} \label{lem:mp}
Let $p \colon \bits^n \to \bits$ be an arbitrary polynomial. Then
there is a univariate polynomial $q \colon \R \to \R$ of degree at most $\deg(p)$
such that 
\[q(t) = \frac{1}{{n \choose t}} \sum_{x \in \bits^n \colon |x|=t} p(x)\]
for all $t \in [n]_0$.
\end{lemma}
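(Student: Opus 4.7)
The plan is to use the standard symmetrization computation: first reduce to the multilinear case, then compute the symmetrized sum monomial-by-monomial and verify that each monomial contributes a polynomial in $t$ of degree at most its own degree.

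First I would observe that since $p$ is evaluated only on $\bits^n = \{-1,1\}^n$, and $x_i^2 = 1$ for $x \in \bits$, we may assume without loss of generality that $p$ is multilinear; this does not increase $\deg(p)$. Write $p(x) = \sum_{S \subseteq [n]} c_S \chi_S(x)$ where $\chi_S(x) = \prod_{i \in S} x_i$ and the sum is over sets $S$ with $|S| \leq \deg(p)$. By linearity, it suffices to show that for each fixed $S$, the quantity
\[q_S(t) := \frac{1}{\binom{n}{t}} \sum_{x \in \bits^n \colon |x|=t} \chi_S(x)\]
agrees on $t \in [n]_0$ with a univariate polynomial of degree at most $|S|$. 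Summing $c_S q_S(t)$ over $S$ then yields the desired $q$ of degree at most $\deg(p)$.

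Next I would compute $q_S(t)$ explicitly. Under the convention $|x| = \sum_i (1-x_i)/2$, specifying an $x$ with $|x| = t$ is the same as choosing the $t$-subset $T \subseteq [n]$ of coordinates where $x_i = -1$. Then $\chi_S(x) = (-1)^{|S \cap T|}$. Grouping over $j = |S \cap T|$ (with $k := |S|$), we get
\[\sum_{x : |x|=t} \chi_S(x) = \sum_{j=0}^{k} (-1)^j \binom{k}{j} \binom{n-k}{t-j}.\]
Dividing by $\binom{n}{t}$ and using the identity
\[\frac{\binom{n-k}{t-j}}{\binom{n}{t}} = \frac{t(t-1)\cdots(t-j+1)\,(n-t)(n-t-1)\cdots(n-t-k+j+1)}{n(n-1)\cdots(n-k+1)},\]
we see that each term in the sum over $j$ is a polynomial in $t$ of degree exactly $j + (k-j) = k$. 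Hence $q_S(t)$, as a function of $t$, agrees on integer points in $[n]_0$ with a polynomial of degree at most $k = |S|$, which is what we wanted.

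The only technical step is the identity above that rewrites $\binom{n-k}{t-j}/\binom{n}{t}$ as a rational polynomial in $t$; but this is a direct factorial manipulation, so there is no real obstacle. Assembling the pieces, $q(t) := \sum_S c_S q_S(t)$ is a polynomial in $t$ of degree at most $\max_S |S| \leq \deg(p)$ that matches the Minsky-Papert symmetrization of $p$ on all $t \in [n]_0$, completing the proof.
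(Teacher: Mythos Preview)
Your proof is correct: the reduction to multilinear monomials $\chi_S$, the combinatorial expansion of $\sum_{|x|=t}\chi_S(x)$ via $j=|S\cap T|$, and the falling-factorial identity for $\binom{n-k}{t-j}/\binom{n}{t}$ all go through exactly as you describe, yielding a polynomial in $t$ of degree at most $|S|$ for each monomial.

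There is nothing to compare against in the paper itself, since the paper does not prove this lemma; it simply states it as a well-known result due to Minsky and Papert and cites \cite{mp}. Your argument is the standard symmetrization proof.
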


\subsection{The Dual Block Method}
This section collects definitions and preliminary results on the dual block method \cite{shizhu, lee, sherstovhalfspaces1}
for constructing dual witnesses for a block composed function $F \circ f$ by combining dual witnesses
for $F$ and $f$ respectively.
\label{sec:ls}
\begin{definition}
Let $\Psi : \bits^M \to \R$ and $\psi : \bits^m \to \R$ be functions that are not identically zero. Let $x=(x_1, \dots, x_M) \in \left(\bits^{m}\right)^M$. Define the \emph{dual block composition} of $\Psi$ and $\psi$, denoted $\Psi \ls \psi : (\bits^m)^M \to \R$, by
\[(\Psi \ls \psi)(x_1, \dots, x_M) = 2^M \cdot \Psi(\dots, \sgn\left(\psi(x_i)\right), \dots) \cdot \prod_{i = 1}^M |\psi(x_i)|.\]
\end{definition}

\begin{proposition} \label{prop:ls}
The dual block composition satisfies the following properties:

\paragraph{\emph{Preservation of $\ell_1$-norm}:} If $\|\Psi\|_1 = 1$ and $\|\psi\|_1 = 1$, then
\begin{equation}\|\Psi \ls \psi\|_1 = 1. \label{eqn:ls-norm}\end{equation}

\paragraph{\emph{Multiplicativity of pure high degree}:} If $\langle \Psi, P \rangle = 0$ for every polynomial $P \colon \bits^M \to \bits$ of degree less than  $D$, and $\langle \psi, p \rangle = 0$ for every polynomial $p \colon \bits^m \to \bits$ of degree less than $d$, then
for every polynomial $q \colon \bits^{m \cdot M} \to \bits$,
\begin{equation}\deg q < D \cdot d \implies \langle \Psi \ls \psi, q \rangle = 0. \label{eqn:ls-phd} \end{equation}

\paragraph{\emph{Associativity}:} For every $\zeta: \bits^{m_\zeta} \to \R$, $\varphi: \bits^{m_\varphi} \to \R$, and $\psi: \bits^{m_\psi} \to \R$, we have
\begin{equation} (\zeta \ls \varphi) \ls \psi = \zeta \ls (\varphi \ls \psi). \label{eqn:ls-assoc} \end{equation}

\end{proposition}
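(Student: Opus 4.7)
All three parts proceed by direct manipulation from the definition of $\ls$, with the main effort going into part (ii).

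For the preservation of $\ell_1$-norm, my plan is to expand $|\psi(x_i)| = \psi^+(x_i) + \psi^-(x_i)$, where $\psi^\pm$ denote the positive and nonpositive parts of $\psi$, and sum independently over the $M$ blocks. Writing $p_\pm = \|\psi^\pm\|_1$, this gives
\[
\|\Psi \ls \psi\|_1 \;=\; 2^M \sum_{z \in \bits^M} |\Psi(z)| \, p_+^{|S(z)|} \, p_-^{M - |S(z)|},
\]
where $S(z) := \{i : z_i = +1\}$. The intended application assumes $\psi$ is balanced, i.e., $\sum_x \psi(x) = 0$ (automatic whenever $\psi$ has pure high degree at least $1$), so $p_+ = p_- = 1/2$ and each factor $2^M p_+^{|S|} p_-^{M-|S|}$ equals $1$; the sum collapses to $\|\Psi\|_1 = 1$.

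For the multiplicativity of pure high degree, by linearity it suffices to check $\langle \Psi \ls \psi, q \rangle = 0$ for each multilinear monomial $q$ of degree less than $Dd$. Since the $mM$ variables partition into $M$ blocks of size $m$, any such monomial factors as $q(x_1, \ldots, x_M) = \prod_{i=1}^M q_i(x_i)$ with $\sum_{i=1}^M \deg(q_i) = \deg(q) < Dd$. Grouping the sum over $x$ by the sign pattern $z_i = \sgn \psi(x_i) \in \bits$ and distributing over the product yields
\[
\langle \Psi \ls \psi, q \rangle \;=\; 2^M \sum_{z \in \bits^M} \Psi(z) \prod_{i=1}^M h_i(z_i), \quad h_i(z_i) := \!\!\sum_{x_i \, : \, \sgn \psi(x_i) = z_i}\!\! |\psi(x_i)| \, q_i(x_i).
\]
Each $h_i : \bits \to \R$ has a unique expansion $h_i(z_i) = a_i + b_i z_i$, and the identity $|\psi(x)| = \sgn(\psi(x)) \cdot \psi(x)$ gives $h_i(+1) - h_i(-1) = \sum_{x_i} \psi(x_i) q_i(x_i) = \langle \psi, q_i \rangle$, so $b_i = \tfrac{1}{2} \langle \psi, q_i \rangle$. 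Whenever $\deg(q_i) < d$, the pure high degree of $\psi$ forces $b_i = 0$, making $h_i$ constant in $z_i$. Since $\sum_i \deg(q_i) < Dd$, strictly fewer than $D$ of the $q_i$'s can have degree $\ge d$, so $\prod_i h_i(z_i)$ expands as a multilinear polynomial in $z$ of degree less than $D$. Pure high degree of $\Psi$ then kills the sum: $\sum_z \Psi(z) \prod_i h_i(z_i) = 0$.

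For associativity, my approach is direct evaluation of both sides of \eqref{eqn:ls-assoc} on an input indexed as $(x_{i,j})_{i \in [m_\zeta],\, j \in [m_\varphi]}$ with $x_{i,j} \in \bits^{m_\psi}$. Setting $\sigma_{i,j} := \sgn \psi(x_{i,j})$ and $\sigma_i := (\sigma_{i,1}, \ldots, \sigma_{i,m_\varphi})$, two nested applications of the definition of $\ls$ show that each of $(\zeta \ls \varphi) \ls \psi$ and $\zeta \ls (\varphi \ls \psi)$ equals
\[
2^{m_\zeta (m_\varphi + 1)} \cdot \zeta\!\left(\sgn \varphi(\sigma_1), \ldots, \sgn \varphi(\sigma_{m_\zeta})\right) \cdot \prod_{i=1}^{m_\zeta} |\varphi(\sigma_i)| \cdot \prod_{i,j} |\psi(x_{i,j})|.
\]
The only subtlety is the convention $\sgn(0) = -1$, but whenever any $\psi(x_{i,j}) = 0$ both sides vanish because of the factor $|\psi(x_{i,j})|$, so the convention is irrelevant. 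The main obstacle is the core computation in part (ii): the identity $b_i = \tfrac{1}{2}\langle \psi, q_i \rangle$ for each ``sign-conditional partial sum'' is precisely what makes the pure high degrees of $\Psi$ and $\psi$ \emph{multiply} rather than merely add.
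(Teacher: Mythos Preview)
Your argument is correct and follows the standard route: the paper itself defers parts \eqref{eqn:ls-norm} and \eqref{eqn:ls-phd} to Sherstov~\cite{sherstovhalfspaces1}, whose proof is precisely the sign-pattern decomposition and block-factorization you wrote out, and proves \eqref{eqn:ls-assoc} by the same direct expansion you give. One point worth flagging: you are right that \eqref{eqn:ls-norm} as literally stated needs $\psi$ balanced---with $M=m=1$, $\psi(+1)=\Psi(+1)=1$, $\psi(-1)=\Psi(-1)=0$ one gets $\|\Psi\ls\psi\|_1=2$---so your caveat that $\langle\psi,\mathbf{1}\rangle=0$ (automatic when $\psi$ has pure high degree $\ge 1$, which holds in every use in the paper) is the correct reading of the hypothesis.
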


\begin{proof}
Properties~\eqref{eqn:ls-norm} and~\eqref{eqn:ls-phd} appear in~\cite[Proof of Theorem 3.3]{sherstovhalfspaces1}. Proving that
Property~\eqref{eqn:ls-assoc} holds is a straightforward if tedious calculation that we now perform.
 Below, we will write an input $$x = (x_1, \dots, x_{m_{\zeta}}) = ((x_{1, 1}, \dots, x_{1, m_{\varphi}}), \dots, (x_{m_\zeta, 1}, \dots, x_{m_\zeta, m_\varphi})),$$ where each $x_{i, j} \in \bits^{m_{\psi}}$. We expand
\begin{align*}
\!\!\!\!\!\!\!\!\!\!\!\!\!\!\!\!\!\!\!\!\!\!\! ((\zeta &\ls \varphi) \ls \psi)(x_{1, 1}, \dots, x_{m_\zeta, m_\varphi}) = 2^{m_\zeta \cdot m_\varphi} \cdot (\zeta \ls \varphi)(\dots, \sgn\left(\psi(x_{i, j})\right), \dots) \cdot \prod_{i = 1}^{m_\zeta} \prod_{j = 1}^{m_\varphi} |\psi(x_{i, j})| \\
&\!\!\!\!\!\!\!\!\!\!\!\!\!\!\!\!\!= 2^{m_\zeta \cdot m_\varphi} \! \cdot \! \left( \!2^{m_\zeta} \! \cdot \! \zeta (\dots, \sgn \left(\varphi( \dots, \sgn\left(\psi(x_{i, j})\right), \dots)\right), \dots) \!\cdot\! \prod_{i = 1}^{m_\zeta} |\varphi(\dots, \sgn\left(\psi(x_{i, j})\right), \dots)| \! \right) \!\cdot \! \prod_{i = 1}^{m_\zeta} \prod_{j = 1}^{m_\varphi} |\psi(x_{i, j})| \\
&\!\!\!\!\!\!\!\!\!\!\!\!\!\!\!\!\!= 2^{m_\zeta} \cdot \zeta(\dots, \sgn\left(\varphi (\dots, \sgn\left(\psi(x_{i, j})\right), \dots), \dots \right) \cdot \prod_{i = 1}^{m_\zeta} \left( 2^{m_\varphi} \cdot |\varphi(\dots, \sgn\left(\psi(x_{i, j})\right), \dots)| \prod_{j = 1}^{m_{\varphi}} |\psi(x_{i,j})|\right)\\
&\!\!\!\!\!\!\!\!\!\!\!\!\!\!\!\!\!= 2^{m_{\zeta}} \cdot \zeta(\dots, \sgn\left(\left(\varphi \ls \psi\right)(x_i)\right), \dots) \cdot \prod_{i = 1}^{m_\zeta} |(\varphi \ls \psi)(x_{i})| \\
&\!\!\!\!\!\!\!\!\!\!\!\!\!\!\!\!\!= (\zeta \ls (\varphi \ls \psi))(x_1, \dots, x_{m_{\zeta}}).
\end{align*}
\end{proof}

The following proposition identifies conditions under which a dual witness
$\psi$ for the large $(1/3)$-approximate degree of a function
$f$ can be transformed, via dual block composition with a certain function $\Psi \colon \bits^M \to \bits$,
into a dual witness
for the large $(1-2^{-\Omega(M)})$-approximate degree of the block composition $\AND_M \circ f$.

\begin{proposition}[Bun and Thaler~\cite{bt14}] \label{prop:bt-amp}
Let $m, M \in \N$. There exists a function $\Psi : \bits^M \to \R$ with the following properties.
Let $f : \bits^m \to \bits$ be any function. Let $\psi : \bits^m \to \R$ be any function such that $\langle \psi, f \rangle \ge 1/3$, $\|\psi\|_1 = 1$, and $\psi(x) \ge 0$ whenever $f(x) = 1$. Then
\begin{equation}\langle \Psi \ls \psi , \AND_M \circ f \rangle \ge 1 - (2/3)^{M}, \label{eqn:bt-corr}\end{equation}
\begin{equation}\|\Psi \ls \psi\|_1 = 1, \label{eqn:bt-norm}\end{equation}
\begin{equation}\langle \Psi, \mathbf{1}_M \rangle = 0. \label{eqn:bt-phd}\end{equation}
\end{proposition}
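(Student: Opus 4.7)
The plan is to take $\Psi$ supported on the two antipodal corners of the cube:
\[
\Psi(z) \;:=\; \tfrac{1}{2}\bigl(\mathbf{1}[z = 1^M] \;-\; \mathbf{1}[z = (-1)^M]\bigr).
\]
The properties $\|\Psi\|_1 = 1$ and \eqref{eqn:bt-phd} follow directly from this definition. The key structural observation is that because $\Psi$ vanishes outside $\{1^M, (-1)^M\}$, the value $(\Psi\ls\psi)(x_1,\dots,x_M)$ is nonzero only on \emph{sign-uniform} tuples where every $\sgn(\psi(x_i))$ agrees, and there it equals $\pm 2^{M-1}\prod_i|\psi(x_i)|$.

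To verify \eqref{eqn:bt-corr} and \eqref{eqn:bt-norm}, I would introduce the aggregates $p := \sum_{\psi(x) > 0}\psi(x)$, $q := \sum_{\psi(x) \le 0}|\psi(x)|$, and $g := \sum_{\psi(x) > 0,\, f(x) = -1}\psi(x)$, and then exploit the structural hypothesis: $\psi(x) \ge 0$ whenever $f(x) = 1$ is the contrapositive of ``$\sgn(\psi(x_i)) = -1 \Rightarrow f(x_i) = -1$''. Consequently, on any all-negative-sign tuple every $f(x_i) = -1$ and $\AND_M\circ f$ evaluates to $-1$; on an all-positive-sign tuple, $\AND_M\circ f$ equals $-1$ only on the sub-collection of total $|\psi|^M$-mass $g^M$. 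A short bookkeeping over these two cases yields
\[
\|\Psi\ls\psi\|_1 = 2^{M-1}(p^M + q^M),\qquad
\langle\Psi\ls\psi,\,\AND_M\circ f\rangle = 2^{M-1}(p^M + q^M) \;-\; 2^M g^M.
\]
The hypothesis $\langle\psi,f\rangle \ge 1/3$ simplifies (via $\langle\psi,f\rangle = 1 - 2g$) to $g \le 1/3$. Convexity of $t\mapsto t^M$ on $[0,1]$ with $p+q=1$ gives $p^M+q^M \ge 2^{1-M}$, so $\tfrac{2g^M}{p^M+q^M} \le (2g)^M \le (2/3)^M$, yielding the correlation bound \eqref{eqn:bt-corr}.

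The main delicate point I anticipate is the exact identity $\|\Psi\ls\psi\|_1 = 1$ required by \eqref{eqn:bt-norm}: the formula $2^{M-1}(p^M+q^M)$ equals $1$ precisely when $p = q = 1/2$, i.e., when $\sum_x\psi(x) = 0$. This balanced condition is automatic whenever $\psi$ has pure high degree at least one, which is the setting in which the proposition is applied in this paper (every $\psi$ that one composes with $\Psi$ arises from Theorem~\ref{thm:dual} with $d \ge 1$). Under this mild hypothesis the correlation bound reduces exactly to $1 - (2g)^M \ge 1 - (2/3)^M$; in the fully general setting one normalizes $\Psi\ls\psi$ by its $\ell_1$-norm, which preserves the ratio in the correlation bound by the convexity estimate above. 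Reconciling this normalization with the literal statement of \eqref{eqn:bt-norm} is the principal obstacle, but it is inessential for the applications of the proposition elsewhere in the paper.
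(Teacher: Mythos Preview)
The paper does not actually prove this proposition; it is quoted from~\cite{bt14} without proof. Your construction of $\Psi$ as the two--point signed measure supported on $\{1^M,(-1)^M\}$ is precisely the construction used in~\cite{bt14}, and your computation of the correlation
\[
\langle \Psi\ls\psi,\ \AND_M\circ f\rangle \;=\; 2^{M-1}(p^M+q^M)\;-\;2^M g^M
\]
together with the identification $\langle\psi,f\rangle = 1-2g$ is correct.

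You have also correctly put your finger on a small imprecision in the statement as written: with this $\Psi$, the exact identity $\|\Psi\ls\psi\|_1 = 2^{M-1}(p^M+q^M) = 1$ holds only when $p=q=1/2$, i.e.\ when $\psi$ is balanced ($\sum_x\psi(x)=0$). This is equivalent to $\psi$ having pure high degree at least one, and it is satisfied in every use of the proposition in the paper (in particular for the $\OR_N$ witness of Proposition~\ref{prop:or-dual}, which has pure high degree $c_1\sqrt{k}\ge 1$). Indeed, Proposition~\ref{prop:ls}, Property~\eqref{eqn:ls-norm}, as literally stated in the paper, also tacitly relies on this same balance hypothesis. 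Your convexity estimate $p^M+q^M\ge 2^{1-M}$ shows that even without balance the \emph{normalized} correlation is still at least $1-(2/3)^M$, so the issue is, as you say, inessential for all downstream applications.
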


The following proposition roughly states that if $\psi$ and $\Psi$ are dual polynomials
that are well-correlated with $f$ and $F$ respectively, then the dual block
composition $\Psi \ls \psi$ is well-correlated with the block composed function
$F \circ f$. There is, however, a potential loss in correlation that is proportional to the number of variables
on which $F$ is defined.
\begin{proposition}[Sherstov~\cite{sherstovhalfspaces1}] \label{prop:sherstov-degree}
Let $f : \bits^m \to \bits$ and $F \colon \bits^M \to \bits$, and let $\eps, \delta > 0$. Let $\psi\colon \bits^m \to \bits$ be a function with $\|\psi\|_1 = 1$ and $\langle \psi, f \rangle \ge 1 - \delta$. Let $\Psi \colon \bits^M \to \bits$ be a function with $\|\Psi\|_1 = 1$ and $\langle \Psi, F \rangle \ge \eps$. Then
\[\langle \Psi \ls \psi, F \circ f \rangle \ge \eps - 4M\delta.\]
\end{proposition}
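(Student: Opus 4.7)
\medskip

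\textbf{Proof plan for Proposition~\ref{prop:sherstov-degree}.} The plan is to run a hybrid argument that swaps the inner function from $\sgn\psi$ to $f$ one coordinate at a time. Before doing so, I would translate the $\ell_1$-correlation hypothesis into an agreement probability. Writing $\mu(x) := |\psi(x)|$, the hypothesis $\|\psi\|_1 = 1$ means $\mu$ is a probability distribution on $\bits^m$, and the identity $\psi(x) = |\psi(x)|\,\sgn\psi(x)$ gives
\[\langle\psi, f\rangle \;=\; \sum_{x} |\psi(x)|\,\sgn\psi(x)\,f(x) \;=\; \E_{x\sim\mu}[\sgn\psi(x)\cdot f(x)] \;\ge\; 1 - \delta.\]
Since $\sgn\psi(x)\cdot f(x) \in \bits$, this at once yields $\Pr_{x\sim\mu}[\sgn\psi(x) \ne f(x)] \le \delta/2$.

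Next I would introduce hybrids $G_k \colon (\bits^m)^M \to \bits$ for $k = 0, 1, \dots, M$:
\[G_k(x_1, \dots, x_M) \;=\; F\bigl(f(x_1), \dots, f(x_k),\, \sgn\psi(x_{k+1}), \dots, \sgn\psi(x_M)\bigr),\]
so that $G_M = F\circ f$ and $G_0(x) = F(\sgn\psi(x_1), \dots, \sgn\psi(x_M))$. Telescoping yields
\[\langle \Psi\ls\psi, F\circ f\rangle \;=\; \langle \Psi\ls\psi, G_0\rangle \;+\; \sum_{k=1}^M \langle \Psi\ls\psi, G_k - G_{k-1}\rangle.\]
For the base case, expanding the definition of $\Psi\ls\psi$ and using the fact that the marginal distribution of $\sgn\psi(x_j)$ under $\mu$ is uniform on $\bits$ (this is the standard normalization $\sum_x \psi(x) = 0$, automatic whenever $\psi$ has pure high degree at least $1$, as it does in every application of this proposition), one computes
\[\langle \Psi\ls\psi, G_0\rangle \;=\; 2^M\,\E_{x\sim\mu^M}\bigl[\Psi(\sgn\psi(x_1),\dots,\sgn\psi(x_M))\,F(\sgn\psi(x_1),\dots,\sgn\psi(x_M))\bigr] \;=\; \langle\Psi, F\rangle \;\ge\; \eps,\]
where the factor $2^M$ cancels exactly against the uniform marginal $2^{-M}$ of $(\sgn\psi(x_j))_{j=1}^M$.

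For each hybrid step, the functions $G_k$ and $G_{k-1}$ differ only in the $k$-th argument to $F$, and since $F$ is $\bits$-valued we have $|G_k(x) - G_{k-1}(x)| \le 2\cdot \mathbf{1}[f(x_k)\ne \sgn\psi(x_k)]$. Plugging in,
\[\bigl|\langle\Psi\ls\psi, G_k - G_{k-1}\rangle\bigr| \;\le\; 2\cdot 2^M\,\E_{x\sim\mu^M}\Bigl[|\Psi(\sgn\psi(x_1),\dots,\sgn\psi(x_M))|\cdot \mathbf{1}[f(x_k)\ne \sgn\psi(x_k)]\Bigr].\]
I would factor the right-hand side across the independent coordinates of $\mu^M$: the $M-1$ coordinates other than $k$ contribute a uniform marginal, so summing the resulting $|\Psi|$-expression gives at most $2^{-(M-1)}\|\Psi\|_1 = 2^{-(M-1)}$, while the remaining coordinate contributes at most $\Pr_\mu[\sgn\psi\ne f]\le \delta/2$. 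Multiplying gives $|\langle\Psi\ls\psi, G_k - G_{k-1}\rangle| \le 2\delta$ per step, hence at most $2M\delta \le 4M\delta$ in total.

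The main obstacle is simply the careful bookkeeping of the $2^M$ prefactor in the definition of $\Psi\ls\psi$: it must cancel cleanly against the $2^{-M}$ arising from the uniform marginal of $\sgn\psi$ under $\mu$, and if this cancellation were even slightly off one would pick up an exponential-in-$M$ factor instead of the linear $2M\delta$ error. Once this cancellation is recorded correctly, the hybrid argument and the union-bound-style error bound on the single swapped coordinate are mechanical.
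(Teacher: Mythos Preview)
The paper does not supply a proof of this proposition; it simply attributes the result to Sherstov. Your hybrid argument is the standard one and is correct, and in fact yields the sharper bound $\eps - 2M\delta$.

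Two remarks. First, your observation that one needs $\sum_x \psi(x)=0$ (so that $\sgn\psi$ is uniform under $\mu=|\psi|$) is not a convenience but a genuinely missing hypothesis: without it the statement is false even for $M=1$ and $\delta=0$. Take $m=1$, $f=F=\mathrm{id}$, $\psi(1)=0.9$, $\psi(-1)=-0.1$, and $\Psi(1)=0$, $\Psi(-1)=-1$; then $\eps=1$ but $\langle\Psi\ls\psi,F\circ f\rangle=0.2$. As you say, every use of the proposition in the paper applies it to a $\psi$ of positive pure high degree, which forces balancedness.

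Second, your ``factoring'' of the hybrid error is slightly imprecise as phrased, since $|\Psi(\sgn\psi(x_1),\dots,\sgn\psi(x_M))|$ still depends on the $k$-th coordinate. The clean version is to rewrite the expectation as a sum over $z\in\bits^M$ first:
\[
2\cdot 2^M \sum_{z\in\bits^M} |\Psi(z)|\cdot 2^{-(M-1)}\cdot \Pr_{\mu}\bigl[\sgn\psi = z_k,\ f\ne z_k\bigr]
\;\le\; 4\,\|\Psi\|_1\cdot \tfrac{\delta}{2} \;=\; 2\delta,
\]
using $\Pr_\mu[\sgn\psi=z_k,\ f\ne z_k]\le\Pr_\mu[\sgn\psi\ne f]\le\delta/2$ uniformly in $z_k$. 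Your arithmetic already lands on $2\delta$ per step; only the verbal description of the factorization needs this adjustment.
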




\section{Connecting Symmetric Properties and Block Composed Functions}
\label{sec:step1}
In this section, we execute Stage 1 of our program for proving our main hardness amplification theorem, Theorem~\ref{thm:main}. Throughout this entire section, fix an arbitrary function $F_R : \bits^R \to \bits$.
(In order to prove Theorem \ref{thm:main}, we will ultimately set $R=10 \cdot n \cdot \log n$, and take $F_R = f \circ \AND_{10 \log n}$ for $f : \bits^n \to \bits$.)

Our analysis relies on several intermediate functions, which we now define and analyze. All of these functions are variants of the function $F_R \circ \OR_N$. 

\subsection{The First Function: Block Composition Under a Promise} We define a promise variant of the function $F_R \circ \OR_N$ as follows.

\begin{definition}
\label{def:promcomp} 
Fix positive numbers $N$ and $R$. Recall that $\bits^{N \cdot R}_{\le N}$ denotes the subset of $\bits^{N \cdot R}$ consisting of 
vectors of Hamming weight at most $N$. 
Define $\promcomp$ to be the partial function obtained from  $F_R \circ \OR_N$ 
by 
restricting its domain to $\bits^{N \cdot R}_{\le N}$. \end{definition}

Our goal is to reduce establishing
Theorem \ref{thm:main} to establishing a lower bound on the approximate degree
of $\promcomp$. Specifically, we prove the following theorem relating the approximate degree of $\promcomp$ to that of a function $g$ which is not much more complex than $F_R$:

\begin{theorem} \label{thm:main-reduction}
 Let $\promcomp \colon \bits_{\le N}^{N \cdot R} \to \bits$ be as in Definition \ref{def:promcomp}. There exists a function\\ $g \colon \bits^{12 \cdot N \cdot \lceil \log(R+1) \rceil} \to \bits$ such that
\begin{equation} \label{sighsighsighh} \adeg_{\eps}(g) \ge \adeg_{\eps}(\promcomp) \cdot \lceil \log(R+1) \rceil. \end{equation}
Moreover:
\begin{align}
\bullet \quad &\text{If $F_R$ is computed by a circuit of depth $k$, then $g$ is computed by a circuit of depth $k+2$.} \label{eqn:reductioneasy}\\
&\nonumber \\
\bullet \quad &\text{If $F_R$ is computed by a monotone circuit of depth $k$, then $g$ is computed by a monotone} \nonumber \\ &\text{circuit of depth $k+2$ with $\AND$ gates at the bottom.}  \label{eqn:reductionaczhard} \\ 
&\nonumber \\
\bullet \quad &\text{If $F_R$ is computed by a monotone DNF of width $w$, then $g$ is computed
by a monotone} \nonumber \\ &\text{DNF of width $O(w \cdot \log R)$.} \label{eqn:reductiondnfhard}
\end{align}
\end{theorem}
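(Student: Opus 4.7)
The plan is to take $g$ to be a natural Boolean ``unfolding'' of $\promcomp$ patterned on the $\SURJ$ construction sketched in Section~1.4, and then to apply a symmetrization argument in the style of Ambainis~\cite{ambainis} in order to translate any low-degree approximation of $g$ into one of $\promcomp$, losing only a factor of $\lceil\log(R+1)\rceil$ in degree.

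\textbf{Construction of $g$.} I would view each input $s \in \bits^{12N\lceil\log(R+1)\rceil}$ as $N$ blocks of $B := 12\lceil\log(R+1)\rceil$ bits, where block $j$ encodes a value $s_j \in [R] \cup \{\bot\}$ via a redundant encoding (for instance a $12$-fold repetition of a binary string of length $\lceil\log(R+1)\rceil$). For each $(i,j) \in [R]\times [N]$, let $y_{ij}(s)\in \bits$ equal $-1$ precisely when block $j$ encodes $i$. Define
\[
g(s) \;=\; F_R\bigl(\OR_N(y_{1,1}(s),\ldots,y_{1,N}(s)),\,\ldots,\,\OR_N(y_{R,1}(s),\ldots,y_{R,N}(s))\bigr).
\]
Since at most one $y_{ij}$ per column equals $-1$, the vector $y(s)$ always lies in $\bits^{NR}_{\le N}$ and $g(s) = \promcomp(y(s))$. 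Each $y_{ij}$ is a monotone AND of at most $B$ literals of the bits in block $j$, so feeding these into one additional $\OR_N$ layer and then into the circuit for $F_R$ yields the depth/width bounds~\eqref{eqn:reductioneasy}--\eqref{eqn:reductiondnfhard}: two extra layers in the general and monotone-circuit cases, and an $O(\log R)$ multiplicative blow-up in width in the DNF case.

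\textbf{Approximate-degree reduction.} Given an $\eps$-approximating polynomial $p$ for $g$ of degree $D$, I would first symmetrize $p$ over all permutations of the $N$ blocks (a symmetry $g$ respects) and then over the internal $12$-fold redundancy inside each block; neither operation changes the degree or the approximation quality. The key step is then an Ambainis-type symmetrization lemma: a polynomial of bit-degree $D$ that is fully symmetric under both symmetrizations above, and whose inputs encode one of at most $R+1$ values per block, can be rewritten as a polynomial of degree at most $\lfloor D/\lceil\log(R+1)\rceil\rfloor$ in the indicator variables $y_{ij}$. (This is where the $\lceil\log(R+1)\rceil$ divisor enters: each $y_{ij}$ already has bit-degree $\lceil\log(R+1)\rceil$, so after full symmetrization the polynomial can be repackaged to pay for each variable $y_{ij}$ only once.) This produces a polynomial $q$ in $y$ of the claimed low degree that $\eps$-approximates $\promcomp$ on the image $\{y(s):s\}$.

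\textbf{Main obstacle.} The subtle step is extending this approximation from $\{y(s):s\}$ to all of $\bits^{NR}_{\le N}$. The image consists only of vectors with at most one $-1$ per column, a proper subset of $\bits^{NR}_{\le N}$, and $\promcomp$ is in general only symmetric within each row (under the $\OR_N$'s) rather than across columns. Closing this gap is where the redundant encoding (the factor of $12$) and the finer Ambainis machinery do their real work: after full symmetrization, $q$ should depend on $y$ only through structured quantities (morally, the per-row Hamming weights) that extend naturally beyond the valid-encoding image and force $q$ to be close to $\promcomp$ on the full promise domain. Making this extension precise and quantitative is where I expect the bulk of the technical effort; the circuit-preservation claims and the symmetrization step itself should follow from standard Ambainis-type manipulations together with routine bookkeeping.
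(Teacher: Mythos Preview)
Your high-level architecture matches the paper's, and you have correctly isolated the one genuinely non-routine step. However, there are two concrete gaps.

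\textbf{The $\lceil\log(R+1)\rceil$ degree factor.} Your justification (``each $y_{ij}$ already has bit-degree $\lceil\log(R+1)\rceil$, so after full symmetrization the polynomial can be repackaged to pay for each variable $y_{ij}$ only once'') is the converse of what you need. Knowing that $y_{ij}(s)$ has bit-degree $\ell:=\lceil\log(R+1)\rceil$ tells you that composing with $y$ can \emph{raise} degree by a factor of $\ell$; it does not tell you that every bit-polynomial of degree $D$ factors through $y$ as a polynomial of degree $D/\ell$. With a plain $12$-fold repetition code each valid value has a \emph{single} preimage, so symmetrizing over the redundancy does nothing. The paper does not argue this step from scratch; it invokes a black-box result of Sherstov (\cite[Theorem~3.2]{sherstov15}, restated here as Lemma~\ref{lem:phi}) which supplies a specific surjection $\phi:\bits^{6\ell}\to[R]_0$ engineered so that $\adeg_\eps(G_\phi)\ge \adeg_\eps(G)\cdot\ell$ for \emph{every} property $G$. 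The factor of $12$ in the theorem statement is $6$ from Sherstov's encoding times $2$ from a later variable-doubling step, not a repetition count.

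\textbf{Your ``main obstacle.''} The paper resolves it cleanly, and the key device is the extra ``dummy'' range element $0$. One first passes to the property $\symcomp$ defined on inputs $y\in(\bits^N)^{R+1}$ with \emph{exactly} one $-1$ per column (row $0$ absorbing the slack). Ambainis's symmetrization then shows that any approximator depends only on the row counts $z_i=|y_i|$, which satisfy $z_0+\cdots+z_R=N$; call the resulting univariate-per-row polynomial $\tilde p(z_0,\dots,z_R)$. Now for an arbitrary $x\in\bits^{NR}_{\le N}$ set $z_i=|x_i|$ for $i\ge 1$ and $z_0=N-\sum_{i\ge1}|x_i|\ge 0$ (this is exactly where the Hamming-weight promise is used). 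One checks directly that $\symsymcomp(z)=\promcomp(x)$, so $q(x):=\tilde p\bigl(N-\sum_i T_i(x),\,T_1(x),\dots,T_R(x)\bigr)$ approximates $\promcomp$ with no increase in degree. Your intuition that ``$q$ should depend only on per-row Hamming weights'' is exactly right; what you are missing is the slack coordinate $z_0$ that makes every promise input correspond to a legitimate count vector.

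Finally, a smaller point: your $y_{ij}$'s are AND's of literals of \emph{both} polarities, not monotone AND's; the paper obtains Properties~\eqref{eqn:reductionaczhard} and~\eqref{eqn:reductiondnfhard} by a separate variable-doubling trick (Definition~\ref{def:h}) applied after the non-monotone circuit for $g^*$ has been built.
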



\subsection{The Second Function: A Property of Evaluation Tables}
Consider a vector $s=(s_1, \dots, s_N) \in [R]_0^{N}$. Observe that $s$ can be thought of as the 
\emph{evaluation table} of a function $f_s \colon [N] \to [R]_0$ defined
via $f_s(i) = s_i$. The second function $\symcomp$ that we define (cf. Definition \ref{def:symcomp} below)
can be thought of as a \emph{property} of such a function $f_s$. 

In order to define $\symcomp$, 
it is useful to describe such a function $f_s$ as follows.
\begin{definition} \label{def:y}
Fix any $s \in [R]_0^N$. For $(i, j) \in [R]_0 \times [N]$,
define $Y(s)=(\dots, Y_{ij}(s), \dots) \in \left(\bits^{N}\right)^{R+1}$ where
 \[Y_{ij}(s) =  \begin{cases} -1 \text{ if } s_j=i \\
1 \text{ otherwise.} \end{cases} \]
\end{definition}
Observe that any vector $y=(\dots, y_{ij}, \dots) \in \left(\bits^{N}\right)^{R+1}$ equals $Y(s)$ for some $s \in [N]^{R+1}$ if and only if
$y$ satisfies the following condition:
\begin{equation} \label{eq:yprop}
\text{For every } j \in [N], \text{ there exists exactly one value of } i \text{ in } [R]_0 \text{ 
such that } y_{ij} =-1. \end{equation} 
Accordingly, the domain of our second function $\symcomp$ 
is the subset of $\left(\bits^{N}\right)^{R+1}$ satisfying Condition \eqref{eq:yprop}.


\begin{definition}
\label{def:symcomp} 
Let $\domain_{N, R}$ be the subset of $\left(\bits^{N}\right)^{R+1}$ of vectors
satisfying Condition \eqref{eq:yprop}. We refer to any function from $\domain_{N, R}$ to $\bits$
as a \emph{property} of functions $[N] \to [R]_0$.
Define the property $\symcomp \colon \domain_{N, R} \to \bits$ via:
$\symcomp(y_0, y_1, \dots, y_{R}) := F_R(\OR_N(y_1), \dots, \OR_N(y_R)).$
\end{definition}

One may view $\symcomp$ as a property of functions $f_s \colon [N] \to [R]_0$ as follows. The property
$\symcomp$ first obtains a vector of $R$ bits $(b_1, \dots, b_R)$, one for each of the $R$ non-zero range items $1, \dots, R$, 
and then feeds these bits into $F_R$. Here, the bit $b_i$ for range item $i$
is obtained by testing whether $i$ appears in the image of $f_s$ (any occurrences of range item $0$
are effectively ignored by $\symcomp$).

The following lemma establishes that $\symcomp$ satisfies a basic symmetry condition. This holds
regardless of the base function $F_R$ used to define $\symcomp$.

\begin{lemma} \label{lem:symm}
For a permutation $\sigma : [N] \to [N]$ and a vector $y_i \in \bits^{N}$,
let $\sigma(y_i) := (y_{i, \sigma(1)}, \dots, y_{i, \sigma(N)})$. Then
$\symcomp(y_0, \dots, y_{R}) = \symcomp(\sigma(y_0), \dots, \sigma(y_{R})).$
\end{lemma}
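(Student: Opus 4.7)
The statement is essentially an immediate consequence of two observations: first, that $\OR_N$ is a symmetric function (invariant under arbitrary permutations of its inputs), and second, that the definition of $\symcomp$ feeds each block $y_i$ (for $i \geq 1$) into $\symcomp$ only through $\OR_N(y_i)$, while not depending on $y_0$ at all.

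The plan is as follows. First I would verify that the point $(\sigma(y_0), \dots, \sigma(y_R))$ actually lies in the domain $\domain_{N,R}$, so that the right-hand side of the claimed equality is well-defined. This is straightforward: Condition~\eqref{eq:yprop} asserts that for every $j \in [N]$, exactly one $i \in [R]_0$ satisfies $y_{ij} = -1$. Applying $\sigma$ to each block relabels the column index $j$ to $\sigma(j)$, and since $\sigma$ is a bijection on $[N]$, Condition~\eqref{eq:yprop} holds for $(\sigma(y_0), \dots, \sigma(y_R))$ if and only if it holds for $(y_0, \dots, y_R)$.

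Second, I would invoke symmetry of $\OR_N$: for every permutation $\sigma$ of $[N]$ and every $y_i \in \bits^N$, we have $\OR_N(\sigma(y_i)) = \OR_N(y_i)$, because $\OR_N$ depends only on the multiset of its input bits (equivalently, on the Hamming weight). Plugging this into the definition of $\symcomp$ gives
\[
\symcomp(\sigma(y_0), \dots, \sigma(y_R)) = F_R\bigl(\OR_N(\sigma(y_1)), \dots, \OR_N(\sigma(y_R))\bigr) = F_R\bigl(\OR_N(y_1), \dots, \OR_N(y_R)\bigr) = \symcomp(y_0, \dots, y_R),
\]
which is the desired equality. No part of this argument is a serious obstacle; the lemma is meant as a lightweight structural observation that will presumably be used downstream (for instance, in combination with a Minsky--Papert style symmetrization via Lemma~\ref{lem:mp}) to reduce the analysis of $\symcomp$ to that of a function depending only on the Hamming weights of the blocks $y_1, \dots, y_R$.
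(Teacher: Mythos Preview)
Your proposal is correct and takes essentially the same approach as the paper, which simply says the result is ``Immediate from Definition~\ref{def:symcomp} and the fact that $\OR_N$ depends only on the Hamming weight of its input.'' Your version is more careful in that you also verify the permuted tuple remains in the domain $\domain_{N,R}$, a detail the paper leaves implicit.
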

\begin{proof}
Immediate from Definition \ref{def:symcomp} and the fact that $\OR_N$
depends only on the Hamming weight of its input.
\end{proof}

Viewing $\symcomp$ as a property of functions $f_s \colon [N] \to [R]_0$,
Lemma \ref{lem:symm} simply states that $\symcomp$
is invariant under permutations of the domain of $f_s$.

\subsection{The Third Function: A Symmetrized Property}
To define our third function $\symsymcomp$,
it is useful to 
consider yet another representation of a function $f_s \colon [N] \to [R]_0$. 
\begin{definition} \label{def:zx}
Given $s \in [R]_0^N$, and its associated function $f_s$, let $Z_i(s)=|f_s^{-1}(i)|$, and define $Z(s) = (Z_0(s), \dots, Z_{R}(s))$.
\end{definition}
That is, each function $Z_i(s)$ counts the number of of inputs $j \in [N]$ such that $f_s(j) = i$.
Observe that a vector $z=(z_0, \dots, z_{R}) \in [N]_0^{R+1}$ equals
$Z(s)$ for some $s \in [R]_0^N$ if and only if
\begin{equation} \label{eq:zprop} z_0 + \dots + z_{R} = N.\end{equation}
Accordingly, the domain upon which our third function $\symsymcomp$ is defined
is the subset of $[N]_0^{R+1}$ satisfying Equation \eqref{eq:zprop}.

\begin{definition}
Let $\tilde{\mathcal{D}}_{N, R}$ be the subset of $[N]_0^{R+1} \subset \R^{R+1}$ consisting of all vectors $z=(z_0, \dots, z_{R})$ satisfying Equation \eqref{eq:zprop}.
Define $\symsymcomp \colon \tilde{\mathcal{D}}_{N, R} \to \bits$ as follows. For any $z \in \tilde{\mathcal{D}}_{N, R}$, let $s$
be an arbitrary vector such that $Z(s)=z$.
Define $\symsymcomp(z) = \symcomp(Y(s))$, where $Y(s)$ is as in Definition \ref{def:y}.
\end{definition}

The function $\symsymcomp$ is well-defined, as Lemma \ref{lem:symm} implies that 
for any pair $s, s' \in [R]_0^N$ such that $Z(s)=Z(s')$,
it holds that $ \symcomp(Y(s)) =  \symcomp(Y(s'))$. 
It is straightforward to see that
the following is an alternative definition of $\symsymcomp$ on its domain $\tilde{\mathcal{D}}_{N, R}$:
\begin{equation} \label{eq:oneuse} \symsymcomp(z_0, \dots, z_{R}) = F_R(\Ind_{> 0}(z_1), \dots, \Ind_{> 0}(z_R)),\end{equation}
where $\Ind_{> 0}(z_i) = 1$ if $z_i = 0$ and is equal to $-1$ otherwise.

\paragraph{Relating the Approximate Degrees of $\symcomp$ and $\promcomp$.} 
The following lemma is implicit in the proof of \cite[Lemma 3.4]{ambainis}. It states that $\symsymcomp$ is no harder
to approximate by low-degree polynomials than is $\symcomp$. 
\begin{lemma}[Ambainis \cite{ambainis}] \label{lem:ambainis}
Let $\symcomp \colon \domain_{N, R} \to \bits$ be any property of functions $f_s \colon [N] \to [R]_0$
that is symmetric with respect to permutations of the domain of $f_s$.
Let $p$ be a polynomial of degree $d$ that $\eps$-approximates $\symcomp$ on its domain $\domain_{N, R}$. 
Then there exists a polynomial $\tilde{p}\colon \R^{R+1} \to \R$ of degree at most $d$ that $\eps$-approximates $\symsymcomp$ 
on its domain $\tilde{\mathcal{D}}_{N, R}$.
\end{lemma}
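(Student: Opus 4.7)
The plan is to carry out the classical two-step symmetrization recipe: first an ``outer'' symmetrization that averages the approximating polynomial $p$ over the $S_N$-action on the common second index, and then a multivariate Minsky--Papert style descent showing that after reparameterizing $\domain_{N,R}$ by the underlying function $s \in [R]_0^N$, the symmetrized polynomial can be rewritten as a polynomial of the same degree in the count variables $z_0,\dots,z_R$.

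\emph{Step 1 (symmetrize $p$).} Define
$$p_{\mathrm{sym}}(y_0,\dots,y_R) \;:=\; \frac{1}{N!}\sum_{\sigma\in S_N} p\bigl(\sigma(y_0),\dots,\sigma(y_R)\bigr),$$
with $\sigma$ acting simultaneously on the second index of every block as in Lemma~\ref{lem:symm}. This averaging can only decrease the degree, so $\deg(p_{\mathrm{sym}})\le d$. Because $\domain_{N,R}$ is closed under simultaneous permutations and, by Lemma~\ref{lem:symm}, $\symcomp$ is $S_N$-invariant on $\domain_{N,R}$, averaging the pointwise inequality $|p - \symcomp| \le \eps$ over the orbit yields $|p_{\mathrm{sym}} - \symcomp|\le \eps$ on $\domain_{N,R}$.

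\emph{Step 2 (descend to $z$).} On $\domain_{N,R}$ each $y=Y(s)$ for a unique $s\in[R]_0^N$, so I substitute $y_{ij}=1-2\Ind[s_j=i]$ and expand. This turns $p_{\mathrm{sym}}(Y(s))$ into a linear combination of indicator monomials $\prod_{m=1}^k \Ind[s_{j_m}=i_m]$ of total degree $k\le d$; collisions in $j$-coordinates force equal $i$-coordinates and cause the monomial to collapse, so I may assume the $j_m$ are distinct. Grouping by color and letting $v_i := |\{m : i_m = i\}|$, a direct count of permutations that satisfy all constraints simultaneously gives
$$\frac{1}{N!}\sum_{\sigma\in S_N}\prod_{m=1}^k \Ind\bigl[s_{\sigma(j_m)}=i_m\bigr] \;=\; \frac{(N-k)!}{N!}\prod_{i=0}^{R} z_i^{\underline{v_i}},$$
where $z_i^{\underline{v_i}}$ is the falling factorial. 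The right-hand side is a polynomial in $(z_0,\dots,z_R)$ of total degree $\sum_i v_i = k\le d$. Summing over the finitely many monomials in the expansion of $p_{\mathrm{sym}}$ produces a polynomial $\tilde p$ on $\R^{R+1}$ of degree at most $d$ that equals $p_{\mathrm{sym}}(Y(s))$ whenever $Z(s)=z$, hence $\eps$-approximates $\symsymcomp$ on $\tilde\domain_{N,R}$ by Step~1.

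\emph{Main obstacle.} The symmetrization of Step~1 is immediate, so the real content is verifying the combinatorial identity in Step~2: the symmetrized indicator monomial equals exactly $N^{-\underline{k}}\prod_i z_i^{\underline{v_i}}$. The care needed is in correctly counting the $(N-k)!\prod_i z_i^{\underline{v_i}}$ permutations $\sigma$ for which all constraints hold---namely, first choose for each color $i$ an injection from $\{m:i_m=i\}$ into the $z_i$-element preimage $s^{-1}(i)$, and then permute the remaining $N-k$ positions arbitrarily. Once this identity is in place, the fact that the resulting expression is a genuine degree-$k$ polynomial in $z$ (not merely a function of $z$) is what drives the entire reduction, and everything else reduces to bookkeeping.
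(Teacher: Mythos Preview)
The paper does not give its own proof of this lemma; it merely states it as implicit in \cite[Lemma~3.4]{ambainis}. Your argument is correct and is exactly the standard Ambainis symmetrization: average over the diagonal $S_N$-action, substitute $y_{ij}=1-2\Ind[s_j=i]$, and rewrite each symmetrized indicator monomial as $\frac{(N-k)!}{N!}\prod_i z_i^{\underline{v_i}}$, a genuine polynomial of degree $k$ in the count variables.

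One small expository remark: in Step~2 you say you expand $p_{\mathrm{sym}}(Y(s))$ into indicator monomials and then display an average over $\sigma$. The cleanest way to read this is that you expand the \emph{unsymmetrized} $p$ into $y$-monomials, substitute, and then apply the $\frac{1}{N!}\sum_\sigma$ from the definition of $p_{\mathrm{sym}}$ term-by-term; equivalently, since $p_{\mathrm{sym}}(Y(s))$ is already $S_N$-invariant in $s$, symmetrizing its indicator-monomial expansion leaves the total unchanged while replacing each term by the falling-factorial expression. Either reading is fine, but stating which one you intend would remove the apparent double symmetrization. The counting argument for the identity (choose ordered injections into each preimage $s^{-1}(i)$, then permute the remaining $N-k$ positions) is correct.
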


The following theorem is the technical heart of this section.
\begin{theorem} \label{thm:sym-vs-prom}
Let $\eps > 0$. Then $\adeg_\eps(\symcomp) \ge  \adeg_\eps(\promcomp).$
\end{theorem}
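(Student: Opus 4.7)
The plan is to chain two steps: first invoke Ambainis' symmetrization (Lemma~\ref{lem:ambainis}) to pass from $\symcomp$ on $\domain_{N,R}$ to its symmetrized variant $\symsymcomp$ on $\tilde{\mathcal{D}}_{N,R} \subset \R^{R+1}$, and then precompose with a simple affine map $x \mapsto z(x)$ that sends the promise domain $\bits^{NR}_{\le N}$ into $\tilde{\mathcal{D}}_{N,R}$ in a value-preserving way.

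In more detail, let $p$ be a polynomial of degree $d = \adeg_{\eps}(\symcomp)$ that $\eps$-approximates $\symcomp$ on $\domain_{N,R}$. Lemma~\ref{lem:symm} shows that $\symcomp$ is invariant under permutations of the coordinates in $[N]$, so Lemma~\ref{lem:ambainis} produces a polynomial $\tilde p \colon \R^{R+1} \to \R$ of degree at most $d$ that $\eps$-approximates $\symsymcomp$ on $\tilde{\mathcal{D}}_{N,R}$.

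To descend to $\promcomp$, for each $x=(x_1,\dots,x_R) \in \bits^{NR}$ with $x_i \in \bits^N$, define the degree-one quantities $z_i(x) := \tfrac12\bigl(N - \sum_{j=1}^N x_{ij}\bigr)$ for $i \in [R]$ and $z_0(x) := N - \sum_{i=1}^R z_i(x)$. Then $q(x) := \tilde p(z_0(x),\dots,z_R(x))$ is a polynomial in $x$ of degree at most $d$. Whenever $x \in \bits^{NR}_{\le N}$, the number $z_i(x)$ equals the count of $-1$'s in row $i$ of $x$, so each $z_i(x) \in [N]_0$, and $z_0(x) = N - |x| \ge 0$. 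Hence $\sum_{i=0}^R z_i(x) = N$, so $z(x) \in \tilde{\mathcal{D}}_{N,R}$. Since $\Ind_{>0}(z_i(x)) = \OR_N(x_i)$ for $i \ge 1$, the alternative description of $\symsymcomp$ in Equation~\eqref{eq:oneuse} gives $\symsymcomp(z(x)) = F_R(\OR_N(x_1),\dots,\OR_N(x_R)) = \promcomp(x)$, and therefore $|q(x)-\promcomp(x)| = |\tilde p(z(x)) - \symsymcomp(z(x))| \le \eps$ throughout $\bits^{NR}_{\le N}$.

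The only substantive ingredient is Lemma~\ref{lem:ambainis}, invoked as a black box. The subtlety worth flagging is that the promise $|x| \le N$ plays a pivotal role: it is exactly what ensures $z_0(x) \ge 0$ (equivalently, that the total of the $z_i$'s never overshoots $N$), which is what places $z(x)$ into the domain $\tilde{\mathcal{D}}_{N,R}$ on which $\tilde p$ is guaranteed to be an $\eps$-approximator. Without the Hamming-weight promise, the affine substitution would leave the domain of $\symsymcomp$ and the argument would break down, which is precisely why the reduction can only produce a lower bound for $\promcomp$ rather than for the unrestricted $F_R \circ \OR_N$.
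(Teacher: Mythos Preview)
Your proof is correct and follows essentially the same approach as the paper: apply Lemma~\ref{lem:ambainis} to obtain an approximator $\tilde p$ for $\symsymcomp$, then precompose with the affine map $x \mapsto (N-\sum_i z_i(x),\, z_1(x),\dots,z_R(x))$ (the paper's $T_i$ are your $z_i$) to land in $\tilde{\mathcal{D}}_{N,R}$ and recover an approximator for $\promcomp$. Your explicit remark that the promise $|x|\le N$ is precisely what forces $z_0(x)\ge 0$ is a nice clarification the paper leaves implicit.
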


\begin{proof}
Recall that the domain of $\symcomp$ is the subset $\domain_{N, R}$ of $\left(\bits^{N}\right)^{R+1}$ of vectors
satisfying Condition \eqref{eq:yprop}, and the domain of $\promcomp$ is
$\{-1, 1\}_{\le N}^{N \cdot R} = \{x \in (\bits^N)^R : |x| \le N\}$.
Let $p \colon \domain_{N, R} \to \R$ be a polynomial of degree $d$ that $\eps$-approximates $\symcomp$. We will construct a polynomial $q \colon \{-1, 1\}_{\le N}^{N \cdot R} \to \R$ of degree at most $d$ that $\eps$-approximates $\promcomp$. 

By Lemma \ref{lem:ambainis}, there exists a polynomial $\tilde{p} : \R^{R+1} \to \R$ of degree at most $d$ such that
\begin{equation} \label{eq:tp} |\tilde{p}(z_0, \dots, z_{R}) - \symsymcomp(z_0, \dots, z_{R})| \le \eps \qquad \text{ whenever } (z_0, \dots z_{R}) \in \tilde{\domain}_{N,R}.\end{equation}

For each $i = 0, \dots, R$, define a function $T_i : \{-1, 1\}_{\le N}^{N \cdot R}  \to [N]_0$ by $T_i(x) = |\{j \colon x_{ij} = -1\}| = \frac{1}{2}(N - \sum_{j = 1}^N x_{ij})$. Now define the polynomial $q \colon \{-1, 1\}_{\le N}^{N \cdot R}  \to \R$ by
\[q(x) = \tilde{p}\left(N - \sum_{i = 1}^R T_i(x), T_1(x), \dots, T_R(x)\right).\]
Since each of the functions $T_i$ is linear, the polynomial $q$ has degree at most $d$.

We now verify that \begin{equation}
\label{eq:approxeq} |q(x) - \promcomp(x)| \le \eps \text{ for all } x \in \{-1, 1\}_{\le N}^{N \cdot R}. \end{equation} Fix some $x = (x_1, \dots, x_R) \in \{-1, 1\}_{\le N}^{N \cdot R}$. 
Then
\begin{align*}
\symsymcomp\left(N - \sum_{i = 1}^R T_i(x), T_1(x), \dots, T_R(x))\right) &= F_R(\Ind_{> 0}(T_1(x)), \dots, \Ind_{> 0}(T_R(x))) \\
&=F_R(\OR_N(x_1), \dots, \OR_N(x_{R})) \\
&=\promcomp(x).
\end{align*}
Here, the first equality holds by combining Equation \eqref{eq:oneuse}
with the fact that $\left(N - \sum_{ i = 1}^R T_i(x), T_1(x), \dots, T_R(x)\right)$ is a sequence of non-negative numbers summing to $N$ and hence is in the domain $\tilde{\domain}_{N, R}$ of $\symsymcomp$.
The second equality holds by definition of $\Ind_{> 0}$.
The third equality holds by definition of $\promcomp$ and $\{-1, 1\}_{\le N}^{N \cdot R}$.

Property \eqref{eq:approxeq} now follows by definition of $q$ and Property \eqref{eq:tp}.
\end{proof}

\subsection{The Final Function: From a Property to a Circuit}
Recall that our goal in this section is to prove Theorem~\ref{thm:main-reduction} reducing our main hardness
amplification theorem
(Theorem \ref{thm:main}) to a lower bound on the approximate degree
of $\promcomp$. Theorem \ref{thm:main-reduction}
refers to a total function $g$ on $m=O(n \log^4 n)$ bits. But none of the first three functions
defined in this section (i.e., $\promcomp$, $\symcomp$, and $\symsymcomp$) are total functions on bits. 
Hence,
we still need to construct a function $g$ with domain $\bits^{m}$, with circuit depth or DNF width not much higher than that of $F_R$.
(Recall that the function $g$ referred to in Theorem \ref{thm:main} will ultimately 
be obtained in Section \ref{sec:next} by applying the construction here with $R=10 n \log n$, and $F_R = f \circ \AND_{10 \log n}$).

Our function $g$ will interpret its input $u \in \bits^{m}$
as specifying a list $s$ of $N$ numbers from the set $[R]_0$, and will output $\symcomp(Y(s))$.
There are many ways to translate $u$ into the list $s$. It turns out that a relatively
simple translation method suffices to ensure Property \eqref{eqn:reductioneasy} of Theorem \ref{thm:main-reduction}, i.e., that if $F_R$ is computed by a circuit of depth $k$, then $g$ is computed by a circuit of depth $k+2$. We will begin by showing how to construct an auxiliary function $g^*$ that is already enough to satisfy Property~\eqref{eqn:reductioneasy}. Slightly more effort will then be required to modify $g^*$ to construct $g$ establishing Properties \eqref{eqn:reductionaczhard} and \eqref{eqn:reductiondnfhard} of
Theorem \ref{thm:main-reduction}.





\subsubsection{Definition of $g^*$}

\begin{definition} \label{def:shit} \label{def:g}
Fix positive integers $N$, $R$, and $k$ with $k \geq \lceil R+1 \rceil$. 
Let $m = N \cdot k$, and fix any function $\phi \colon \bits^{k} \to [R]_0$. 
We associate an input $u = (u_1, \dots, u_N) \in \left(\bits^k\right)^N$ with
the vector $s_u \in [R]_0^N$ defined as $s_u = (\phi(u_1), \dots, \phi(u_N)).$
Let $Y : [R]_0^N \to \domain_{N,R}$ be as in Definition \ref{def:y}. 
Given any property $G \colon \domain_{N, R} \to \bits$, define
$G_{\phi} \colon \bits^{m} \to \bits$ by $G_{\phi}(u) = G(Y(s_u))$.
\end{definition}


The following lemma is a restatement of  \cite[Theorem 3.2]{sherstov15}.
\begin{lemma}[Sherstov \cite{sherstov15}] \label{lem:phi}
Let $k = 6 \lceil \log (R+1) \rceil$. There exists an (explicitly given) surjection $$\phi: \bits^{k} \to [R]_0$$ such that the following holds.
For every property $G \colon \domain_{N, R} \to \bits$ and $\eps > 0$,
\[\adeg_{\eps}(G_{\phi}) \ge \adeg_{\eps}(G) \cdot \lceil \log (R+1)\rceil.\]
\end{lemma}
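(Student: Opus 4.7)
The plan is to prove Lemma~\ref{lem:phi} by a primal degree-reduction argument. Given a polynomial $p \colon \bits^{m} \to \R$ of degree $d$ that $\eps$-approximates $G_\phi$, I would construct from $p$ a polynomial $\tilde p$ on $\domain_{N,R}$ of degree at most $d/\lceil \log(R+1)\rceil$ that $\eps$-approximates $G$; by contraposition this yields the claimed bound $\adeg_\eps(G_\phi) \ge \adeg_\eps(G) \cdot \lceil \log(R+1)\rceil$.

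The surjection $\phi$ would be taken from \cite[Theorem~3.2]{sherstov15} and invoked as a black box. Its key combinatorial feature is that for each $r \in [R]_0$, the uniform measure on the preimage $\phi^{-1}(r)$ agrees with the uniform measure on $\bits^k$ on all multilinear monomials of degree strictly below $\lceil \log(R+1)\rceil$: for every subset $S \subseteq [k]$ with $1 \le |S| < \lceil \log(R+1)\rceil$ and every $r$,
\[\mathbb{E}_{u \sim \mathrm{Unif}(\phi^{-1}(r))}\!\Bigl[\prod_{i \in S} u_i\Bigr] = 0.\]
In other words, the fibres of $\phi$ form a combinatorial design that low-degree polynomial tests cannot distinguish from the entire cube; the slack factor of $6$ in $k = 6\lceil \log(R+1)\rceil$ is what makes an explicit such $\phi$ possible.

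Given such a $\phi$, define $\tilde p$ by a preimage-averaging operator. For $y \in \domain_{N,R}$, let $s = (s_1,\dots,s_N) \in [R]_0^N$ be the unique symbol vector with $Y(s) = y$, and set
\[\tilde p(y) \;=\; \mathbb{E}\bigl[\,p(u_1,\dots,u_N)\,\bigr], \qquad u_j \sim \mathrm{Unif}(\phi^{-1}(s_j)) \text{ independently across } j.\]
Approximation quality follows immediately from Definition~\ref{def:g}: $G_\phi(u) = G(Y(s_u)) = G(y)$ for every $u$ in the product of preimages, so the pointwise bound $|p(u) - G_\phi(u)| \le \eps$ survives the expectation, yielding $|\tilde p(y) - G(y)| \le \eps$ for every $y \in \domain_{N,R}$.

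For the degree bound, expand $p = \sum_\alpha c_\alpha u^\alpha$ as a multilinear polynomial in the $m = N \cdot k$ variables $u_{j,i}$, and for each monomial let $\alpha_j \subseteq [k]$ denote the variables used within block $j$. The expectation of $u^\alpha$ under the product distribution factors as $\prod_j \mathbb{E}_{u_j \sim \mathrm{Unif}(\phi^{-1}(s_j))}\bigl[u_j^{\alpha_j}\bigr]$. By the design property, any block $j$ with $1 \le |\alpha_j| < \lceil \log(R+1)\rceil$ contributes a factor equal to $0$, so only monomials whose block degrees $|\alpha_j|$ are each either $0$ or at least $\lceil \log(R+1)\rceil$ contribute. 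For such a monomial, the number $t$ of non-trivial blocks satisfies $t \cdot \lceil \log(R+1)\rceil \le \sum_j |\alpha_j| = |\alpha| \le d$, hence $t \le d/\lceil \log(R+1)\rceil$. Each surviving factor depends on $s_j$ alone and so is a linear combination of the indicators $\Ind[s_j = i] = (1 - y_{i,j})/2$, i.e., a degree-$1$ polynomial in $\{y_{i,j}\}_{i \in [R]_0}$. Summing across monomials preserves this bound, so $\tilde p$ has total degree at most $d/\lceil \log(R+1)\rceil$ in the $y$-variables. The one genuinely non-trivial ingredient is the construction of $\phi$ itself, which is why the lemma is stated as a direct restatement of Sherstov's result; the degree-reduction half of the proof, once $\phi$ is available, reduces to a clean blockwise averaging calculation.
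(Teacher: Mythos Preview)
The paper does not supply its own proof of Lemma~\ref{lem:phi}; it is quoted verbatim as a restatement of \cite[Theorem~3.2]{sherstov15} and used as a black box. Your proposal is a correct reconstruction of the primal degree-reduction argument that underlies Sherstov's result: the fibre-averaging operator preserves the approximation guarantee pointwise, and the design property of $\phi$ kills every block whose degree lies strictly between $1$ and $\lceil\log(R+1)\rceil$, forcing each surviving monomial to touch at most $d/\lceil\log(R+1)\rceil$ blocks and hence to become a polynomial of that degree in the $y$-variables via the identity $\Ind[s_j=i]=(1-y_{i,j})/2$. You also correctly flag that the only substantive content is the explicit construction of a surjection whose fibres are indistinguishable to characters of weight below $\lceil\log(R+1)\rceil$, and that the factor $6$ in $k$ is precisely the room needed for such a construction; this is exactly the division of labor in Sherstov's theorem, which is why the present paper simply cites it.
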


The following corollary defines the function $g^*$ and uses Lemma \ref{lem:phi} to show that a lower bound on the approximate degree of
$\promcomp$ implies a lower bound on the approximate degree of $g^*$.
\begin{corollary} \label{cor:reduction}
Fix an integer $N > 0$. Let $\promcomp \colon \bits_{\le N}^{N\cdot R} \to \bits$ be as in Definition \ref{def:promcomp}, $\symcomp \colon
\domain_{N, R} \to \bits$ be as in Definition \ref{def:symcomp}, and $\phi$
be as in Lemma \ref{lem:phi}. Let $m = 6 N \cdot \lceil \log(R+1) \rceil$, and define $g^*  \colon \bits^{m} \to \bits$
to equal $\symcomp_{\phi}$ as per Definition \ref{def:shit}. Then for every $\eps > 0$,
\begin{equation} \label{sighsighsigh} \adeg_{\eps}(g^*) \ge \adeg_{\eps}(\promcomp) \cdot \lceil \log(R+1) \rceil.\end{equation}

Moreover, if $F_R$ is computed by Boolean circuit of depth $k$ and size $\poly(R)$, then $g^*$ is computed by circuit of depth $k+2$ and size $\poly(R, N)$.
\end{corollary}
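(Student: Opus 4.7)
The plan is to chain the two already-established reductions for the approximate-degree lower bound, and then exhibit a small-depth circuit for $g^*$ directly from its definition.

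For inequality \eqref{sighsighsigh}, I would first apply Lemma \ref{lem:phi} with $G = \symcomp$ to obtain $\adeg_\eps(g^*) = \adeg_\eps(\symcomp_\phi) \ge \adeg_\eps(\symcomp) \cdot \lceil \log(R+1) \rceil$. Then I would replace $\adeg_\eps(\symcomp)$ by the smaller quantity $\adeg_\eps(\promcomp)$ using Theorem \ref{thm:sym-vs-prom}. Composing these two bounds immediately yields \eqref{sighsighsigh}.

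For the circuit complexity claim, I would unfold the definitions to write $g^*(u) = \symcomp(Y(s_u)) = F_R\bigl(\OR_N(y_1(s_u)), \ldots, \OR_N(y_R(s_u))\bigr)$, where $y_{ij}(s_u) = -1$ iff $\phi(u_j) = i$. The key observation is that each indicator $y_{ij}(s_u)$ depends only on the single $k$-bit block $u_j$, with $k = 6 \lceil \log(R+1)\rceil$. Since any Boolean function on $k$ bits admits a DNF representation of width at most $k$ and size at most $2^k = \poly(R)$, each $y_{ij}(s_u)$ is computed by such a DNF. The disjunction $\OR_N(y_i(s_u))$ is then the flattened OR of $N$ such DNFs, which collapses into a single DNF of width at most $k$ and size at most $N \cdot 2^k = \poly(N, R)$.

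Feeding these $R$ depth-$2$ DNFs into $F_R$, which by hypothesis is a circuit of some depth $k_F$ (renaming the depth of $F_R$ to avoid a clash with the block length $k$ above) and size $\poly(R)$, produces a circuit for $g^*$ of depth $k_F + 2$ and size $\poly(N, R)$. I do not anticipate any significant obstacle: the argument is a mechanical composition of reductions already in hand, and the only mild check is confirming that the auxiliary gadgets arising from the encoding $\phi$ contribute exactly two additional layers above the input variables.
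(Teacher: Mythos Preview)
Your proposal is correct and follows essentially the same route as the paper: chain Lemma~\ref{lem:phi} with Theorem~\ref{thm:sym-vs-prom} for the degree inequality, and for the circuit bound observe that each bit $b_i = \OR_N(y_{i,1},\dots,y_{i,N})$ fed into $F_R$ is a DNF of width $6\lceil\log(R+1)\rceil$ and size $\poly(N,R)$, adding exactly two layers beneath $F_R$.
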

\begin{proof}
Inequality \eqref{sighsighsigh} follows by combining Lemma \ref{lem:phi} and Theorem \ref{thm:sym-vs-prom}. 

We now turn to showing that if $F_R$ is computed by a circuit of small depth then $g^*$ is as well.
Consider an input $u=(u_1, \dots, u_N) \in \left(\bits^{6 \lceil \log(R+1) \rceil}\right)^N$, and recall that we associate 
$u$ with the vector $(\phi(u_1), \dots, \phi(u_N)) \in [R]_0^N$.
The output $g(u)$ is obtained by applying $F_R$ to a sequence of bits $(b_1, \dots, b_R)$,
where $b_i=-1$ if and only if there exists a $j \in [N]$ such that $\phi(u_j)=i$.

Since $\phi$ is a function on just $6 \lceil \log(R+1) \rceil$ bits, each bit $b_i$ is 
computed by a DNF $\mathcal{C}_i$ of width $6 \lceil \log(R+1) \rceil$, and hence size at most $O(N \cdot R^{12})$.

Hence, if $F_R$ is computed by a Boolean circuit $\mathcal{C}$ of size $S=\poly(R)$ and depth $k$, then by replacing each input $b_i$ to $\mathcal{C}$ with 
the DNF $\mathcal{C}_i$, 
one obtains a circuit $\mathcal{C}^*$ for $g^*$ of size at most $O(S \cdot N \cdot R^{12})=\poly(R, N)$ and depth $k+2$. 
%
\end{proof}

\subsubsection{Definition of $g$}
Even if $F_R$ is a DNF of polylogarithmic width, the function $g^*$ defined in Corollary \ref{cor:reduction}
may not be. However, it is not hard to see that if $F_R$ is a \emph{monotone} DNF of polylogarithmic width $w$,
then $g^*$ is a (non-monotone) DNF of width at most $O(w \cdot \log n)$. Indeed, in this case 
the circuit $\mathcal{C}^*$ for $g^*$ constructed in the proof of Corollary \ref{cor:reduction} is an $\OR-\AND-\OR-\AND$ circuit \emph{with all negations at the input level}.
Each $\AND$ gate in the second level from the top has fan-in at most $w$, and the bottom
 $\AND$ gates each have fan-in at most $w'=6 \lceil \log(R+1) \rceil$.
Any such circuit can be transformed into a (non-monotone) DNF of width at most $w \cdot w' = O(w \cdot \log R)$.

Unfortunately, this observation is still not enough for us to eventually obtain
our desired $n^{1-\delta}$ lower bounds for polylogarithmic width DNFs (cf. Theorem \ref{thm:dnf}).
To obtain such lower bounds, we need to recursively apply our hardness amplification methods, 
and hence we need the harder function $g$ to itself be a \emph{monotone} DNF.  Our definition of $g$
achieves this by applying a simple transformation to $g^*$. This transformation has appeared
in related contexts \cite[Proof of Lemma 3]{kothari}. 

\begin{definition} \label{def:h}
Fix $F_R \colon \bits^R \to \bits$, and let $g^* \colon \bits^{m} \to \bits$ be as in Definition \ref{def:g}. Let $\mathcal{C}^*$
be any circuit computing $g^*$ such that all negations in $\mathcal{C}_g$ appear at the inputs.
Let $g \colon \bits^{2m} \to \bits$ be the monotone function defined as follows.
Associate each of the first $m$ inputs to $g$ with an input to $g^*$, and each of the last $m$ inputs to $g$ with the negation of an input to $g^*$.
Then $g$ is obtained from $g^*$ by replacing each literal of $\mathcal{C}^*$
with the corresponding (unnegated) input to $g$.
\end{definition}

%

We now complete the proof of Theorem~\ref{thm:main-reduction}.

\begin{proof}[Proof of Theorem~\ref{thm:main-reduction}]
We begin by establishing Expression \eqref{sighsighsighh}. 
Let $p \colon \bits^{2m} \to \R$ be a degree $d$ polynomial approximating $g$
to error $\epsilon$. Then one can turn $p$ into a polynomial $q \colon \bits^{m} \to \bits$
of degree at most $d$
approximating $g^*$ to the same error by simply replacing each input to $p$ with the corresponding input (or its negation) to
$g^*$. It follows that $ \adeg_{\eps}(g) \ge \adeg_{\eps}(g^*)$. The inequality in Expression \eqref{sighsighsighh} follows from Corollary \ref{cor:reduction}.



Property~\eqref{eqn:reductioneasy} is immediate from Corollary~\ref{cor:reduction}, since the construction of Definition~\ref{def:h} does not change the circuit depth of $\mathcal{C}^*$.

The discussion preceding the statement of Definition \ref{def:h} revealed that, if $F_R$ is a monotone DNF of polylogarithmic width $w$,
then $g^*$ is a (non-monotone) DNF of width at most $O(w \cdot \log R)$. It is then immediate from the
definition of $g$ that $g$ is a monotone DNF of width $O(w \cdot \log R)$. This yields Property~\eqref{eqn:reductiondnfhard}.

By similar reasoning, if $F_R$ is computed by a monotone circuit of depth $k$, 
then $g^*$ is computed by a circuit of depth $k+2$ 
with $\AND$
gates at the bottom, and all negations at the inputs. 
It is then immediate from the
definition of $g$ that $g$ is computed by a monotone circuit of depth $k+2$ with 
$\AND$
gates at the bottom. This establishes Property~\eqref{eqn:reductionaczhard}, completing the proof.
\end{proof}

\section{Analyzing Block Composed Functions On Low Hamming Weight Inputs}
\label{sec:step2} \label{sec:next}
To complete the proof of Theorem \ref{thm:main}, we combine the following 
theorem with Theorem~\ref{thm:main-reduction}.

\begin{theorem}   \label{thm:main-amp}
Let $f_n : \bits^n \to \bits$ be any function. Let $N = c \cdot n \log^3 n$ for a sufficiently large constant $c>0$. Let $\promcomp\colon \{-1, 1\}^{10 \cdot N \cdot n\cdot \log n}_{\le N} \to \bits$ equal $f_n \circ \AND_{10\log n} \circ \OR_N$ restricted to inputs in $\{-1, 1\}^{10 \cdot N \cdot n\cdot \log n}_{\le N} = \{x \in \bits^{10 \cdot N \cdot n\cdot \log n} : |x| \le N\}$ (cf. Definition \ref{def:promcomp}). Then $\adeg(\promcomp) \ge n^{1/3} \cdot \adeg(f_n)^{2/3}.$
\end{theorem}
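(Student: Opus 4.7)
The plan is to produce, via the method of dual polynomials (Theorem~\ref{thm:dual}), a function $\Psi^\star$ of $\ell_1$-norm $1$, pure high degree at least $D := n^{1/3} d^{2/3}$ (where $d := \adeg(f_n)$), correlation at least $1/6$ with $\promcomp$, and support contained in $\bits^{n \cdot M \cdot N}_{\le N}$, where $M := 10\log n$. The construction follows the dual block framework of Section~\ref{sec:ls}: combine standard witnesses for $f_n$, $\AND_M$, and $\OR_N$, and then post-process to zero out the mass the combined witness places on inputs of Hamming weight exceeding~$N$, without spoiling pure high degree, $\ell_1$-norm, or correlation.

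\medskip\noindent\textbf{Witness construction.} I would take a dual witness $\psi_f$ certifying $\adeg(f_n) \ge d$, the amplifier $\Psi_{\AND} : \bits^M \to \R$ supplied by Proposition~\ref{prop:bt-amp}, and a tailored dual witness $\psi_{\OR} : \bits^N \to \R$ for $\OR_N$ satisfying: (i) $\|\psi_{\OR}\|_1 = 1$, $\langle \psi_{\OR}, \OR_N \rangle \ge 1/3$, and $\psi_{\OR}(1^N) \ge 0$ so that Proposition~\ref{prop:bt-amp} applies; (ii) $\psi_{\OR}$ is supported on inputs of Hamming weight at most $T$, with $T := \tilde{\Theta}(n^{2/3}/d^{2/3})$; (iii) $\sum_{|x|=t}|\psi_{\OR}(x)| \lesssim 1/(t+1)^2$ for every $t \ge 0$; and (iv) $\psi_{\OR}$ has pure high degree $\Omega(\sqrt{T})$. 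Such a $\psi_{\OR}$ can be obtained by starting from a Nisan--Szegedy-style witness for $\OR_T$ and smoothing it against hypergeometric kernels that enforce the quadratic decay (iii). I would then set
\begin{equation*}
\Psi_0 \; := \; \psi_f \ls (\Psi_{\AND} \ls \psi_{\OR}).
\end{equation*}
By Proposition~\ref{prop:bt-amp}, $\Psi_{\AND} \ls \psi_{\OR}$ is an $\ell_1$-unit witness for $\AND_M \circ \OR_N$ with correlation $1-(2/3)^M = 1 - 1/\poly(n)$. By Proposition~\ref{prop:sherstov-degree}, $\Psi_0$ correlates with $f_n \circ \AND_M \circ \OR_N$ to within $o(1)$ of $1/3$. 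Multiplicativity of pure high degree (Proposition~\ref{prop:ls}) and (iv) give $\mathrm{phd}(\Psi_0) \ge d \cdot \sqrt{T} \ge D$ for the chosen $T$.

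\medskip\noindent\textbf{Main obstacle: the tail bound.} The heart of the argument is to show
\begin{equation*}
\sum_{\,y \in \bits^{nMN},\ |y|>N} |\Psi_0(y)| \; \le \; n^{-\Omega(D)}.
\end{equation*}
Unfolding two levels of dual block composition, $|\Psi_0(y)|$ factorizes so that this sum is governed by the probability, under the product distribution $\mu := \prod_{i,j} |\psi_{\OR}(y_{i,j})|$ on $(\bits^N)^{nM}$, that $\sum_{i,j} |y_{i,j}| > N$. Properties (ii) and (iii) then drive the concentration argument sketched in the introduction: each coordinate $y_{i,j}$ has Hamming weight at most $T$ under $\mu$, so the event $|y|>N$ forces at least $K := N/T = \tilde{\Omega}(D)$ of the $nM$ coordinates to each have weight $\Omega(T)$. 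Property (iii) bounds the probability of a single coordinate reaching weight $\Omega(T)$ by $O(1/T)$, so
\begin{equation*}
\mu[\,|y|>N\,] \; \le \; \binom{nM}{K} \left(\frac{C}{T}\right)^{K} \; \le \; \left(\frac{Ce\, nM}{KT}\right)^{K} \; \le \; n^{-\Omega(D)},
\end{equation*}
where the last inequality uses $KT \ge N = cn\log^3 n \gg nM\cdot \polylog(n)$. Making this estimate rigorous in the presence of the $\psi_f$ and $\Psi_{\AND}$ sign factors, and showing that these outer factors do not undo the tail decay, is the most delicate technical step and is where I expect the main difficulty to lie.

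\medskip\noindent\textbf{Correction and finish.} With the tail bound in hand, I would invoke the construction of Razborov and Sherstov to produce a function $\psi_{\mathrm{corr}}$ on $\bits^{nMN}$ of pure high degree at least $D$ that agrees with $\Psi_0$ on every input of Hamming weight exceeding $N$ and satisfies $\|\psi_{\mathrm{corr}}\|_1 \le n^{-\Omega(D)}$. Setting $\Psi^\star := C \cdot (\Psi_0 - \psi_{\mathrm{corr}})$ with $C = 1+o(1)$ chosen so that $\|\Psi^\star\|_1 = 1$, the function $\Psi^\star$ has pure high degree $D$, is supported in $\bits^{nMN}_{\le N}$, and retains correlation strictly above $1/6$ with $\promcomp$ since subtraction and renormalization each cost only $o(1)$. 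Theorem~\ref{thm:dual} then certifies $\adeg(\promcomp) \ge D = n^{1/3} d^{2/3}$, completing the proof.
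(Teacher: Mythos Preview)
Your proposal follows the same overall architecture as the paper: build $\Psi_0 = \psi_f \ls \Psi_{\AND} \ls \psi_{\OR}$ via dual block composition, prove a tail bound showing $\Psi_0$ places negligible mass on inputs of Hamming weight exceeding $N$, then correct via Razborov--Sherstov. The parameters and the auxiliary properties you demand of $\psi_{\OR}$ (support on Hamming weight at most $T \approx (n/d)^{2/3}$, pure high degree $\Omega(\sqrt{T})$, quadratic layer decay) match the paper's Proposition~\ref{prop:or-dual} exactly.

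However, your tail-bound sketch has a real gap, and you have located the difficulty in the wrong place. The implication ``each $|y_{i,j}| \le T$ and $\sum_{i,j}|y_{i,j}| > N$ forces at least $K = N/T$ blocks to have weight $\Omega(T)$'' is simply false: the sum can exceed $N$ with every one of the $R = nM$ blocks having weight on the order of $N/R \approx \log^2 n$, which is far below $T$ (recall $T$ can be as large as $n^{2/3}$ when $d$ is small). Your single-scale union bound over $K$ heavy blocks therefore misses most of the bad event. The paper's Lemma~\ref{lem:combinatorial} handles this with a multi-scale decomposition: for each $s$ ranging from roughly $N/T$ up to $R$, it bounds the event that at least $s$ blocks each exceed $N/(s\,\polylog s)$, and sums over $s$; the quadratic decay condition is what makes each term of this sum decay fast enough. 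Conversely, the issue you flag as ``the most delicate technical step''---the sign factors from $\psi_f$ and $\Psi_{\AND}$---is in fact trivial. Writing $\Phi = \psi_f \ls \Psi_{\AND}$, one has $\|\Phi\|_1 = 1$ by Proposition~\ref{prop:ls}, so the tail sum factors as $\sum_{z \in \bits^R} |\Phi(z)|$ times a combinatorial quantity depending only on $\psi_{\OR}$, and the outer sum contributes exactly $1$.
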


The primary goal of this section is to prove Theorem~\ref{thm:main-amp}. Before embarking on this proof,
we use it to complete the proofs of Theorems \ref{thm:acz}-\ref{thm:main} from
Section \ref{sec:theorems}.

\medskip
\begin{proof}[Proof of Theorem \ref{thm:main}
assuming Theorem \ref{thm:main-amp}.]
We begin by establishing Property \eqref{propeasy} in the conclusion of
Theorem \ref{thm:main}.
Let $R=10 \cdot n \cdot  \log n$ and $F_R := f_n \circ \AND_{10 \cdot \log n}$.
Applying Corollary \ref{cor:reduction} to $F_R$
yields a function $g$  on $O(N \log R) = O(n \log^4 n)$ variables satisfying 
$$\adeg_{\eps}(g) \ge \adeg_{\eps}(\promcomp) \cdot \lceil \log(R+1) \rceil \geq \Omega\left( n^{1/3} \cdot \adeg(f_n)^{2/3} \cdot \log n\right),$$
where the final inequality holds by Theorem \ref{thm:main-amp}.
Suppose $f_n$ is computed by polynomial size Boolean circuit $\mathcal{C}$ of depth $k$. 
Then $F_R$ is computed by polynomial-size Boolean circuit of depth $k+1$, 
and Property~\eqref{eqn:reductioneasy} of Theorem~\ref{thm:main-reduction} guarantees that $g$ is computed by polynomial size Boolean circuit of depth $k+3$.
Hence, $g$ satisfies Property \eqref{propeasy} as desired.


To establish Property \eqref{propaczhard}, suppose that $f_n$ is computed by a monotone
circuit of depth $k$ with 
$\AND$ gates at the bottom. 
Then $F_R$ is computed by such a circuit as well.
Property~\eqref{eqn:reductionaczhard} of Theorem~\ref{thm:main-reduction} then implies that $g$ is computed by 
a circuit of depth $k+2$ with 
$\AND$ gates at the bottom. 

To establish Property \eqref{propdnfhard}, observe that if $f_n$ is computed
by a monotone DNF of width $w$, then $F_R$ is computed by a monotone DNF of width $O(w \cdot \log n)$,
and Property~\eqref{eqn:reductiondnfhard} of Theorem~\ref{thm:main-reduction} implies that $g$ is computed by a monotone DNF of width $O(w \cdot \log^2 n)$.

\end{proof}

\medskip
\begin{proof}[Proof of Theorems \ref{thm:acz} and \ref{thm:dnf}
assuming Theorem \ref{thm:main}]
One can almost obtain Theorems \ref{thm:acz} and \ref{thm:dnf}
by recursively applying Theorem \ref{thm:main}, starting in the base case with the function $\OR_n$. 
However, to obtain stronger degree lower bounds for a given circuit depth
or DNF width, we instead
use the following well-known result of Aaronson and Shi \cite{aaronsonshi} regarding
the approximate degree of (the negation of) the well-known
Element Distinctness function.

\begin{lemma}[Sherstov \cite{sherstov15}, refining Aaronson and Shi \cite{aaronsonshi}] \label{thm:ed}
There is a function $\overline{\ED} \colon \bits^n \to \bits$ such that
$\adeg(\overline{\ED}) = \Omega(n^{2/3} \log^{1/3} n)$. Moreover, $\overline{\ED}$ 
is computed by a monotone DNF of polynomial size and width $O(\log n)$. 
\end{lemma}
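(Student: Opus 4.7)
The plan is to combine the Aaronson--Shi symmetrization approach to $\overline{\ED}$ with Sherstov's quantitative sharpening, being careful about the input encoding so that the resulting function is a monotone DNF. First I would fix parameters: view the $n$ variables as specifying $N = \Theta(n / \log n)$ items drawn from a range of size $M = \Theta(N^2)$, with each item encoded in $O(\log n)$ bits. The function $\overline{\ED}$ outputs $-1$ iff two items collide. To present $\overline{\ED}$ as a monotone DNF of width $O(\log n)$, I would use a representation in which pairwise equality $s_i = s_j$ can be expressed as a monotone AND of $O(\log n)$ literals --- for instance via the literal-doubling trick of Definition~\ref{def:h}, treating the positive and negative parts of each item's bit encoding as independent input variables. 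Taking an $\OR$ over all $\binom{N}{2}$ pairs then yields a monotone DNF of size $\poly(n)$ and width $O(\log n)$, as required.

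For the approximate degree lower bound, the strategy is symmetrization. Suppose $p$ is a polynomial of degree $d$ approximating $\overline{\ED}$ to error $1/3$ pointwise. I would apply Minsky--Papert symmetrization (Lemma~\ref{lem:mp}) with respect to the group of permutations that simultaneously permutes the items and the range; this reduces $p$ to a polynomial $q$ of degree at most $d$ on the vector of ``type'' counts $z = (z_0, z_1, \dots)$, where $z_k$ records how many range values appear exactly $k$ times. The key observation of Aaronson--Shi is that $q$ must distinguish the ``all distinct'' type from any type containing a collision, and a further univariate reduction (analogous in spirit to the one carried out in Theorem~\ref{thm:sym-vs-prom}) combined with the standard Paturi/Markov inequalities yields $d = \Omega(N^{2/3}) = \Omega(n^{2/3})$.

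To obtain the additional $\log^{1/3} n$ factor, I would invoke Sherstov's refinement~\cite{sherstov15}, which uses a more careful Markov-type estimate applied to the short ``collision threshold region'' of length $\Theta(\log n)$ that arises after symmetrization; the improved bound on univariate polynomials that are bounded on this restricted interval gives exactly the additional $\log^{1/3} n$ factor. The main obstacle, in my view, is compatibility of the encoding with both steps: most natural monotone encodings either blow up the variable count (e.g.\ unary) or destroy the range-permutation symmetry that Minsky--Papert exploits. The literal-doubling encoding above preserves the symmetry, but one must verify that the symmetrized polynomial produced from the \emph{doubled} input still witnesses the sharper $\log^{1/3} n$ lower bound, rather than just the raw $n^{2/3}$ Aaronson--Shi bound.
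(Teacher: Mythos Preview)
Your overall strategy is workable but vastly more elaborate than what the paper actually does. The paper's proof is two sentences: it cites \cite[Theorem~3.3]{sherstov15} directly for the non-monotone version of the statement, and then applies the literal-doubling trick of Definition~\ref{def:h} to obtain a monotone DNF. There is no need to re-derive the Aaronson--Shi symmetrization argument or Sherstov's refinement from scratch.

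More importantly, the ``main obstacle'' you identify is not an obstacle at all. You worry that after literal-doubling one must redo the symmetrization on the $2n$-variable monotone function and check that the sharper $\log^{1/3} n$ factor survives. But the monotonization step is completely decoupled from the lower bound: if $p$ is a degree-$d$ polynomial that $\eps$-approximates the monotone function $g$ on $\bits^{2n}$, then substituting $x_{n+i} \mapsto -x_i$ yields a degree-$d$ polynomial that $\eps$-approximates the original non-monotone function $g^*$ on $\bits^n$. Hence $\adeg_\eps(g) \ge \adeg_\eps(g^*)$, and whatever lower bound Sherstov proves for the non-monotone version transfers immediately. This is exactly the argument in the proof of Theorem~\ref{thm:main-reduction}, and it requires no symmetry of the encoding whatsoever.

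A smaller point: your account of where the $\log^{1/3} n$ factor comes from does not match Sherstov's argument. It is not a sharper Markov estimate on a ``collision threshold region.'' Rather, it comes from the encoding amplification of Lemma~\ref{lem:phi}: the approximate degree of the bit-encoded function picks up a multiplicative $\lceil \log(R+1)\rceil$ factor over the ``item-level'' bound. With $N$ items from a range of size $R = \Theta(N)$ and $n = \Theta(N \log N)$ bits total, the Aaronson--Shi bound $\Omega(N^{2/3})$ becomes $\Omega(N^{2/3} \log N) = \Omega((n/\log n)^{2/3} \log n) = \Omega(n^{2/3}\log^{1/3} n)$.
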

\begin{proof}
If the word monotone were omitted from the conclusion, this would be a restatement of \cite[Theorem 3.3]{sherstov15}.
Using the same technique as in Definition \ref{def:h}, the non-monotone DNF constructed in \cite[Theorem 3.3]{sherstov15} can be transformed
into a monotone DNF satisfying the same properties.
\end{proof}

Lemma \ref{thm:ed} immediately implies Theorems \ref{thm:acz} and \ref{thm:dnf}
in the case $k=1$. 
Assume by way of induction that Theorem \ref{thm:acz} holds for an integer $k\geq 1$. That is, there exists a function $f^{(k)}$ on $n \cdot \log^{4k-4}(n)$ variables, computed by monotone circuit of depth $2k$  with 
$\AND$ gates at the bottom, with approximate degree $\Omega(n^{1-2^{k-1}/3^k} \cdot \log^{3 - 2^{k+2}/3^{k}}(n))$. 
By applying Theorem \ref{thm:main} to $f^{(k)}$, one obtains (by Property \eqref{propaczhard}) 
a function $f^{(k+1)}$ on $n \cdot \log^{4k}(n)$ variables, computed by a monotone circuit  of depth $2k+2$ with 
$\AND$ gates at the bottom, with approximate degree $\Omega(n^{1-2^{k}/3^{k+1}} \cdot \log^{3 - 2^{k+3}/3^{k+1}}(n))$.
The function $f^{(k+1)}$ satisfies the conclusion of Theorem \ref{thm:acz}, completing the inductive proof of Theorem \ref{thm:acz}.

Similarly, assume by way of induction that Theorem \ref{thm:dnf}
holds for an integer $k\geq 1$, for a DNF $f^{(k)}$. 
By applying Theorem \ref{thm:main} to $f^{(k)}$, one obtains (by Property \eqref{propdnfhard}) 
a function $f^{(k+1)}$
satisfying the conclusion of Theorem \ref{thm:dnf} for integer $k+1$.
\end{proof}

\subsection{Organization of the Proof of Theorem \ref{thm:main-amp}}

Our proof of Theorem~\ref{thm:main-amp} entails using a dual witness for the approximate degree of $f_n$ to construct a dual witness for the higher approximate degree of $\promcomp$. For expository purposes, we think about the construction of a dual witness for $\promcomp$ as consisting of four steps.

\paragraph{Step 1.} Let $d=\adeg(f_n)$. We begin by constructing a dual witness $\varphi$ for the $\Omega\left(\sqrt{k}\right)$-approximate degree of the $\OR_N$ function when restricted to inputs of Hamming weight at most $k = (n/d)^{2/3}$. This construction closely mirrors previous constructions of \v{S}palek~\cite{spalek} and Bun and Thaler~\cite{bt14}. However, we need $\varphi$ to satisfy an additional metric condition that is not guaranteed by these prior constructions. Specifically, we require that the total $\ell_1$ weight that $\varphi$ places on the $t$'th layer of the Hamming cube should be upper bounded by $O(1 /(t+1)^2)$.

\paragraph{Step 2.} We apply the error amplification construction of Proposition~\ref{prop:bt-amp} to transform $\varphi$ into a new dual polynomial $\psi$ that witnesses the fact that the $(1-\delta)$-approximate degree of the function $\AND_{10\log n} \circ \OR_N$ remains $\Omega(\sqrt{k})$, even with error parameter $\delta \le 1/N^2$.

\paragraph{Step 3.} We appeal to the degree amplification construction of Proposition~\ref{prop:sherstov-degree} to combine $\psi$ from Step 2 with a dual witness $\Psi$ for the high approximate degree of $f_n$. This yields a dual witness $\zeta$ showing that the approximate degree of the composed function $f_n \circ \AND_{10\log n} \circ \OR_N$ is $\Omega(d \cdot \sqrt{k}) = \Omega(n^{1/3}\cdot d^{2/3})$.

\paragraph{Step 4.} Using a construction of Razborov and Sherstov~\cite{sherstovrazborov}, we zero out the mass that $\zeta$ places on inputs of  Hamming weight larger than $N$, while maintaining its pure high degree and correlation with $\promcomp$. This yields the final desired dual witness $\hat{\zeta}$ for $\promcomp$.

\subsection{Step 1: A Dual Witness for $\OR_N$}

\begin{proposition} \label{prop:or-dual}
Let $k, N \in \N$ with $k \le N$. 
Then there exist a constant $c_1 \in (0, 1)$ and a function $\psi : \bits^N_{\le k} \to \bits$ such that:
\begin{equation}\langle \psi, \OR_N \rangle \ge 1/3  \label{eqn:or-corr}\end{equation}
\begin{equation}\|\psi\|_1 = 1   \label{eqn:or-norm}\end{equation}
\begin{equation} \text{For any polynomial } p \colon \bits^N \to \R\text{, } \deg p < c_1\sqrt{k} \implies \langle \psi, p \rangle = 0 \label{eqn:or-phd}\end{equation}
\begin{equation}\psi(1^N) > 0 \label{eqn:or-onesided}\end{equation}
\begin{equation}\sum_{|x| = t} |\psi(x)| \le 5/(t+1)^2 \qquad \forall t = 0, 1, \dots, k\label{eqn:or-decay}\end{equation}
\end{proposition}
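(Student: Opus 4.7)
The plan is to construct $\psi$ via a symmetric ansatz. Since $\OR_N$ and the domain $\bits^N_{\le k}$ are both invariant under permutations of coordinates, any dual witness can be symmetrized over $S_N$ without increasing its layer-$t$ weights, without decreasing its correlation with $\OR_N$, and while preserving both its pure high degree and its value at $1^N$. So I may assume $\psi(x) = \hat{\nu}(|x|)/\binom{N}{|x|}$ for some univariate signed function $\hat{\nu}: [k]_0 \to \R$. Minsky--Papert symmetrization (Lemma~\ref{lem:mp}) then translates \eqref{eqn:or-corr}--\eqref{eqn:or-decay} into conditions on $\hat{\nu}$: (i) $\hat{\nu}(0) - \sum_{t\ge 1} \hat{\nu}(t) \ge 1/3$; (ii) $\sum_t |\hat{\nu}(t)| = 1$; (iii) $\sum_t \hat{\nu}(t) q(t) = 0$ for every univariate polynomial $q$ of degree $< c_1\sqrt{k}$; (iv) $\hat{\nu}(0) > 0$; and the metric condition (v) $|\hat{\nu}(t)| \le 5/(t+1)^2$ for all $t$.

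Conditions (i)--(iv) alone say that $\hat{\nu}$ is a one-sided dual witness certifying $\Omega(\sqrt{k})$ approximate degree for the univariate step function $\OR(0) = +1$, $\OR(t) = -1$ on $[k]_0$. Following \v{S}palek and Bun--Thaler, such $\hat{\nu}$ is produced by applying LP duality to the extremal Chebyshev polynomial $T_d$ (for $d = \lfloor c_1\sqrt{k}\rfloor$) transplanted to $[0, k]$. The resulting $\hat{\nu}$ is supported on the $\Theta(\sqrt{k})$ transplanted extrema $x_j \asymp j^2$ with alternating signs, and its nonzero values are computable via Lagrange-interpolation type formulas.

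The main novelty is condition (v). I plan to exploit the quadratic clustering $x_j \asymp j^2$ near the origin: using the identity $|\hat{\nu}(x_j)| \propto 1/|T_d'(x_j)|$ together with the Jacobian of the transplantation $t = (k/2)(1-\cos\theta)$, a direct calculation gives a bound of order $|\hat{\nu}(x_j)| = O(1/j^2) = O(1/(x_j+1))$ for the unmodified construction. To strengthen this to $O(1/(x_j+1)^2)$ I would multiply the primal polynomial by an auxiliary low-degree factor that further concentrates the dual mass near $0$ (equivalently, shift to a Jacobi-type weight on $[0, k]$ that penalizes mass away from the origin). This reduces the usable degree by only a constant factor, so (iii) is preserved with a smaller constant $c_1 > 0$; (i) and (ii) are restored by rescaling; and (iv) follows from the unchanged sign pattern of the extrema, which is positive at $x_0 = 0$.

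The main obstacle I anticipate is the tension between (iii) and (v): orthogonality to degree $\Omega(\sqrt{k})$ polynomials forces $\hat{\nu}$ to oscillate in sign across its $\Theta(\sqrt{k})$ support points, while (v) sharply caps the amplitude of these oscillations as $t$ grows. Verifying that the Jacobi-weighted modification simultaneously achieves orthogonality of degree $\Omega(\sqrt{k})$ and the pointwise $1/(t+1)^2$ decay, while preserving one-sidedness (iv), is the technical heart of the argument, and I would carry out the detailed weight estimates using standard tools from the theory of orthogonal polynomials.
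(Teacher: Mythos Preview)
Your reduction to a univariate witness $\hat{\nu}$ via symmetrization is exactly the paper's first step. Where you diverge is in the construction of $\hat{\nu}$ itself. You propose a Chebyshev-based dual (extrema at $x_j \asymp j^2$) followed by a Jacobi-weight modification to push the layer decay from $O(1/t)$ down to $O(1/t^2)$, invoking orthogonal-polynomial machinery for the estimates. The paper instead writes the witness down explicitly: with $c=25$ and $T=\{0,1,2\}\cup\{cj^2:1\le j\le m\}$ for $m=\lfloor\sqrt{k/c}\rfloor$, it sets
\[
\omega(t)=\frac{(-1)^{t+(k-m)}}{k!}\binom{k}{t}\prod_{r\in[k]_0\setminus T}(t-r),
\]
so that $|\omega(t)|=\prod_{r\in T\setminus\{t\}}|t-r|^{-1}$ on $T$ and $0$ elsewhere. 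Pure high degree $|T|-1=\Omega(\sqrt{k})$ follows from the finite-difference identity $\sum_t(-1)^t\binom{k}{t}p(t)=0$, and the decay $|\omega(cj^2)|\lesssim 1/(cj^2)^2$ drops out of a direct product calculation---no Jacobi weights or auxiliary factors needed. The extra support points $\{1,2\}$ are there to make the correlation and one-sidedness work out. Both routes exploit the same structural idea (quadratically spaced support yields $\Omega(\sqrt{k})$ nodes with fast-decaying Lagrange weights), but the paper's version is entirely elementary and self-contained, whereas your Jacobi modification is plausible but left as a sketch; carrying it out would require care to preserve the sign at $0$ and the correlation bound after reweighting.
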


For intuition, we mention that Properties~\eqref{eqn:or-corr}-\eqref{eqn:or-onesided} amount to a dual formulation of the fact that the ``one-sided'' approximate degree  of $\OR_N$ is $\Omega(\sqrt{k})$, even under the promise that the input has Hamming weight at most $k$.\footnote{One-sided approximate degree is a variant of approximate degree defined in, e.g., \cite{bt14}. We will not need the primal formulation of approximate degree in this work, and therefore omit a formal definition of this notion.} Property~\eqref{eqn:or-decay} is an additional metric condition that we require later in the proof.

The key to proving Proposition~\ref{prop:or-dual} is the following explicit construction of a univariate function from first principles. The construction closely follows previous work of \v{S}palek~\cite{spalek} and Bun and Thaler~\cite{bt14}, and appears in Appendix~\ref{app:or-dual}.

\begin{lemma} \label{lem:or-sym-dual} \label{lemma:omega}
Let $k \in \N$. There exists a constant $c_1 \in (0, 1) $ and a function $\omega : \{0, 1, \dots, k\} \to \R$ such that
\begin{equation}\omega(0) - \sum_{t = 1}^k \omega(t) \ge 1/3  \label{eqn:or-sym-corr}\end{equation}
\begin{equation} \sum_{t=0}^k |\omega(t)| = 1   \label{eqn:or-sym-norm}\end{equation}
\begin{equation}  \text{For all univariate polynomials } q \colon \R \to \R\text{, } \deg q < c_1\sqrt{k} \implies \sum_{t=0}^k \omega(t) \cdot q(t) = 0 \label{eqn:or-sym-phd}\end{equation}
\begin{equation} \omega(0) > 0 \label{eqn:or-sym-onesided}\end{equation}
\begin{equation} \omega(t) \le 5 / (t+1)^2  \qquad \forall t = 0, 1, \dots, k\label{eqn:or-sym-decay}\end{equation}
\end{lemma}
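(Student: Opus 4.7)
The approach is to write down an explicit formula for $\omega$ and then verify each of the five properties. Properties \eqref{eqn:or-sym-corr}--\eqref{eqn:or-sym-onesided} are essentially equivalent to a dual formulation of the one-sided approximate degree lower bound $\adeg(\OR_N) = \Omega(\sqrt{k})$ under the low Hamming weight promise, and standard dual witnesses due to \v{S}palek~\cite{spalek} and to Bun--Thaler~\cite{bt14} already give all of these except the new metric condition \eqref{eqn:or-sym-decay}. The plan is to adapt one of those constructions so that the additional pointwise decay is also obtained.

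Concretely, setting $d := \lceil c_1 \sqrt{k} \rceil$ for a small absolute constant $c_1 \in (0,1)$, I would define
$$\omega(t) \;:=\; \frac{1}{Z} \cdot (-1)^t \cdot r(t) \cdot w(t), \qquad t \in \{0, 1, \dots, k\},$$
where $r(t)$ is a polynomial of degree $d - 1$ whose roots are placed at Chebyshev-type nodes in $\{1, \dots, k\}$ (for instance $t = j^2$ for $j = 1, \dots, d-1$), $w(t) > 0$ is a rapidly decaying positive weight (for instance a multiple of $1/(t+1)^2$, or a product of $\binom{k}{t}^{-1}$ with a similar decaying factor), and $Z := \sum_{t=0}^k |\omega(t)|$ normalizes the $\ell_1$-norm. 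The specific purpose of each piece is: $(-1)^t w(t)$ supplies orthogonality against sufficiently low-degree polynomials (by a finite-differences/residue calculation in the spirit of \cite{spalek}); $r(t)$ boosts the pure high degree by an extra $d-1$ on top of that; and the decay of $w(t)$ ultimately furnishes the bound in \eqref{eqn:or-sym-decay}.

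The verification would then proceed as follows. Property \eqref{eqn:or-sym-norm} is by construction, and \eqref{eqn:or-sym-onesided} follows since $r(0)$ and $w(0)$ are both positive. For \eqref{eqn:or-sym-phd}, the argument is that for any $q$ with $\deg q < d$ the product $r \cdot q$ is a polynomial of degree less than $2d$, and the underlying signed measure $(-1)^t w(t)/Z$ annihilates all polynomials of degree below some threshold that is at least $2d$; the latter is the standard pure-high-degree argument for alternating weights, proved via partial fractions or by expanding in the basis of falling factorials. Property \eqref{eqn:or-sym-corr} then reduces to $\omega(0) \ge 2/3$, which in turn follows once the $\ell_1$-mass of $\omega$ on $\{1, \dots, k\}$ is shown to be at most $1/3$ of the total; this is essentially the content of the decay bound \eqref{eqn:or-sym-decay}, so the two estimates will naturally be established together.

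The main obstacle, and the one genuinely new technical step, is proving the pointwise decay \eqref{eqn:or-sym-decay} with the specific constant $5$. With $\omega$ in the form above, this boils down to bounding $|r(t)| \cdot w(t) / Z$ uniformly on each Hamming layer. The factor $|r(t)|$ can be controlled by a Chebyshev-type extremal estimate (yielding $|r(t)| = O(1)$ on the range of interest), and $w(t)$ is chosen so that its decay dominates. The delicate point is to show that the normalization $Z$ is bounded below by an absolute constant, so that $1/Z$ does not blow up the decay constant beyond $5$; this will require a direct estimate of $\sum_{t} |\omega(t)|$ exploiting the Chebyshev-node structure of $r$. This quantitative matching of constants is the place where the most care is needed, and is what distinguishes the present lemma from its analogues in the prior work of \v{S}palek and Bun--Thaler.
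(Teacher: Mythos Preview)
There are two genuine gaps. First, your correlation strategy is incompatible with pure high degree. You propose to deduce \eqref{eqn:or-sym-corr} by showing that $\sum_{t\ge 1}|\omega(t)|\le 1/3$, whence $\omega(0)\ge 2/3$. But pure high degree at least $1$ forces $\sum_{t}\omega(t)=0$, so $\omega(0)=-\sum_{t\ge 1}\omega(t)$ and hence $\omega(0)\le\sum_{t\ge 1}|\omega(t)|$; together with $\|\omega\|_1=1$ this gives $\omega(0)\le 1/2$, a contradiction. The correlation therefore cannot come from concentration of mass at $0$; it must come from the \emph{signs} of $\omega$ on $\{1,\dots,k\}$. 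In the paper's construction a large negative value $\omega(1)$ (in fact $|\omega(1)|\ge 2\,\omega(0)$) supplies most of the correlation with $\OR$.

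Second, the decay mechanism you describe is not the right one, and your proposed weights do not yield the claimed orthogonality. The finite-differences identity underlying the pure high degree requires the weight $\binom{k}{t}$, which is not decaying; with $w(t)\sim 1/(t+1)^2$ or $w(t)\sim\binom{k}{t}^{-1}$ the signed measure $(-1)^t w(t)$ fails to annihilate even the constant polynomial (for generic $k$). The paper instead takes $\omega(t)\propto(-1)^t\binom{k}{t}\prod_{r\in[k]_0\setminus T}(t-r)$ for a sparse set $T=\{0,1,2\}\cup\{cj^2:1\le j\le m\}$; this polynomial factor kills $\omega$ off $T$, and on $T$ one obtains the closed form $|\omega(t)|=\prod_{r\in T\setminus\{t\}}|t-r|^{-1}$. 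The $1/(t+1)^2$ decay then emerges from the quadratic spacing of $T$: at $t=cj^2$ the denominator is of order $j^4$ times a bounded factor. Relatedly, your claim that a degree-$(d-1)$ polynomial $r$ with roots at $1,4,\dots,(d-1)^2$ satisfies $|r(t)|=O(1)$ on $[0,k]$ is false; such a polynomial is of order $k^{d-1}$ near $t=k$.
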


\begin{proof}[Proof of Proposition~\ref{prop:or-dual}]
Let $\omega$ be the function guaranteed by Lemma~\ref{lem:or-sym-dual}. Consider the function $\psi : \bits^N_{\le k} \to \bits$ defined by
\[\psi(x) = \frac{1}{\binom{N}{|x|}} \cdot \omega(|x|).\]
That $\psi$ satisfies Conditions \eqref{eqn:or-corr},  \eqref{eqn:or-norm},  \eqref{eqn:or-onesided}, and  \eqref{eqn:or-decay} is immediate from the definition of $\psi$ and Properties \eqref{eqn:or-sym-corr},  \eqref{eqn:or-sym-norm},  \eqref{eqn:or-sym-onesided}, and  \eqref{eqn:or-sym-decay} of $\omega$. Property  \eqref{eqn:or-phd} is a consequence of Minsky-Papert symmetrization. Specifically, for any polynomial $p \colon \bits^N \to \R$, Lemma \ref{lem:mp}
implies that there is a univariate polynomial $q$
of degree at most $\deg(p)$ such that for all $t \in [N]_0$, we have $q(t) = {N \choose t}^{-1} \sum_{x \in \bits^N \colon |x|=t} p(x)$. 
Hence, $\sum_{x \in \bits^N} \psi(x) \cdot p(x) = \sum_{t=0}^N \omega(t) \cdot q(t) = 0$,
where the final equality holds by Property \eqref{eqn:or-sym-phd}.
\end{proof}

\subsection{Steps 2 and 3: A Preliminary Dual Witness for $G = f_n \circ \AND_{10 \log n} \circ \OR_N$} \label{sec:prelim-dual}
Recall that our ultimate goal in this section is to construct a dual witness for the veracity of Theorem \ref{thm:main-amp}. Here, we begin by defining a preliminary dual witness 
$\zeta$. While $\zeta$ itself is insufficient to witness the veracity of Theorem \ref{thm:main-amp},
we will ultimately ``post-process'' $\zeta$ into the desired dual witness $\hat{\zeta}$.
\label{sec:parametersetting} 
We start by fixing choices of several key parameters:
\begin{itemize}
\item $d=\adeg_{2/3}(f_n)$.
\item $k = \lfloor(n/d)^{1/3}\rfloor^2$
\item $D = c_1 \sqrt{k} \cdot d =O(n^{1/3} \cdot d^{2/3})$, where $c_1$ is the constant from Lemma~\ref{lem:or-sym-dual}
\item $R = 10 n \log n$
\item $N = \lceil c_2 R \log^2 R\rceil$, where $c_2$ is a universal constant to be determined later (cf. Proposition \ref{prop:32})
\item $m = R \cdot N$
\end{itemize}
To state our construction of a preliminary dual witness $\zeta$, we begin with the following objects:

\begin{itemize}
\item A dual witness $\varphi : \bits^n \to \R$ for the fact that $\adeg_{2/3}(f_n) \ge d$. By Theorem \ref{thm:dual},
$\varphi$ satisfies the following conditions.
\begin{equation}\langle \varphi, f_n\rangle \ge 2/3    \label{eqn:f-corr} \end{equation}
\begin{equation}\|\varphi\|_1 = 1 \label{eqn:f-norm} \end{equation}
\begin{equation} \text{For any polynomial } p \colon \bits^n \to \R\text{, }\deg p < d \implies \langle \varphi, p \rangle = 0 \label{eqn:f-phd} \end{equation}
\item The function $\Psi : \bits^{10 \log n} \to \R$ whose existence is guaranteed by Proposition~\ref{prop:bt-amp}.
\item The dual witness $\psi : \bits^N \to \R$ for $\OR_N$ guaranteed by Proposition~\ref{prop:or-dual}, using the choice of the parameter $k$ above.
\end{itemize}

We apply dual block composition sequentially to the three dual witnesses to obtain a function $\zeta = \varphi \ls \Psi \ls \psi$. This function is well-defined because dual block composition is associative (Proposition~\ref{prop:ls}).

\begin{proposition}   \label{prop:prelim-basic}
The dual witness $\zeta = \varphi \ls \Psi \ls \psi$ satisfies the following properties:
\begin{equation}\langle \zeta, G \rangle \ge 1/2  \label{eqn:prelim-corr}\end{equation}
\begin{equation}\|\zeta\|_1 = 1   \label{eqn:prelim-norm}\end{equation}
\begin{equation}\text{For all polynomials } p \colon ((\bits^N)^{10 \log n})^n \to \R\text{, }\deg p < D \implies \langle \zeta, p \rangle = 0. \label{eqn:prelim-phd}\end{equation}
\end{proposition}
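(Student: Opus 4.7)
The plan is to verify each of the three claimed properties of $\zeta = \varphi \ls \Psi \ls \psi$ by unwinding the dual block product using associativity (Eq.~\eqref{eqn:ls-assoc}) and then invoking the corresponding property from Proposition~\ref{prop:ls}, Proposition~\ref{prop:bt-amp}, and Proposition~\ref{prop:sherstov-degree}, all of which were set up precisely for this sequence of combinations. I would consistently parenthesize as $\zeta = \varphi \ls (\Psi \ls \psi)$, so that the inner factor $\Psi \ls \psi$ is treated as a single dual witness for the block composed function $H := \AND_{10 \log n} \circ \OR_N$, while the outer composition with $\varphi$ handles the outer $f_n$.

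For the $\ell_1$-norm claim \eqref{eqn:prelim-norm}, I would first apply Eq.~\eqref{eqn:bt-norm} of Proposition~\ref{prop:bt-amp} to conclude $\|\Psi \ls \psi\|_1 = 1$, which is valid because $\psi$ satisfies $\|\psi\|_1 = 1$, $\langle \psi, \OR_N \rangle \ge 1/3$, and the one-sided positivity condition $\psi(1^N) > 0$ (giving $\psi(x) \ge 0$ whenever $\OR_N(x) = 1$), all supplied by Proposition~\ref{prop:or-dual}. A second application of Eq.~\eqref{eqn:ls-norm} of Proposition~\ref{prop:ls} to $\varphi$ and $\Psi \ls \psi$ then yields $\|\zeta\|_1 = 1$. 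For the pure high degree claim \eqref{eqn:prelim-phd}, I would apply the multiplicativity property \eqref{eqn:ls-phd} twice: $\psi$ has pure high degree at least $c_1 \sqrt{k}$ by \eqref{eqn:or-phd}, while $\Psi$ has pure high degree at least $1$ by \eqref{eqn:bt-phd}, so $\Psi \ls \psi$ has pure high degree at least $c_1 \sqrt{k}$; then $\varphi$ has pure high degree at least $d$ by \eqref{eqn:f-phd}, giving $\zeta$ pure high degree at least $d \cdot c_1 \sqrt{k} = D$.

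The correlation claim \eqref{eqn:prelim-corr} is the only substantive step, and I would prove it in two stages. First, invoke Eq.~\eqref{eqn:bt-corr} of Proposition~\ref{prop:bt-amp} to obtain
\[
\langle \Psi \ls \psi, H \rangle \;\ge\; 1 - (2/3)^{10 \log n} \;=:\; 1 - \delta,
\]
where $\delta = n^{-10\log_2(3/2)} = n^{-\Omega(1)}$. Second, apply Proposition~\ref{prop:sherstov-degree} with $F = f_n$ (so $M = n$) and $f = H$, playing the role of its ``$\Psi$'' with $\varphi$ (correlation $\ge 2/3$ with $f_n$, by \eqref{eqn:f-corr}) and the role of its ``$\psi$'' with $\Psi \ls \psi$, to conclude
\[
\langle \zeta, G \rangle \;\ge\; 2/3 - 4n\delta \;\ge\; 1/2
\]
for $n$ sufficiently large.

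The main thing to keep an eye on is the quantitative budget in this last step: the factor-$4M = 4n$ loss in Proposition~\ref{prop:sherstov-degree} scales with the number of outer variables, and the only reason it does not destroy the $1/2$ bound is the exponential error amplification of Proposition~\ref{prop:bt-amp}, which drives $\delta$ down to $n^{-\Omega(1)}$ thanks to the $10 \log n$ copies of $\AND$ layered between $f_n$ and $\OR_N$. This choice of $10 \log n$ is precisely what buys the needed slack. Beyond verifying this, the proof is essentially a mechanical assembly of the dual-witness machinery developed in Section~\ref{sec:prelims}, together with the structural features of $\psi$ from Proposition~\ref{prop:or-dual}.
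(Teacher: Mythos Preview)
Your proposal is correct and follows essentially the same approach as the paper: parenthesize as $\zeta = \varphi \ls (\Psi \ls \psi)$, first establish the norm, pure high degree, and amplified correlation of $\Psi \ls \psi$ via Propositions~\ref{prop:bt-amp} and~\ref{prop:ls}, and then propagate these through the outer composition with $\varphi$ using Propositions~\ref{prop:ls} and~\ref{prop:sherstov-degree}. The paper records the inner correlation bound as $1 - 1/(24n)$ rather than your $1 - (2/3)^{10\log n}$, but your bound is stronger and the arithmetic for \eqref{eqn:prelim-corr} goes through identically.
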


\begin{proof}
It is easiest to reason about these properties by regarding $\zeta$ as $\varphi \ls (\Psi \ls \psi)$. To this end, let $\xi : (\bits^N)^{10 \log n} \to \R$ denote $\Psi \ls \psi$. Then $\xi$ satisfies the following properties:
\begin{equation}\langle \xi, \AND_{10 \log n} \circ \OR_N \rangle \ge 1 - \frac{1}{24n}  \label{eqn:amp-corr}\end{equation}
\begin{equation}\|\xi\|_1 = 1  \label{eqn:amp-norm}\end{equation}
\begin{equation}\deg p < c_1\sqrt{k} \implies \langle \xi, p \rangle = 0. \label{eqn:amp-phd}\end{equation}

Property~\eqref{eqn:amp-corr} follows from Expression~\eqref{eqn:bt-corr} of Proposition~\ref{prop:bt-amp}, together with Properties~\eqref{eqn:or-corr}, \eqref{eqn:or-norm}, and~\eqref{eqn:or-onesided} of the dual witness $\psi$ for $\OR_N$. Property~\eqref{eqn:amp-norm} follows from Property~\eqref{eqn:ls-norm} of dual block composition (cf. Proposition~\ref{prop:ls}), and the fact that both $\Psi$ and $\psi$ have unit $\ell_1$-norm (Equations~\eqref{eqn:bt-norm} and~\eqref{eqn:or-norm}). Finally, Property~\eqref{eqn:amp-phd} is a consequence of Property~\eqref{eqn:ls-phd} of dual block composition  (cf. Proposition~\ref{prop:ls}), together with Properties~\eqref{eqn:bt-phd} and~\eqref{eqn:or-phd}, which state that $\Psi$ and $\psi$ have pure high degree at least $1$ and $c_1 \sqrt{k}$, respectively.

We now verify Properties~\eqref{eqn:prelim-corr}-\eqref{eqn:prelim-phd} of $\zeta = \varphi \ls \xi$. Property~\eqref{eqn:prelim-corr} follows from~Proposition~\ref{prop:sherstov-degree}, together with Properties~\eqref{eqn:f-corr} and~\eqref{eqn:f-norm} of $\varphi$ and Properties~\eqref{eqn:amp-corr} and~\eqref{eqn:amp-norm} of $\xi$. Property~\eqref{eqn:prelim-norm} follows from Property~\eqref{eqn:ls-norm} of dual block composition (cf. Proposition~\ref{prop:ls}), and the fact that both $\varphi$ and $\xi$ have unit $\ell_1$-norm (Equations~\eqref{eqn:f-norm} and~\eqref{eqn:amp-norm}). Finally, Property~\eqref{eqn:prelim-phd} follows from Property~\eqref{eqn:ls-phd} (cf. Proposition~\ref{prop:ls}) of dual block composition, together with Properties~\eqref{eqn:f-phd} and~\eqref{eqn:amp-phd} of the pure high degrees of $\varphi$ and $\xi$, respectively.
\end{proof}

\subsection{Step 4: Constructing the Final Dual Witness} \label{sec:final-dual}

For a fixed number $N \in \mathbb{N}$, let 
$X = \bits^{N\cdot 10\log n \cdot n}_{\le N} = \{x \in ((\bits^N)^{10\log n})^n : |x| \le N\}.$
Recall that this set $X$ is the same one that appears in Definition \ref{def:promcomp} when applied to the function $F_R := f_n \circ \AND_{10 \log n}$ on $R=10 n \log n$ variables.

\begin{proposition} \label{prop:prelim-mass} \label{prop:32}
Let $\zeta : ((\bits^N)^{10\log n})^n \to \R$ be as constructed in Proposition \ref{prop:prelim-basic}. Then there exists a constant $c_2 > 0 $ such that, for $N = \lceil c_2 R \log^2 R \rceil$ and sufficiently large $n$,
\begin{equation}
\sum_{x \notin X} |\zeta(x)| \le (2NR)^{-2 R / k} \le (2NR)^{-2D}. \label{eqn:prelim-mass}
\end{equation}
\end{proposition}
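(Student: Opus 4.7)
\emph{Proof plan.} The plan is to reduce the sum $\sum_{|y|>N}|\zeta(y)|$ to the tail probability of a product-type distribution on Hamming weights, and then apply a heavy-tail estimate based on the decay property~\eqref{eqn:or-decay}.

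I would first apply associativity of dual block composition (Proposition~\ref{prop:ls}) to write $\zeta = (\varphi \ls \Psi) \ls \psi$, which unfolds as
\[
|\zeta(y_1, \ldots, y_R)| \;=\; 2^R \cdot |\varphi \ls \Psi|(\sgn \psi(y_1), \ldots, \sgn \psi(y_R)) \cdot \prod_{i=1}^R |\psi(y_i)|.
\]
The key structural observation is that, by Proposition~\ref{prop:or-dual}, $\psi(y) = \omega(|y|)/\binom{N}{|y|}$, so both $|\psi(y_i)|$ and $\sgn \psi(y_i)$ are functions of the single variable $|y_i|$. Hence $|\zeta(y)|$ is constant on each Hamming-weight orbit $\{y : (|y_1|, \ldots, |y_R|) = (t_1, \ldots, t_R)\}$, and summing over the orbit produces a factor $\prod_i \binom{N}{t_i}$ that exactly cancels the binomial denominators in $\psi$. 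This reduces the quantity of interest to
\[
\sum_{|y|>N}|\zeta(y)| \;=\; \sum_{\substack{t \in \{0, 1, \dots, k\}^R \\ t_1 + \cdots + t_R > N}} 2^R \, |\varphi \ls \Psi|(\sigma(t)) \prod_{i=1}^R |\omega(t_i)|,
\]
where $\sigma(t)_i := \sgn \omega(t_i)$. By Property~\eqref{eqn:prelim-norm}, the nonnegative summand defines a probability distribution $\mu$ on $\{0, 1, \dots, k\}^R$, and the quantity we wish to bound is exactly $\Pr_\mu[T_1 + \cdots + T_R > N]$.

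Next I would condition on the sign pattern $s = \sigma(t) \in \bits^R$: under $\mu$ the coordinates $T_i$ become conditionally independent given $s$, with marginals proportional to $|\omega(t)| \mathbf{1}[\sgn \omega(t) = s_i]$. Property~\eqref{eqn:or-sym-decay} bounds these marginals by $O(1/(t+1)^2)$ up to an absolute constant depending on $a^+ := \sum_{\omega(t)>0}|\omega(t)|$ and $a^- := \sum_{\omega(t)<0}|\omega(t)|$; one verifies from the construction of $\omega$ in Lemma~\ref{lemma:omega} that these constants can be controlled and absorbed into $c_2$. Hence it suffices to bound $\Pr[T_1 + \cdots + T_R > N]$ when the $T_i$'s are independent with a heavy-tailed marginal satisfying $\Pr[T_i \ge t] = O(1/t)$ and $T_i \le k$.

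Finally I would use a single-big-jump-style bound. Since each $T_i \le k$, the event $\{\sum T_i > N\}$ forces at least $m := \lceil N/k \rceil$ coordinates to exceed $N/m \asymp k$ simultaneously. The probability that any specified $m$ coordinates each exceed $N/m$ is at most $(O(m/N))^m$, so union-bounding over the $\binom{R}{m} \le (eR/m)^m$ choices gives $(O(R/N))^m = (O(1/(c_2 \log^2 R)))^{N/k}$; summing over $m \ge \lceil N/k \rceil$ costs only a geometric factor. A direct comparison of logarithms shows that, for $N = \lceil c_2 R \log^2 R \rceil$ and $c_2$ a sufficiently large absolute constant, this bound is at most $(2NR)^{-2R/k}$, as required. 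The principal obstacle I would expect is managing the outer factor $2^R |\varphi \ls \Psi|(s)$, whose pointwise $\ell_\infty$ norm is a priori unbounded even though it averages to $1$; the Hamming-weight symmetry of $\psi$ is precisely what lets us replace this factor with the clean probability distribution $\mu$ and reduce to a standard heavy-tailed tail bound.
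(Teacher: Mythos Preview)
Your overall strategy matches the paper's: regroup $\zeta$ as $(\varphi\ls\Psi)\ls\psi$, use the Hamming-weight symmetry of $\psi$ to collapse the sum over $y$ to a sum over weight vectors $(t_1,\dots,t_R)\in\{0,\dots,k\}^R$, and then control the tail $\{t_1+\cdots+t_R>N\}$ using the decay bound $|\omega(t)|\le 5/(t+1)^2$. The paper isolates this last step as a separate combinatorial lemma (Lemma~\ref{lem:combinatorial}), but the reduction you sketch is the same.

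The gap is in your ``single-big-jump'' step. You assert that $\sum_i T_i>N$ with $T_i\le k$ forces at least $m=\lceil N/k\rceil$ coordinates to exceed $N/m$, and then union-bound over $m$. This pigeonhole is false: take $R\approx N$ and $T_i\approx 1$ for every $i$; then $\sum_i T_i>N$ but no coordinate exceeds $N/m$ for \emph{any} $m\le R$. More generally, ``$\sum_i T_i>N$ implies $\exists m$ with $T_{(m)}\ge N/m$'' fails because $\sum_{m\le R} N/m\approx N\ln R\gg N$, so the harmonic envelope leaves plenty of room. Consequently the union $\bigcup_m A_m$ does not cover the event $\{\sum_i T_i>N\}$, and your inequality $\Pr[\sum_i T_i>N]\le\sum_m\Pr[A_m]$ is unjustified.

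The paper's Lemma~\ref{lem:combinatorial} repairs exactly this point: it shows that if $\sum_i t_i>N$ then for some $s\ge \lfloor N/(2k)\rfloor$ at least $s$ coordinates exceed $N/(2Cs\log^2(2s))$, where $C=\sum_{s\ge1}1/(s\log^2(2s))<\infty$. The extra $\log^2 s$ in the threshold is precisely what makes the covering work (it turns the divergent harmonic sum into a convergent one), and it is the reason the final choice of $N$ carries a $\log^2 R$ factor. With this corrected threshold your union-bound computation goes through essentially as written and yields the stated $(2NR)^{-2R/k}$ bound.
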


\begin{proof}
For the proof of Proposition~\ref{prop:prelim-mass}, it is now useful to regard the dual witness $\zeta$ as the iterated dual block composition $(\varphi \ls \Psi) \ls \psi$. In this proof, let us denote $\Phi := \varphi \ls \Psi$. Then $\Phi : \bits^R \to \R$ where $R = 10 n \log n$.

Write $\psi$ as a difference of non-negative functions $\psi_{+1} - \psi_{-1}$. Since $\psi$ has strictly positive pure high degree, 
it is in particular orthogonal to the constant function $\mathbf{1}_N$, and hence $\|\psi_{+1}\|_1 = \|\psi_{-1}\|_1 = 1/2$. Recalling that $\psi(x) = \omega(|x|) / \binom{N}{|x|}$ where $\omega \colon [k]_{0} \to \R$ is given in Lemma \ref{lemma:omega}, we may analogously write $\omega = \omega_{+1} - \omega_{-1}$  where $\omega_{+1}$ and $\omega_{-1}$ are non-negative
functions satisfying
\begin{equation} \label{eq:sweeteq} \sum_{t=0}^k \omega_{+1}(t) = \sum_{t=0}^k \omega_{-1}(t) = 1/2. \end{equation}

By the definition of dual block composition, we have
\[\zeta(x_1, \dots, x_R) = 2^R \cdot \Phi(\dots, \sgn\left(\psi(x_i)\right), \dots) \cdot \prod_{i = 1}^R |\psi(x_i)|.\]
Consequently,
\begin{align}
\notag \sum_{x \notin X} |\zeta(x)| &= 2^R \sum_{z \in \bits^R} |\Phi(z)| \left( \sum_{\substack{(x_1, \dots, x_R) \notin X \text{ s.t.} \\ \notag \sgn\left( \psi(x_1)\right) = z_1, \dots, \sgn\left(\psi(x_R)\right) = z_R}} \prod_{i = 1}^R |\psi(x_i)| \right) \\
\notag &= 2^R \sum_{z \in \bits^R} |\Phi(z)| \left( \sum_{(x_1, \dots, x_R) \notin X} \prod_{i = 1}^R \psi_{z_i}(x_i) \right)\\
&= 2^R \sum_{z \in \bits^R} |\Phi(z)| \left( \sum_{(x_1, \dots, x_R) \notin X} \prod_{i = 1}^R \frac{\omega_{z_i}(|x_i|)}{{N \choose |x_i|}} \right). \label{finallineman}
\end{align}

Observe that for any $(t_1, \dots, t_R) \in [k]_0^R$, the number of inputs $(x_1, \dots, x_R) \in \left(\bits^N\right)^R$
such that $|x_i| = t_i$ for all $i \in [R]$ is exactly $\prod_{i=1}^{R} {N \choose t_i}$. 
Hence, defining $$P = \{(t_1, \dots, t_R) \in [k]_0^R : t_1 + \dots + t_R > N\},$$ we
may rewrite Expression
\eqref{finallineman} as
\[2^R \sum_{z \in \bits^R} |\Phi(z)| \left( \sum_{(t_1, \dots, t_R) \in P} \prod_{i = 1}^R \omega_{z_i}(t_i) \right).\]
To control this quantity, we appeal to the following combinatorial lemma, whose proof we defer to Section~\ref{sec:combinatorial}.

\begin{restatable}{lemma}{combinatorial} \label{lem:combinatorial}
Let $k, R \in \N$ with $k \le N$. There is a constant $\alpha > 0$ such that the following holds. Let $N = \lceil \alpha R \log^2 R \rceil$. Let $\eta_i :[k]_0 \to \R$, for $i = 1, \dots R$, be a sequence of non-negative functions where for every $i$,
\begin{equation}\sum_{r = 0}^k \eta_i(r) \le 1 / 2     \label{eqn:omega-bound} \end{equation}
\begin{equation} \eta_i(r) \le 5/(r+1)^2  \qquad \forall r = 0, 1, \dots, k.     \label{eqn:omega-decay} \end{equation}
Let $P = \{\vec{t} = (t_1, \dots, t_R) \in [k]_0^R : t_1 + \dots + t_R > N\}$. Then
\[\sum_{\vec{t} \in P} \prod_{i = 1}^R  \eta_i(t_i) \le 2^{-R} \cdot (2NR)^{-2 R/k}.\]
\end{restatable}

Observe that the functions $\omega_{z_i}$ satisfy Condition \eqref{eqn:omega-bound} (cf. Equation \eqref{eq:sweeteq}) and Condition \eqref{eqn:omega-decay} (cf. Property~\eqref{eqn:or-sym-decay}). We complete the proof of Proposition~\ref{prop:prelim-mass} by letting $c_2$ equal the constant $\alpha$ appearing in the statement of Lemma~\ref{lem:combinatorial}, and bounding
\begin{align*}
2^R \sum_{z \in \bits^R} |\Phi(z)| \left( \sum_{\vec{t} \in P} \prod_{i = 1}^R \omega_{z_i}(t_i) \right) &\le 2^R \sum_{z \in \bits^R} |\Phi(z)| \cdot \left(2^{-R} \cdot (2NR)^{-2R /k}\right)\\
&=(2NR)^{-2 R / k} \leq (2NR)^{-2D}.
\end{align*}
Here, the equality appeals to the fact that $\|\Phi\|_1 = 1$ (by Property \eqref{eqn:ls-norm} of Proposition \ref{prop:ls}), and the last inequality holds for sufficiently large $n$ by virtue of the fact that $R/k = \Theta(n^{1/3} d^{2/3} \log n)$,
while $D=O(n^{1/3} d^{2/3})$ for the values of $R$ and $D$ specified at the start of Section \ref{sec:parametersetting}.
\end{proof}

We are now in a position to construct our final dual witness for the high approximate degree of $\promcomp$. This dual witness $\hat{\zeta}$ is obtained by modifying $\zeta$ to zero out all of the mass it places on inputs of total Hamming weight larger than $N$. This zeroing process is done in a careful way so as not to decrease the pure high degree of $\zeta$, nor to significantly affect its correlation with $\promcomp$. The technical tool that enables this process is a construction of Razborov and Sherstov~\cite{sherstovrazborov}.

\begin{lemma}[cf. {\cite[Proof of Lemma 3.2]{sherstovrazborov}}]\label{lem:rs}
Let $D, m \in \N$ with $0 \le D \le m - 1$. Then for every $y \in \bits^m$ with $|y| > D$, there exists a function $\phi_y: \bits^m \to \mathbb{R}$ such that
\begin{equation}\phi_y(y) = 1	\label{eqn:rs-point}\end{equation}
\begin{equation}|x| > D, x \ne y \implies \phi_y(x) = 0	\label{eqn:rs-hamming}\end{equation}
\begin{equation}\deg p < D \implies \langle \phi_y, p \rangle = 0 \label{eqn:rs-phd} \end{equation}
\begin{equation}\sum_{|x| \le D} |\phi_y(x)| \le 2^D {|y| \choose D}.   \label{eqn:rs-norm} \end{equation}
\end{lemma}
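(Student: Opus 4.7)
The plan is to establish existence of $\phi_y$ via a classical linear-algebraic argument and then bound its $\ell_1$ norm via LP duality together with a Markov-type growth inequality for low-degree multilinear polynomials on the Boolean cube.

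By the symmetry of Properties \eqref{eqn:rs-point}--\eqref{eqn:rs-norm} under permutations of the coordinates of $\bits^m$, I would assume without loss of generality that the coordinates at which $y_i = -1$ are $\{1, \dots, t\}$, where $t := |y| > D$. For existence, I would invoke the classical fact that a multilinear polynomial on $\bits^m$ of degree less than $D$ is uniquely determined by its restriction to $\bits^m_{\le D}$; equivalently, the monomials $\{\prod_{i \in S} x_i : |S| < D\}$ are linearly independent as functions on $\bits^m_{\le D}$. This implies that the linear map sending a function supported on $\bits^m_{\le D} \cup \{y\}$ to its sequence of inner products with these monomials is surjective, so the constraints \eqref{eqn:rs-point}, \eqref{eqn:rs-hamming}, and \eqref{eqn:rs-phd} form a consistent linear system; pick any solution.

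To obtain \eqref{eqn:rs-norm}, I would then choose $\phi_y$ to minimize $\sum_{x \in \bits^m_{\le D}} |\phi_y(x)|$ subject to these constraints. This is a finite-dimensional linear program, and by strong LP duality its optimal value equals
\[\max\bigl\{\,|p(y)| \,:\, \deg p < D,\ |p(x)| \le 1 \text{ for all } x \in \bits^m_{\le D}\,\bigr\}.\]
Thus Property \eqref{eqn:rs-norm} reduces to the Markov-type inequality: every multilinear $p$ of degree less than $D$ bounded by $1$ in absolute value on $\bits^m_{\le D}$ satisfies $|p(y)| \le 2^D \binom{t}{D}$.

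For this Markov-type inequality, I would reduce to the univariate case. Setting the coordinates $i > t$ of $p$ to $+1$ yields a multilinear polynomial $\tilde{p} \colon \bits^t \to \R$ of degree less than $D$ with $|\tilde{p}(x)| \le 1$ for every $x \in \bits^t_{\le D}$ (since the corresponding extension to $\bits^m$ by $+1$'s lies in $\bits^m_{\le D}$) and $\tilde{p}(-1^t) = p(y)$. Averaging $\tilde{p}$ over permutations of its coordinates and applying Minsky--Papert symmetrization (Lemma~\ref{lem:mp}) produces a univariate polynomial $q$ of degree less than $D$ with $|q(s)| \le 1$ for every $s \in \{0, 1, \dots, D\}$ and $q(t) = \tilde{p}(-1^t) = p(y)$. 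A Lagrange interpolation argument at $D$ nodes chosen from $\{0, 1, \dots, D\}$, using the identities $\prod_{j=0}^{D-1}(t-j) = D!\binom{t}{D}$ and $\sum_{i=0}^{D-1}\binom{D-1}{i} = 2^{D-1}$, then gives $|q(t)| \le 2^D \binom{t}{D}$.

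The main technical obstacle is obtaining the Markov-type inequality with the constant $2^D\binom{t}{D}$: the reductions via coordinate-fixing and symmetrization are routine, but careful bookkeeping of the Lagrange coefficients is required to achieve the $2^D$ factor rather than a weaker one. Fortunately, for the application in Section~\ref{sec:final-dual}, any bound of the form $c^D \binom{t}{D}$ for a universal constant $c$ would suffice, since Proposition~\ref{prop:prelim-mass} provides the counterbalancing factor $(2NR)^{-2D}$ on the mass of $\zeta$ outside $X$.
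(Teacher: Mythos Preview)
The paper does not prove this lemma itself; it is quoted from Razborov and Sherstov. Your LP-duality-plus-Markov approach is valid, but it differs from theirs, which is an explicit construction: after reducing (as you also do) to $m=t:=|y|$ and $y=-1^{t}$, one takes $\phi_y$ symmetric, $\phi_y(x)=\omega(|x|)$, with $\omega(t)=1$, $\omega(s)=0$ for $D<s<t$, and $\omega(s)$ for $0\le s\le D$ determined by Lagrange interpolation at the nodes $0,1,\dots,D$. The norm bound then becomes the direct computation
\[
\sum_{s=0}^{D}\binom{t}{s}|\omega(s)|
=\sum_{s=0}^{D}\Bigl|\prod_{j\ne s}\tfrac{t-j}{s-j}\Bigr|
=\binom{t}{D}\sum_{s=0}^{D}\binom{D}{s}\tfrac{t-D}{t-s}
\le 2^{D}\binom{t}{D},
\]
since $\frac{t-D}{t-s}\le 1$ for $s\le D$. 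Your dual route lands on essentially the same Lagrange sum from the other side, so the two arguments converge; the explicit construction has the advantage of producing a concrete $\phi_y$, which is relevant for the secret-sharing application in Section~\ref{sec:apps}. One caveat on your sketch: the identities you cite point to interpolation at nodes $0,\dots,D-1$, and that choice does \emph{not} yield the factor $2^{D}$ cleanly (the Lagrange coefficient picks up a factor $D/(t-i)$, which can be as large as $D/2$ when $t=D+1$); interpolating at $1,\dots,D$ instead, or at all of $0,\dots,D$ as in the explicit construction, does. As you note, this is immaterial for the downstream application.
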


\begin{proposition} \label{prop:correction}
There exists a function $\nu : ((\bits^N)^{10 \log n})^n \to \R$ such that
\begin{equation}    \text{For all polynomials } p \colon ((\bits^N)^{10 \log n})^n  \to \R\text{, } \deg p < D \implies \langle \nu, p \rangle = 0     \label{eqn:correction-phd}  \end{equation}
\begin{equation}    \|\nu\|_1 \le 1/10     \label{eqn:correction-norm}  \end{equation}
\begin{equation}    |x| > N \implies \nu(x) = \zeta(x),     \label{eqn:correction-correct}  \end{equation}
where $\zeta$ is as in Proposition \ref{prop:prelim-basic}.
\end{proposition}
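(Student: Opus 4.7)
The plan is to build $\nu$ as a linear combination of the Razborov--Sherstov ``zeroing'' functions from Lemma~\ref{lem:rs}, one for each ``bad'' input $y$ with $|y|>N$, weighted by $\zeta(y)$. Concretely, for each $y$ with $|y|>N$, invoke Lemma~\ref{lem:rs} on the domain $((\bits^N)^{10\log n})^n$ with its degree parameter set to the value $D$ from Section~\ref{sec:parametersetting}. Since $D=O(n^{1/3}d^{2/3})$ and $N=\Theta(R\log^2 R)$ with $R=10 n\log n$, we have $D<N<|y|$ for large $n$, so Lemma~\ref{lem:rs} applies and yields $\phi_y$ satisfying \eqref{eqn:rs-point}--\eqref{eqn:rs-norm}. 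Define
\[
\nu(x) \;:=\; \sum_{y:\,|y|>N} \zeta(y)\,\phi_y(x).
\]

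Next I would verify the three conclusions in turn. Property~\eqref{eqn:correction-phd} (pure high degree $D$) follows immediately by linearity from property~\eqref{eqn:rs-phd} of each $\phi_y$. For property~\eqref{eqn:correction-correct}, fix any $x$ with $|x|>N$; then $|x|>D$, so by \eqref{eqn:rs-point}--\eqref{eqn:rs-hamming}, $\phi_y(x)=0$ for every $y\neq x$ with $|y|>N$, while $\phi_x(x)=1$, giving $\nu(x)=\zeta(x)$ as required.

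The main work, and the step I expect to be the only delicate calculation, is the $\ell_1$ bound~\eqref{eqn:correction-norm}. By the triangle inequality,
\[
\|\nu\|_1 \;\le\; \sum_{|y|>N} |\zeta(y)|\cdot \|\phi_y\|_1.
\]
For each $y$, split $\|\phi_y\|_1$ according to Hamming weight: the high-weight part contributes exactly $1$ (by \eqref{eqn:rs-point}--\eqref{eqn:rs-hamming}, only $x=y$ is nonzero there), and the low-weight part is at most $2^D\binom{|y|}{D}\le (2em/D)^D$ by~\eqref{eqn:rs-norm} and the standard binomial estimate, where $m=10 n N\log n$ is the total number of variables. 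Thus $\|\phi_y\|_1 \le 1+(2em/D)^D$ uniformly in $y$. Combining this with the decay bound~\eqref{eqn:prelim-mass} from Proposition~\ref{prop:prelim-mass},
\[
\|\nu\|_1 \;\le\; (2NR)^{-2D}\Bigl(1+(2em/D)^D\Bigr).
\]
Using $m=NR\cdot 10\log n$, one checks that $(2NR)^{-2D}(2em/D)^D$ is of the form $(O(\log n)/(NRD))^D$, which is much smaller than $1/10$ for sufficiently large $n$. This is precisely why the exponent $2D$ (rather than just $D$) was demanded in Proposition~\ref{prop:prelim-mass}: one factor of $(2NR)^{-D}$ absorbs the blowup $(2em/D)^D$ coming from the $\phi_y$'s, and the second leaves room to drive the whole expression below any fixed constant. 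The only obstacle is bookkeeping: making sure the parameters line up (in particular $D<N$ and $m=\Theta(NR\log n)$) so that the exponential decay wins, which is guaranteed by the parameter choices in Section~\ref{sec:parametersetting}.
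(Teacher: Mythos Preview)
Your proposal is correct and follows essentially the same approach as the paper: define $\nu(x)=\sum_{|y|>N}\zeta(y)\phi_y(x)$ using the Razborov--Sherstov functions, derive \eqref{eqn:correction-phd} and \eqref{eqn:correction-correct} directly from the properties of $\phi_y$, and bound $\|\nu\|_1$ by combining \eqref{eqn:rs-norm} with the mass bound \eqref{eqn:prelim-mass}. One small slip: you write $m=NR\cdot 10\log n$, but in fact $m=NR$ (since $R=10n\log n$ already); this only makes your estimate looser, so the conclusion is unaffected.
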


\begin{proof}
Define
\[\nu(x) = \sum_{y : |y| > N} \zeta(y) \phi_y(x),\]
where $\phi_y$ is as in Lemma~\ref{lem:rs} with $m$ and $D$ set as at the beginning of Section~\ref{sec:prelim-dual}. Property~\eqref{eqn:correction-phd} follows immediately from Property~\eqref{eqn:rs-phd} and linearity. By Proposition~\ref{prop:prelim-mass} and Property~\eqref{eqn:rs-norm}, we have
\begin{align*}
\|\nu\|_1 &\le \sum_{y : |y| > N} |\zeta(y)| \cdot 2^D \cdot {|y|\choose D} \\
&\le (2NR)^{-2D} \cdot 2^D \cdot m^D \\
&\le (2m)^{-2D} \cdot (2m)^D \\
&\le 1/10,
\end{align*}
establishing Property~\eqref{eqn:correction-norm}. Finally, Property~\eqref{eqn:correction-correct} follows from~\eqref{eqn:rs-point} and~\eqref{eqn:rs-hamming}, together with the fact that $D < N$.
\end{proof}

Combining Proposition~\ref{prop:correction} with Proposition~\ref{prop:prelim-basic} allows us to complete the proof of Theorem~\ref{thm:main-amp}, which was the goal of this section.

\begin{proof}[Proof of Theorem~\ref{thm:main-amp}]
Let $\zeta = \varphi \ls \Psi \ls \psi$ be as defined in Section~\ref{sec:prelim-dual}, and let $\nu$ be the correction object constructed in Proposition~\ref{prop:correction}. Observe that $\|\zeta - \nu\|_1 > 0$, as
$\|\zeta\|_1 = 1$ (cf. Equality \eqref{eqn:prelim-norm}) and $\|\nu\|_1 \leq 1/10$ (cf. Inequality \eqref{eqn:correction-norm}).
Define the function
\[\hat{\zeta}(x) = \frac{\zeta(x) - \nu(x)}{\|\zeta - \nu\|_1}.\]
Since $\nu(x) = \zeta(x)$ whenever $|x| > N$ (cf. Equation~\eqref{eqn:correction-correct}), the function $\hat{\zeta}$ is supported on the set $X$. By Theorem \ref{thm:dual}, to show that it is a dual witness for the high approximate degree of $\promcomp$, it suffices to show that $\hat{\zeta}$ satisfies the following three properties:
\begin{equation} \langle \hat{\zeta}, \promcomp \rangle \ge 1/3 \label{eqn:final-corr} \end{equation}
\begin{equation} \| \hat{\zeta} \|_1 = 1 \label{eqn:final-norm} \end{equation}
\begin{equation}  \text{For all polynomials } p \colon ((\bits^N)^{10 \log n})^n  \to \R\text{, } \deg p < D \implies \langle \hat{\zeta}, p\rangle = 0. \label{eqn:final-phd} \end{equation}
We establish~\eqref{eqn:final-corr} by computing
\begin{align*}
\langle \hat{\zeta}, \promcomp \rangle &= \frac{1}{ \|\zeta - \nu\|_1} \langle \zeta - \nu, \promcomp \rangle \\
&= \frac{1}{ \|\zeta - \nu\|_1}  \langle \zeta - \nu, G \rangle & \text{since $\zeta = \nu$ outside $X$} \\
&= \frac{1}{ \|\zeta - \nu\|_1}  \left( \langle \zeta, G \rangle - \langle \nu, G \rangle\right) & \\
&\ge \frac{1}{ \|\zeta - \nu\|_1} \left( \langle \zeta, G \rangle - \|\nu\|_1 \right) &\\ 
&\ge \frac{1}{ \|\zeta - \nu\|_1} \left( 1/2 - 1/10 \right)  & \text{by~\eqref{eqn:prelim-corr} and~\eqref{eqn:correction-norm}}\\ 
&\ge \frac{1}{ \|\zeta\|_1 + \|\nu\|_1} \left( 1/2 - 1/10 \right)  &\\ 
&\ge \frac{1}{1 + 1/10} \left( 1/2 - 1/10 \right) & \text{by~\eqref{eqn:prelim-norm} and~\eqref{eqn:correction-norm}} \\
&\ge \frac{1}{3}.
\end{align*}
Equation~\eqref{eqn:final-norm} is immediate from the definition of $\hat{\zeta}$. Finally,~\eqref{eqn:final-phd} follows from~\eqref{eqn:prelim-phd}, \eqref{eqn:correction-phd}, and linearity.

\end{proof}

\subsection{Proof of Lemma~\ref{lem:combinatorial}}   \label{sec:combinatorial}

All that remains to complete the proof of Theorem~\ref{thm:main-amp} is to establish the deferred combinatorial lemma from Section~\ref{sec:final-dual}. We begin by stating a two simple lemmas.

\begin{lemma} \label{lem:binomial}
Let $k, n \in \N$ with $k \le n$. Then 
$\binom{n}{k} \le \left(\frac{en}{k}\right)^k$.
\end{lemma}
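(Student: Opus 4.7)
The plan is to prove this standard inequality in two short steps. First, I would bound $\binom{n}{k}$ by $n^k/k!$, using the fact that
\[
\binom{n}{k} = \frac{n(n-1)\cdots(n-k+1)}{k!} \le \frac{n^k}{k!}.
\]
Second, I would lower bound $k!$ by $(k/e)^k$. The cleanest way to get this bound is to use the Taylor series for the exponential function: since all terms are nonnegative,
\[
e^k = \sum_{i=0}^{\infty} \frac{k^i}{i!} \ge \frac{k^k}{k!},
\]
which rearranges to $k! \ge (k/e)^k$. Combining the two inequalities gives
\[
\binom{n}{k} \le \frac{n^k}{(k/e)^k} = \left(\frac{en}{k}\right)^k,
\]
as desired.

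There is no real obstacle here; the only minor choice is whether to derive $k! \ge (k/e)^k$ via the Taylor series argument above, via Stirling's approximation, or via an induction on $k$. The Taylor series derivation is the shortest and self-contained, so that is what I would use. The whole proof should take only a few lines.
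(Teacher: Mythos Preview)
Your proof is correct. The paper does not actually supply a proof of this lemma; it is simply stated as one of ``two simple lemmas'' and used without justification, so your short derivation via $\binom{n}{k}\le n^k/k!$ together with $k!\ge (k/e)^k$ from the Taylor series of $e^k$ is entirely appropriate and fills in the omitted argument.
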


\begin{lemma} \label{lem:inv-sq-tail}
Let $m \in \N$. Then
\[\sum_{r = m}^\infty r^{-2} \le \frac{2}{m}.\]
\end{lemma}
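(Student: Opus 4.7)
The plan is to bound the tail sum by comparing it to an improper integral of the monotonically decreasing function $x \mapsto x^{-2}$ on $[m, \infty)$. Specifically, since $r^{-2} \leq x^{-2}$ for all $x \in [r-1, r]$ and $r \geq 2$, one can also pull out the first term and bound the remaining summands by integrals. I would write
\[
\sum_{r=m}^{\infty} r^{-2} \;=\; m^{-2} + \sum_{r=m+1}^{\infty} r^{-2} \;\leq\; m^{-2} + \int_{m}^{\infty} x^{-2}\, dx.
\]
The integral evaluates to $1/m$, so the tail is at most $m^{-2} + 1/m$.

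To finish, I would observe that $m^{-2} \leq 1/m$ whenever $m \geq 1$, which yields the desired bound
\[
\sum_{r=m}^{\infty} r^{-2} \;\leq\; \frac{1}{m^2} + \frac{1}{m} \;\leq\; \frac{2}{m}.
\]
There are no obstacles here, just a one-line integral comparison; an equally clean alternative is the telescoping bound $r^{-2} \leq 1/(r(r-1)) = 1/(r-1) - 1/r$ valid for $r \geq 2$, which telescopes to $1/(m-1)$ for $m \geq 2$ and is at most $2/m$ in that range, with the case $m=1$ handled directly using $\sum_{r \geq 1} r^{-2} = \pi^2/6 < 2$. I will present the integral version since it handles all $m \geq 1$ uniformly without case analysis.
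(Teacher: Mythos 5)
Your integral-comparison argument is correct: splitting off the $r=m$ term and bounding the tail by $\int_m^\infty x^{-2}\,dx = 1/m$, then using $m^{-2} \le 1/m$ for $m \ge 1$, gives the bound $2/m$ for all $m \in \N$ with no case analysis. The paper takes a slightly different, purely discrete route: it bounds $r^{-2} \le \frac{2}{r(r+1)}$ (valid since $r+1 \le 2r$ for $r \ge 1$) and telescopes $2\sum_{r \ge m}\bigl(\frac{1}{r} - \frac{1}{r+1}\bigr) = \frac{2}{m}$, hitting the constant $2/m$ exactly in one line without calculus or splitting off the first term. Note that the paper's telescoping shifts the denominator \emph{up} to $r(r+1)$ rather than down to $r(r-1)$ as in your alternative, which is precisely what avoids the $m=1$ special case and the extra comparison $1/(m-1) \le 2/m$ that your telescoping variant requires. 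Your integral version buys uniformity over $m$ at the cost of invoking a (standard) integral estimate; the paper's version is marginally more self-contained and sharper in presentation, but both are complete and yield the same bound.
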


\begin{proof}
We calculate
\[\sum_{r = m}^\infty r^{-2} \le \sum_{r = m}^\infty \frac{2}{r(r+1)} = 2 \sum_{r = m}^\infty \left(\frac{1}{r} - \frac{1}{r+1}\right) = \frac{2}{m}.\]
\end{proof}

We are now ready to prove Lemma~\ref{lem:combinatorial}, which we restate below for the reader's convenience.


\combinatorial*

\begin{proof}
Define a universal constant
\[C = \sum_{s = 1}^\infty \frac{1}{s \log^2 (2s)}.\]
Note that $C < \infty$ by, say, the Cauchy condensation test. 

We begin with a simple, but important, structural observation about the set $P$. Let $t = (t_1, \dots, t_R) \in [k]_0^R$ be a sequence such that $t_1 + \dots + t_R > N$. Let $M = \lfloor \frac{N}{2k} \rfloor$. Then  we claim that there exists an $s \in \{M \dots, R\}$ such that $t_i \ge N / (2C s \log^2 (2s))$ for at least $s$ indices $i \in [R]$. To see this, assume without loss of generality that the entries of $\vec{t}$ are sorted so that $t_1 \ge t_2 \ge \dots \ge t_R$. Then there must exist an $s \geq M$ such that $t_s \ge N /(2Cs \log^2 (2s))$. Otherwise, because no $t_i$ can exceed $k$, we would have:
\[t_1 + \dots + t_R  < M \cdot k + \sum_{s = 1}^\infty \frac{N}{2C s \log^2 (2s)} \le \frac{N}{2} + \frac{N}{2C} \cdot \sum_{s = 1}^\infty\frac{1}{s \log^2 (2s)} = \frac{N}{2}+ \frac{N}{2} = N.\]
Since the entries of $\vec{t}$ are sorted, the preceding values $t_1, \dots, t_{s-1} \geq N / (2Cs \log^2 (2s))$ as well.

For each subset $S \subseteq [R]$, define
\[P_S = \{\vec{t} \in P : t_i \ge N / \left(2C |S| \log^2 (2|S|)\right)\text{ for all indices }i \in S\}.\]
The observations above guarantee that for every $\vec{t}=(t_1, \dots, t_R) \in P$, 
there exists some set $S$ of size at least $s \in \{M, \dots, R\}$ such that 
$t_i \geq N / (2Cs \log^2 (2s))$ for all $i \in S$. Hence,
\begin{align*}
\sum_{\vec{t} \in P} \prod_{i = 1}^R \eta_i(t_i)  & \leq \sum_{s = M}^{R} \sum_{S \subseteq [R] : |S| = s} \sum_{\vec{t} \in P_S} \prod_{i = 1}^R \eta_i(t_i) \\
& \leq \sum_{s = M}^{R} \binom{R}{s} \left( \sum_{r = \lceil N / \left(2Cs\log^2 (2s)\right)\rceil}^k \eta_i(r) \right)^s \left( \sum_{r = 0}^k \eta_i(r) \right)^{R-s} \\
&\leq  2^{-R}\sum_{s = M}^{R} \binom{R}{s} \left( \sum_{r = \lceil N / \left(2Cs \log^2 (2s)\right) \rceil}^k 10 (r+1)^{-2} \right)^s & \hspace{-18mm} \text{by Properties } \eqref{eqn:omega-bound} \text{ and } \eqref{eqn:omega-decay}\\
&\le 2^{-R} \sum_{s = M}^{R} \left( \frac{Re}{s} \right)^s \left( \frac{40 C s\log^2 (2s)}{N} \right)^s & \hspace{-18mm} \text{by Lemmas \ref{lem:binomial} and \ref{lem:inv-sq-tail}}\\
&\le 2^{-R}  \sum_{s = M}^{R} 4^{-s} &\hspace{-18mm}\text{setting } N = \lceil (160 C e) \cdot R \log^2(2R) \rceil  \\
&\le 2^{-R} \cdot 2^{-M}.
\end{align*}

The claim follows as long as $M = \lfloor \frac{N}{2k} \rfloor \ge 2 \log (2NR) \cdot R / k$, which is true for the setting of $N$ chosen above.
\end{proof}


\section{Applications}
\label{sec:applications}
\label{sec:apps}
\subsection{Approximate Rank and Quantum Communication Complexity of AC$^0$}
For a matrix $F \in \bits^{N \times N}$, 
the $\eps$-approximate rank of $F$, denoted $\text{rank}_{\eps}(F)$, is the least rank
of a matrix $A \in \R^{N \times N}$ such that $|A_{ij} - F_{ij}| \leq \eps$ for all $(i, j) \in [N] \times [N]$.
Sherstov's pattern matrix method \cite{patmat} allows one to translate approximate degree lower bounds
into approximate rank lower bounds in a black-box manner.
Moreover, the logarithm 
of the approximate rank of a communication matrix is known to lower bound its quantum communication complexity, even when prior
entanglement is allowed \cite{ls09rank}. 
By combining the pattern matrix method
with Theorems \ref{thm:acz} and \ref{thm:dnf}, we obtain the following corollary.

\begin{corollary}
For any constant $\delta>0$, there is an $\operatorname{AC}^0$ function $F \colon \bits^n \times \bits^n \to \bits$ such that $[F(x, y)]_{x, y}$
has approximate rank $\text{rank}_{1/3}(F) \geq \exp(n^{1-\delta})$. 
Similarly, there is a DNF $F \colon \bits^n \times \bits^n \to \bits$ of width $\polylog(n)$ (and quasipolynomial size)
such that $[F(x, y)]_{x, y}$
has approximate rank at least $\exp(n^{1-\delta})$. 
Moreover, the quantum communication complexity of $F$ (with arbitrary prior entanglement), denoted $Q^*_{1/3}(F)$, is $\Omega(n^{1-\delta})$. 
\end{corollary}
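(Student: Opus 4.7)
The plan is to derive the corollary in a black-box manner by plugging the approximate-degree lower bounds of Theorems~\ref{thm:acz} and~\ref{thm:dnf} into Sherstov's pattern matrix method~\cite{patmat}, and then invoking the approximate-rank lower bound on quantum communication complexity due to Lee and Shraibman~\cite{ls09rank}.

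Concretely, I would start with the AC$^0$ circuit $\mathcal{C}$ on $n$ variables from Theorem~\ref{thm:acz} satisfying $\adeg(\mathcal{C}) = \Omega(n^{1-\delta})$ (and analogously the quasipolynomial-size DNF of polylogarithmic width from Theorem~\ref{thm:dnf}). The pattern matrix method associates to any $f : \bits^n \to \bits$ a two-party function $F(x,y)$ whose inputs encode a choice of $n$ locations and signs from a larger universe; Alice's portion $x$ and Bob's portion $y$ are combined by a fixed, easily computable selection-and-XOR operation $\sigma(x,y) \in \bits^n$, and $F(x,y) := f(\sigma(x,y))$. The pattern matrix theorem then guarantees
\[
\log \mathrm{rank}_{1/3}(F) \;\geq\; \Omega\!\left(\adeg(f)\right),
\]
and in fact yields $\mathrm{rank}_{1/3}(F) \geq \exp(\Omega(\adeg(f)))$, which is exactly the approximate-rank lower bound we need.

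The second step is to verify that $F$ itself lies in AC$^0$ (respectively, is a polylogarithmic-width DNF of quasipolynomial size). This is where the mild bookkeeping happens: the preprocessing $\sigma$ is computable by a single layer of ANDs/ORs of fan-in $O(\log n)$ over the literals of $x$ and $y$, so composing $\mathcal{C}$ with $\sigma$ increases the depth by a constant and blows up the size polynomially. For the DNF case, one uses the standard observation that a polylog-width DNF precomposed with $\sigma$ remains a polylog-width DNF of quasipolynomial size, after distributing the $O(\log n)$-fan-in selection gadget into each term. Thus the communication function $F$ inherits the circuit class of the original hard function, establishing the first two claims of the corollary (with the same $\delta$, up to adjusting the constant absorbed into the exponent).

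The final step is to convert the approximate-rank lower bound into a quantum communication complexity lower bound. By the theorem of Lee and Shraibman~\cite{ls09rank}, for any sign matrix $M$,
\[
Q^*_{1/3}(M) \;=\; \Omega(\log \mathrm{rank}_{1/3}(M)),
\]
even in the model with arbitrary shared entanglement. Applied to $F$, this immediately gives $Q^*_{1/3}(F) = \Omega(n^{1-\delta})$. I do not anticipate any real obstacle: the entire argument is bookkeeping on top of two black-box tools. The only place one has to be slightly careful is in Step~2, to ensure that the pattern matrix preprocessing is sufficiently simple that it does not push the lifted function out of AC$^0$ or blow up DNF width beyond polylogarithmic; both facts are standard consequences of the fan-in $O(\log n)$ selection gadget used in~\cite{patmat}.
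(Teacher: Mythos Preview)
Your proposal is correct and follows essentially the same approach as the paper: apply the pattern matrix method to the hard function from Theorems~\ref{thm:acz}/\ref{thm:dnf}, verify that the lifting gadget preserves the circuit class, and invoke the Lee--Shraibman approximate-rank bound. The paper's only cosmetic difference is that it instantiates the gadget concretely as $F(x,y)=f(\dots,\vee_{j=1}^{4}(x_{i,j}\wedge y_{i,j}),\dots)$ with constant fan-in rather than the $O(\log n)$-fan-in selection gadget you describe, but your version works equally well for both the AC$^0$ and polylog-width DNF cases.
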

\begin{proof}
Let $f$ be the AC$^0$ function, or low-width DNF, with $(1/2)$-approximate degree at least $n^{1-\delta}$ whose existence is guaranteed
by Theorem \ref{thm:acz} or Theorem \ref{thm:dnf} respectively. The pattern matrix method \cite[Theorem 8.1]{patmat} implies that
 the function 
$F \colon \bits^{4n}\times \bits^{4n} \to \bits$ given by
$$F(x, y) = f\left(\dots, \vee_{j=1}^4 \left(x_{i, j} \wedge y_{i, j}\right) \dots\right)$$
satisfies
 $\text{rank}_{1/3}(F) \geq \exp(\Omega(n^{1-\delta}))$. 
Moreover, if $f$ is computed by a Boolean circuit of depth $k$ and polynomial size,
then $F$ is computed by a Boolean circuit of polynomial size and depth $k+2$. 
Similarly, if $f$ is computed by a DNF formula of width $w$, then $F$ is computed
by a DNF formula of width $O(w)$.
The claimed lower bound on $Q^*_{1/3}(F)$ follows from the fact that for any $2^n \times 2^n$ matrix $F$, we have
$Q^*_{1/3}(F) \geq \Omega( \log \text{rank}_{1/3}(F)) - O(\log n)$ \cite{ls09rank}. 
\end{proof}

The best previous lower bound on the approximate rank and quantum communication complexity
of an AC$^0$ function was $\exp\left(\tilde{\Omega}(n^{2/3})\right)$ and $\tilde{\Omega}(n^{2/3})$ respectively. This follows
from combining the Element Distinctness lower bound (Theorem \ref{thm:ed}),
with the pattern matrix method \cite{patmat}.

\subsection{Nearly Optimal Separation Between Certificate Complexity and Approximate Degree}
Certificate complexity, approximate degree, Fourier degree, block sensitivity, and deterministic, randomized, and quantum query complexities are all natural measures of the complexity of Boolean functions, with many applications in theoretical computer science. While it is known that 
all of these measures are polynomially related, much effort has been devoted
to understanding the maximal possible separations between these measures. Ambainis et al. \cite{ambainisetal}, building
on techniques of G{\"{o}}{\"{o}}s, Pitassi, and Watson \cite{gpw}, 
recently made remarkable progress in this direction, establishing a number of surprising separations
between several of these measures. Subsequent work by Aaronson, Ben-David, and Kothari \cite{cheatsheets} unified and strengthened a number of these separations.

Let $f \colon \bits^{n} \to \bits$ be a (total) Boolean function. In this section, we study the relationship between certificate complexity, denoted $C(f)$ and defined below, and approximate degree. We build on Theorem \ref{thm:dnf} to construct a function $F \colon \bits^n \rightarrow \bits$ with $\adeg(F) = n^{1-o(1)}$ and certificate complexity $n^{1/2 + o(1)}$. The function $F$ exhibits what is essentially the maximal possible separation between these two measures, as it is known that $\adeg(f) = O(C(f)^2)$ for all Boolean functions $f$.\footnote{This follows  by combining the relationship $D(f) \leq C(f) \cdot \text{bs}(f)$ \cite{qqc1} with the relationships $\text{bs}(f) \leq C(f)$ and $\adeg(f) \leq D(f)$.}
The best previous separation was reported by Aaronson et al. \cite{cheatsheets}, who gave a 
function $f$ with $\adeg(f) = \tilde{\Omega}(C(f)^{7/6})$.

\begin{theorem} \label{thm:adegcert-separation}
There is a Boolean function $F: \bits^n \to \bits$ such that $\adeg(F) \geq C(F)^{2-o(1)}$.
\end{theorem}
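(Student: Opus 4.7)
The plan is to use the polylogarithmic-width monotone DNFs produced by Theorem~\ref{thm:dnf} as a starting point, and then to modify them so that both the $0$- and $1$-certificate complexities are small simultaneously. For any fixed $\delta > 0$, Theorem~\ref{thm:dnf} applied with a sufficiently large constant $k = k(\delta)$ yields a monotone DNF $f \colon \bits^m \to \bits$ of width $w = \polylog(m)$ and approximate degree $\Omega(m^{1-\delta})$. Since $f$ is a width-$w$ DNF, its $1$-certificate complexity is at most $w$; however, its $0$-certificate complexity may be as large as $m$, because falsifying every term can in principle require restricting many inputs. Consequently, setting $F = f$ does not yield a nontrivial separation, and the bulk of the work is to shrink $C^0$ while nearly preserving approximate degree.

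To accomplish this, I would apply a ``cheat sheet''-style transformation in the spirit of Aaronson, Ben-David, and Kothari~\cite{cheatsheets}, tailored to our asymmetric situation. Specifically, I would define $F$ on roughly $n = m^{1+o(1)}$ variables as $F(x, z)$, where $x$ is an input to $f$ and $z$ is an auxiliary ``witness'' block of length about $\sqrt{n}$ which purports to specify a valid $1$-certificate for $f(x)$. The function is designed so that $F(x, z) = -1$ iff $z$ encodes a $1$-certificate for $f$ of size at most $w$ that is consistent with $x$. Under this definition, $C(F) \leq \tilde{O}(\sqrt{n})$: certifying $F(x,z) = -1$ requires only reading the $\tilde{O}(w)$ bits of $z$ encoding the certificate together with the at most $w$ bits of $x$ it points to, while certifying $F(x,z) = +1$ reduces either to exhibiting one inconsistency between $z$ and $x$ or to reading all of $z$ to verify that $z$ does not encode any valid certificate.

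The hard part will be lower bounding $\adeg(F)$, that is, showing that this cheat-sheet transformation preserves approximate degree up to polylogarithmic factors. For this, I would use the dual-polynomial toolkit assembled in Section~\ref{sec:ls}: start with a dual witness $\varphi$ of pure high degree $\Omega(m^{1-\delta})$ that certifies $\adeg(f) = \Omega(m^{1-\delta})$ via Theorem~\ref{thm:dual}, combine $\varphi$ through the dual block method with a carefully chosen dual witness $\chi$ for the consistency-check sub-gadget, and finally apply the Razborov--Sherstov correction of Lemma~\ref{lem:rs} to zero out the mass placed on inputs where $z$ is inconsistent with $x$. The main technical obstacle is the construction of $\chi$: it must have sufficient pure high degree, unit $\ell_1$-norm, and be supported on inputs that respect the consistency structure of the cheat sheet, so that the combined witness inherits pure high degree $\tilde{\Omega}(\adeg(f))$ with only a modest loss in correlation with $F$. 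I expect this step to closely parallel the construction in Section~\ref{sec:step2}, with the consistency check playing a role analogous to that of the outer $\SURJ$-style gadget.

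Putting the pieces together, we obtain $\adeg(F) \geq n^{1 - \delta - o(1)}$ and $C(F) \leq n^{1/2 + o(1)}$. Since $\delta$ can be taken arbitrarily small by increasing $k$ in Theorem~\ref{thm:dnf}, this yields $\adeg(F) \geq C(F)^{2 - o(1)}$, matching the universal upper bound $\adeg(F) \leq C(F)^2$ up to the $o(1)$ factor in the exponent.
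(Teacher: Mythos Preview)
Your proposal has a genuine gap at the most critical step: the cheat-sheet construction you describe does \emph{not} preserve approximate degree, and no dual-polynomial argument can rescue it. Observe that your function $F(x,z)$ is defined so that $F(x,z)=-1$ iff $z$ encodes a clause of the width-$w$ DNF $f$ and the $w$ bits of $x$ named by $z$ satisfy that clause. To evaluate $F$, a deterministic decision tree can simply read all $|z|\approx\sqrt{n}$ bits of $z$, decode which $w=\polylog(n)$ coordinates of $x$ are relevant, and then read those. Hence $D(F)\le |z|+w=\tilde{O}(\sqrt{n})$, and since $\adeg(F)\le D(F)$, you get $\adeg(F)=\tilde{O}(\sqrt{n})=\tilde{O}(C(F))$, i.e.\ no separation at all. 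The point is that once the witness $z$ is supplied for free as part of the input, the hardness of $f$ evaporates; there is nothing forcing the algorithm to compute $f$ on $x$. This is exactly why the original cheat-sheet framework of~\cite{cheatsheets} hides the witness in one of exponentially many cells indexed by many evaluations of $f$---and even then it preserves quantum query complexity rather than approximate degree, which is why~\cite{cheatsheets} only obtained a power-$7/6$ separation for $\adeg$ versus $C$. Your proposed dual-polynomial construction cannot succeed because the lower bound you are aiming for is simply false for this $F$.

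The paper takes a completely different route. It first uses Theorem~\ref{thm:gendnf} (with $k$ growing slowly with $n$) to obtain a monotone DNF $f:\bits^M\to\bits$ of width $M^{o(1)}$ with $\adeg(f)\ge M^{1-o(1)}$. It then builds $F=f^*\circ\hat{F}$ as a \emph{block composition}, where $\hat{F}=\MAJ_{10\log M}\circ f$ and $f^*$ is a modification of $f$ (Definition~\ref{def:hstar}) engineered so that every input to $f^*$ has a certificate containing only $M^{o(1)}$ coordinates equal to $+1$. The approximate degree lower bound then follows from the off-the-shelf composition tools Lemma~\ref{lem:maj-amp} and Proposition~\ref{prop:sherstov-degree}, yielding $\adeg(F)\ge M^{2-o(1)}$; the certificate bound $C(F)\le M^{1+o(1)}$ follows because certificates for $f^*$ use few $+1$'s, each of which expands to a $C_{+1}(\hat{F})\le \tilde{O}(M)$ certificate, while the many $-1$'s each expand to a $C_{-1}(\hat{F})\le M^{o(1)}$ certificate. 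The key idea you are missing is this asymmetric certificate-composition trick together with the ``balanced pairs'' construction of $f^*$, not a cheat-sheet transformation.
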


\paragraph{Certificate complexity definitions.} \label{sec:certdef}
Let $f \colon \bits^n \rightarrow \bits$, and let $x \in \bits^n$.
A subset $S \subseteq \{1, \dots, n\}$
is a \emph{$(-1)$-certificate (respectively, $(+1)$-certificate) for $f$ at $x$} if for all inputs $y \in \bits^n$ such that $y_i = x_i$ for all $i \in S$, it holds that $f(y)=f(x)=-1$ (respectively, $f(y)=f(x)=1$). 
For any $x \in \bits^n$, let $C(f, x)$ denote the minimum size of a certificate for $f$ at $x$.
Define $C(f) := \max_{x \in \bits^n} C(f, x)$. Define the $(-1)$-certificate complexity of $f$
to be $C_{-1}(f) := \max_{x \in f^{-1}(-1)} C(f, x)$, and the $(+1)$-certificate complexity of $f$ to be 
 $C_{+1}(f) := \max_{x \in f^{-1}(+1)} C(f, x)$.


 \subsubsection{Warm-Up: A Power $3/2$ Separation}
 
 Before proving Theorem~\ref{thm:adegcert-separation}, we begin by proving a weaker separation that illustrates most of the ideas in our construction.
 
 \begin{proposition}\label{thm:warmup}
There is a Boolean function $F \colon \bits^n \to \bits$ such that $\adeg(F) \geq C(F)^{3/2-o(1)}$.
\end{proposition}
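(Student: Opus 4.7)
The plan is to augment the DNF from Theorem~\ref{thm:dnf} with a cheat sheet in the spirit of Aaronson, Ben-David, and Kothari~\cite{cheatsheets}, producing a function that retains the high approximate degree but enjoys small certificate complexity on both sides.

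I would begin by instantiating Theorem~\ref{thm:dnf} with $k$ chosen so that we obtain a monotone DNF $f : \bits^n \to \bits$ with $\adeg(f) = \Omega(n^{1-\gamma})$ for an arbitrarily small $\gamma > 0$ and width $w = \polylog(n)$. Since every satisfied term furnishes a $(-1)$-certificate, $C_{-1}(f) \le w = \polylog(n)$. The obstruction to a direct separation is that $C_{+1}(f)$—the minimum size of a hitting set for the terms of $f$—can be as large as the quasipolynomial number of terms, so we cannot read off the separation just from $f$ itself.

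To fix this, I would define a cheat-sheet function $F$ on input $(x,z)$, where $x \in \bits^n$ and $z$ is a string of length $K = n^{2/3}$ encoding $K$ purported position-value pairs intended to witness a $(+1)$-certificate (hitting set) for $f$ on $x$. The semantics of $F$ are chosen so that $F(x,z) = +1$ precisely when $z$ is a valid such hitting set consistent with $x$, with the remaining cases handled so as to keep $(-1)$-certificates short. One then checks that $C_{+1}(F) = \tilde O(K) = \tilde O(n^{2/3})$ (reveal $z$ and the $O(K)$ bits of $x$ that it references), and $C_{-1}(F) = \polylog(n)$ (either expose a satisfied term of $f$, or exhibit a single inconsistency in $z$). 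Since $F$ is defined on $N = n + \tilde O(n^{2/3}) = \Theta(n)$ variables, one gets $C(F) = \tilde O(N^{2/3})$.

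The main obstacle is the lower bound $\adeg(F) = N^{1-o(1)}$. The plan is to build a dual polynomial for $F$ by lifting a dual witness $\varphi$ for $\adeg(f)$ (guaranteed by Theorem~\ref{thm:dual}). Concretely, I would combine $\varphi$ with a carefully chosen object on the $z$-coordinates, using the dual block framework of Section~\ref{sec:ls}, so as to produce a dual witness $\Phi$ concentrated on inputs where $z$ is consistent with $x$ (and hence $F(x,z) = f(x)$ there). The mass-truncation technique of Razborov--Sherstov (Lemma~\ref{lem:rs}), deployed as in Section~\ref{sec:final-dual}, should then zero out contributions from inconsistent $(x,z)$ without sacrificing pure high degree, unit $\ell_1$-norm, or correlation with $F$. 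This should yield $\adeg(F) = \tilde{\Omega}(\adeg(f)) = N^{1-o(1)}$, and combining with the certificate bound gives $\adeg(F) \ge C(F)^{3/2-o(1)}$, as required.
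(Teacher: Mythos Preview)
Your construction has a fatal flaw: the function $F$ you define has \emph{exact} degree only $\tilde O(n^{2/3})$, so no $n^{1-o(1)}$ approximate-degree lower bound can hold. The point is that $F(x,z)$ can be computed without ever evaluating $f$ on $x$. By your own semantics, $F(x,z)=+1$ iff (i) the positions listed in $z$ hit every term of the DNF with the claimed falsifying values, and (ii) the claimed values are consistent with $x$. Condition~(i) depends only on $z$, hence has degree at most $|z|=\tilde O(n^{2/3})$. Condition~(ii) is an $\AND$ of $K$ indexing checks, each of the form $[x_{\mathrm{pos}_i(z)}=\mathrm{val}_i(z)]$; each such check has degree $O(\log n)$ (one variable from $x$, $\log n$ address bits from $z$), so the conjunction has degree $\tilde O(K)=\tilde O(n^{2/3})$. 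Thus $\deg(F)=\tilde O(n^{2/3})$. This is precisely why genuine cheat-sheet constructions~\cite{cheatsheets} hide the certificate in one of exponentially many cells whose address is determined by the outputs of many copies of the hard function: otherwise the verifier can simply check the certificate and bypass $f$ entirely. Your proposed dual-witness argument cannot succeed because no such dual witness exists; the vague invocation of Lemma~\ref{lem:rs} does not help, since that lemma zeroes out mass on high-Hamming-weight inputs, not on ``inconsistent'' $(x,z)$ pairs.

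The paper's route is completely different and avoids cheat sheets. It takes the polylog-width DNF $f$ on $M$ variables with $\adeg(f)\ge M^{1-o(1)}$ and $C_{-1}(f)\le M^{o(1)}$, amplifies the error via $\hat F=\MAJ_{10\log M}\circ f$ (so that $\adeg_{1-1/M^2}(\hat F)\ge\adeg(f)$ by Lemma~\ref{lem:maj-amp}), and then sets $F=\AND_M\circ\hat F$. Proposition~\ref{prop:sherstov-degree} then gives $\adeg(F)\ge\Omega(M^{1/2})\cdot\adeg(f)\ge M^{3/2-o(1)}$. The certificate bound exploits the special structure of $\AND_M$: every input has a certificate containing at most one $+1$ coordinate, so $C(F)\le C_{+1}(\hat F)+M\cdot C_{-1}(\hat F)\le M^{1+o(1)}$. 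No attempt is made to control $C_{+1}(f)$ directly; instead the outer $\AND_M$ ensures that the single expensive $(+1)$-certificate of $\hat F$ is invoked only once.
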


\begin{proof}
 
While Theorem \ref{thm:dnf} is stated only for constant $k\geq 1$,
 the proof is easily seen to hold when $k$ is a function of $n$. 
 In particular, explicitly accounting for the constant factor loss that occurs in each step of the inductive proof of Theorem~\ref{thm:dnf}, we obtain the following statement that holds even if $k$ grows with $n$.
 \begin{theorem}[Generalized Version of Theorem \ref{thm:dnf}] \label{thm:gendnf}
 For any integer $k \geq 1$, there is an (explicitly given) monotone DNF on $n \cdot \log^{4k-4}(n)$ variables of width $O(\log^{2k-1}(n))$ that computes a function with approximate degree $2^{-O(k)} \cdot n^{1-2^{k-1}/3^k} \cdot \log^{3-2^{k+2}/3^{k}}(n)$.
 \end{theorem}
 Applying Theorem \ref{thm:gendnf} for an appropriately chosen $k=O(\log\log n)$ yields a function $f : \bits^M \to \bits$
 that is computed by a DNF on $M  \leq n \cdot \log^{O(k)}(n) \leq n^{1+o(1)}$ variables
 with width $O(\log^{O(k)}(n))\leq M^{o(1)}$. Equivalently, $C_{-1}(f) \leq M^{o(1)}$.
 Moreover, $\adeg(f) \geq n^{1-o(1)} \geq M^{1-o(1)}$.

 
 
 Let $\MAJ_{10 \log M}$ denote the Majority function on $10 \log M$ bits. 
The following result is implicit in \cite[Theorem 4.2]{bchtv}.

 \begin{lemma}[Bouland et al. \cite{bchtv}] \label{lem:maj-amp}
 Let $f  \colon \bits^M \to \bits$.
 Then $$\adeg_{\eps}(\MAJ_{10 \log M} \circ f) \geq \adeg(f)$$ for 
 $\eps = 1-1/M^2$. 
 \end{lemma}
 
 Now consider the block-composed function $F = \AND_M \circ \MAJ_{10 \log M} \circ f$.
This is a function on $10M^2 \log M$ variables, and Lemma \ref{lem:maj-amp} together with Proposition~\ref{prop:sherstov-degree}
implies that \begin{equation}\label{loser1} \adeg(F) \geq M^{1/2} \cdot \adeg(f) \geq 
M^{3/2-o(1)}. \end{equation}

We now show that the function $F$ has certificate complexity $C(F) \leq M^{1+o(1)}$. 
Let $\hat{F} = \MAJ_{10 \log M} \circ f$.
Then $C_{-1}(\hat{F}) \leq 5\log M \cdot C_{-1}(f) = M^{o(1)}$; this uses the fact
that in order to certify that $\MAJ_{10 \log M}$ evaluates to $-1$, it is enough 
to certify that at least half of its inputs are equal to $-1$.
And, trivially, $C_{+1}(\hat{F}) \leq 10M \log M$.

Any input $z$ to $\AND_M$
has a certificate $S$ such that $z_i = +1$ for at most one index $i \in S$.
By composing certificates, it follows that \begin{equation} \label{loser} C(F) \le C_{+1}(\hat{F}) + M \cdot C_{-1}(\hat{F}) \leq M^{1+o(1)}. \end{equation}
Combining \eqref{loser1} and \eqref{loser} completes the proof of Proposition~\ref{thm:warmup}.

\end{proof}

 \subsubsection{A Nearly Quadratic Separation}
 To improve Proposition \ref{thm:warmup} to a nearly quadratic separation,
we replace $\AND_M$ in the definition of $F$ with a function $f^*$ defined on roughly $M$ variables, such that $\adeg(f^*) \geq M^{1-o(1)}$.  This function $f^*$ must moreover possess certificates satisfying the same key property as the $\AND$ function.
Namely, every input $z$ to $f^*$ must have a certificate $S$ such that $z_i = +1$ for only a small (as we will see, $M^{o(1)}$) number of indices $i \in S$.
 
\paragraph{Construction of $f^*$.}
While the function $f \colon \bits^M \to \bits$ considered in the proof of Proposition~\ref{thm:warmup} satisfies the requisite approximate degree bound, it lacks the key property regarding its certificates. Hence, we must modify the function $f$ to obtain a suitable function
$f^*$. The modification we use generalizes a technique introduced by Aaronson et al. \cite[Theorem 8]{cheatsheets} to give separations between quantum query complexity and certificate complexity, Fourier degree, and approximate degree.

%


\begin{definition} \label{def:hstar}
Let $f \colon \bits^{n} \to \bits$ be computed by a DNF formula $\mathcal{C}_f$ of width $w$. We define a function $f^* \colon \bits^{2n} \to \bits$ as follows.
Let each of the first $n$ inputs to $f^*$ be associated with an input to $f$, and each of the last $n$ inputs of $f^*$ be associated with the negation of an input to $f$. 
For an $i \in [n]$, let $(x_i, x_{n+i})$ be the pair of inputs
to $f^*$ corresponding to the $i$th input to $f$, and say that the pair is \emph{balanced}
if exactly one of $x_i, x_{n+i}$ is equal to $-1$ (and exactly one is equal to $+1$).

For an input $x \in \bits^{2n}$, define $\gamma(x) \in \{-1, +1, \perp\}^{n}$
by $$(\gamma(x))_i = \begin{cases} -1 \text{ if } (x_i, x_{n+i}) \text{ is balanced and } x_i=-1, \\
+1  \text{ if } (x_i, x_{n+i}) \text{ is balanced and } x_{n+i}=-1,\\
\perp \text{ otherwise.}\end{cases}$$

We say a clause of the DNF formula $\mathcal{C}_f$ is \emph{satisfied} by
a vector $y \in \{-1, +1, \perp\}^{n}$
if every literal in that clause is satisfied by $y$ (if $y_i=\perp$, then any literal corresponding to an input $i$
or its negation is automatically unsatisfied).
Define $f^* \colon \bits^{2n} \to \bits$ by:
 \begin{equation*} f^*(x)  = \begin{cases}  
 -1 \text{ if there is a clause of } \mathcal{C}_f \text{ that is satisfied by } \gamma(x), \text{ and for all } i \in [n], (x_i, x_{n+i}) \neq (+1, +1).\\
 +1 \text{ otherwise.} 
\end{cases}\end{equation*}
\end{definition}

Let $\eps > 0$ and suppose $p : \bits^{2n} \to \mathbb{R}$ is a polynomial with $|p(x) - f^*(x)| \le \eps$ for every $x \in \bits^{2n}$. Then the polynomial $q: \bits^n \to \mathbb{R}$ defined by $q(y) = p(y_1, \dots, y_n, -y_1, \dots, -y_n)$ satisfies $|q(y) - f(y)| \le \eps$ for all $y \in \bits^n$, since the definition of $f^*$ guarantees that $f(y) = f^*(y_1, \dots, y_n, -y_1, \dots, -y_n)$. 
Hence, for every $\eps > 0$, we have
\begin{equation}\label{sighhstar} \adeg_{\eps}(f^*) \ge \adeg_{\eps}(f).
\end{equation} 


\paragraph{Completing the Proof of Theorem~\ref{thm:adegcert-separation}.}
Let $f^*$ denote the function obtained by applying Definition \ref{def:hstar} to 
the function $f \colon \bits^M \to \bits$ described in the proof of Proposition~\ref{thm:warmup}. Recall that $f$ is computed by a DNF $\mathcal{C}_f$ 
of width $M^{o(1)}$.
Hence $f^*$ is a function on $2M$ variables,
and by Inequality \eqref{sighhstar}, 
$$\adeg(f^*) \geq \adeg(f) \geq M^{1-o(1)}.$$

We now argue that every input $x$ to $f^*$ has a certificate $S$ in which at most $M^{o(1)}$
entries of $x|_S$ are equal to $+1$. 
To see this, first let $x$ be any input in $\left(f^*\right)^{-1}(-1)$. Then by definition of $f^*$, it suffices
to certify that (a) there is a clause of $\mathcal{C}_f$ that is satisfied by $\gamma(x)$
and (b) there is no $i \in [M]$ such that $(x_i, x_{M+i})=(+1, +1)$. Letting $w$
denote the width of $\mathcal{C}_f$, condition (a) can be certified  
by providing at most $2 w=M^{o(1)}$ indices of $x$, and condition (b)
can be certified by supplying all of the coordinates of $x$ that are equal to $-1$.

Now suppose that $x$ is an input in  $(f^*)^{-1}(+1)$. There are two kinds of such inputs to certify. 
The first kind is any input $x$ with $(x_i, x_{M+i})=(+1, +1)$ for some $i \in [M]$. Such an input can be certified by providing 
$x_i$ and $x_{M+i}$.
The second kind is an input such that for all $i \in [M]$, the pair $(x_i, x_{M+i})$ has at least one coordinate
equal to $-1$, yet no clause of $\mathcal{C}_f$ is satisfied by $\gamma(x)$.  
This kind of input can be certified by providing the indices of all $(-1)$'s in the input $x$. Such a certificate is enough to reveal $(\gamma(x))_i$ for all $i$ under the assumption that every pair $(x_i, x_{M+i})$
with exactly one $(-1)$ provided is balanced. While this certificate does not prove that every pair $(x_i, x_{M+i})$ is actually balanced, it is still enough to prove that there is no clause of $\mathcal{C}_f$ that is satisfied. This is because changing a pair $(x_i, x_{M+i})$ from balanced to unbalanced cannot
cause an unsatisfied clause of $\mathcal{C}_f$ to become satisfied.

To summarize, the value of $f^*(x)$ can always be certified by providing
at most $M^{o(1)}$ indices of $x$ that are equal to $+1$.
To complete our construction, let $F = f^* \circ \hat{F}$, where $\hat{F} =\MAJ_{10 \log M} \circ f$. This is a function on $M^{2+o(1)}$ variables.
With the aforementioned property of the certificates of $f^*$ in hand, the argument that $\adeg(F) \geq C(F)^{2-o(1)}$
is identical to that of the previous section. Indeed,
by composing certificates for $f^*$ and $\hat{F}$, one obtains 
\begin{equation} \label{finish} C(F) \leq M^{o(1)}\cdot C_{+1}(\hat{F}) + M \cdot C_{-1}(\hat{F}) \leq M^{1+o(1)}.\end{equation}
Since $\adeg(f^*) \geq M^{1-o(1)}$, 
Lemma \ref{lem:maj-amp} and Proposition~\ref{prop:sherstov-degree} imply that
\begin{equation} \label{done} \adeg(F) \geq M^{2-o(1)}.\end{equation}
Combining Equations  \eqref{finish} and \eqref{done} completes the proof of Theorem~\ref{thm:adegcert-separation}.

\subsection{Secret Sharing Schemes}
Bogdanov et al. \cite{viola} observed that for any
$f \colon \bits^n \to \bits$ and integer $d>0$, any dual polynomial $\mu$
for the fact that $\adeg_{\eps}(f) \geq d$ leads to a scheme
for sharing a single secret bit $b \in \bits$ among $n$ parties as follows.
Decompose $\mu$ as $\mu_{+} - \mu_{-}$, where $\mu_{+}$ and $\mu_{-}$
are non-negative functions with $\|\mu_{+}\|_1 = \|\mu_{0}\|_1 = 1/2$. Then in order to split $b$
among $n$ parties,
one draws an input $x=(x_1, \dots, x_n) \in \bits^n$ from the distribution $2 \cdot \mu_b$, and
gives bit $x_i$ to the $i$th party.
In order to reconstruct $b$, one simply applies $f$ to $(x_1, \dots, x_n)$.

Because $\mu$ is $\eps$-correlated with $f$, the probability of correct reconstruction if the bit is 
chosen at random
is at least $(1+\eps)/2$ (and the the \emph{reconstruction advantage}, 
defined to equal $\Pr_{x \sim \mu_{+}}[f(x) = 1] - \Pr_{x\sim\mu_{-}}[f(x) = 1]$, is at least
$\eps$). 
The fact that $\mu$ has pure high degree at least $d$
means that any subset of shares of size less than $d$ provides no information 
about the secret bit $b$. We direct the interested reader to \cite{viola} for further details.

Hence, an immediate corollary of our new approximate degree
lower bounds for AC$^0$ is the following.

\begin{corollary} \label{cor:secret-sharing}
For any arbitrarily small constant $\delta > 0$, there is a secret sharing scheme 
that shares a single bit $b$ among $n$ parties by assigning a bit $x_i$ to each party $i$. The scheme satisfies the following properties.
\begin{enumerate}[(a)]
\item The reconstruction procedure is computed by an $\operatorname{AC}^0$ circuit.
\item The reconstruction advantage is at least 0.49.
\item Any subset of shares of size less than $d = \Omega(n^{1-\delta})$ provides no information 
about the secret bit $b$.
\end{enumerate}
\end{corollary}

The above corollary improves over an analogous result of Bogdanov et al. \cite{viola},
who used the Element Distinctness lower bound (cf. Theorem \ref{thm:ed})
to  give a scheme for which subsets of shares of size less than $d = \Omega(n^{2/3})$ provides no information 
about the secret bit $b$.

To make the secret sharing scheme of Corollary \ref{cor:secret-sharing} explicit, 
one needs an explicit dual polynomial witnessing our new approximate degree lower 
bounds for AC$^0$ (cf. Theorem \ref{thm:acz}). Strictly speaking, our proof of Theorem \ref{thm:acz}
does not achieve this, owing to the \emph{primal-based} symmetrization step of Section~\ref{sec:step1}.
However, this issue is easily addressed. 

In more detail,
recall that Theorem \ref{thm:main-amp} establishes that
there is an AC$^0$ function 
$$\promcomp\colon \{-1, 1\}^{N'}_{\le N} \to \bits$$ for some $N' > N$, such that $\promcomp$ has approximate degree 
at least $\Omega(N^{1-\delta})$. 
In fact, the proof of Theorem \ref{thm:main-amp} constructs an explicit dual polynomial $\psi$ witnessing this approximate degree bound.
Definition \ref{def:g} and Corollary~\ref{cor:reduction} define an associated function $g^* \colon \bits^{m} \to \bits$, for $m=\tilde{O}(N)$,
such that $\adeg_{\eps}(g^*) \ge \adeg_{\eps}(\promcomp) \cdot \lceil \log(R+1) \rceil.$
A natural averaging construction shows how to translate the dual polynomial $\psi \colon  \{-1, 1\}^{N'}_{\le N} \to \reals$ for $\promcomp$ into a dual polynomial $\phi \colon \bits^m \to \reals$ for $g^*$. 
The analysis in the proof of Theorem \ref{thm:sym-vs-prom} can then be used to show that this transformation
preserves pure high degree, and that the correlation of $\phi$ with $g^*$
is the same as the correlation of $\psi$ with $\promcomp$. (These remarks also apply to the augmented construction of $g$ in Definition~\ref{def:h}.)

We further believe that closer inspection of this dual witness should show that shares from the resulting scheme can be \emph{sampled} by an AC$^0$ circuit. 

\section{Future Directions}

\subsection{Stronger Results for Constant Error Approximation} Throughout this section, $\delta$ denotes an arbitrarily small positive constant.
While our $\Omega(n^{1-\delta})$ lower bound on the approximate degree of AC$^0$ 
comes close to resolving Problem 1 from the introduction, we
fall short of a complete solution. Can our techniques
be refined to give an $\Omega(n)$ lower bound on the approximate degree
of a function in AC$^0$? Even the approximate degree of the
$\mathsf{SURJECTIVITY}$ function remains unresolved. It is reasonable to conjecture
that this function has essentially maximal approximate degree, $\Omega(n)$,
yet our methods do not improve on the known
$\Omega(n^{2/3})$ lower bound for this function.  

It would also be very interesting
to extend our $\Omega(n^{1-\delta})$ lower bounds for DNFs of polylogarithmic width and quasipolynomial
size to DNFs of polynomial size (and ideally of logarithmic width). Currently,
the best known lower bound on the approximate degree of polynomial size 
DNFs remains $\tilde{\Omega}(n^{2/3})$ for Element Distinctness.

For any constant integer $k > 0$, the $k$-sum function is a DNF of width $O(\log n)$ that
might have approximate degree $\Omega(n^{k/(k+1)})$ \cite{ksum, ksumadversary}. Another candidate DNF that might
have approximate degree polynomially larger than $\Omega(n^{2/3})$ 
is the
$k$-distinctness function for $k \geq 3$. (The best known upper bound on
the approximate degree of the $k$-distinctness function is $O(n^{1-2^{k-2}/(2^k-1)})$; this bound approaches $n^{3/4}$ as $k \rightarrow \infty$ \cite{kdistinctness}.)
However, we believe that substantially new techniques will be required to resolve the approximate degree 
of these specific candidates. 
As explained in Section \ref{sec:y}, our analysis is tailored to 
showing (near-)optimality of robustification-based approximating 
polynomials for the functions we consider, in a sense that can be made precise via complementary slackness. 
But the best known approximating polynomials for $k$-sum and $k$-distinctness 
are derived from sophisticated quantum algorithms \cite{ksum, kdistinctness}.
In particular, they are not constructed via robustification. 
Hence, we expect that any proof of a novel approximate degree lower bound for these functions will have 
to look very different than our own, as they will have to implicitly engage with 
non-robustification based approximating polynomials.

\subsection{Stronger Results for Large Error Approximation}
Another open direction is to strengthen our $\eps$-approximate degree lower bounds on AC$^0$
from $\eps=1/3$ to $\eps$ much closer to 1. 
For example, the following two variants
of Problem~\ref{problem:bounded} from the introduction are open.

\begin{problem} \label{problem:discrepancy}
Is there a constant-depth circuit in $n$
variables with $\eps$-approximate degree $\Omega(n)$, for (say) $\eps=1-2^{-\Omega(n)}$?
\end{problem}

\begin{problem} \label{problem:threshold}
Is there a constant-depth circuit in $n$
variables with  $\eps$-approximate degree $\Omega(n)$, for \emph{any} $\eps< 1$?
\end{problem}


Problem~\ref{problem:threshold} is equivalent to asking whether there is an AC$^0$ function
with linear \emph{threshold degree}. 
Resolving Problems~\ref{problem:discrepancy} and~\ref{problem:threshold} would have a wide variety of consequences in 
computational learning theory, circuit complexity, and communication complexity 
(see, e.g., \cite{bchtv, sherstov14, btdl} and the references therein).

Despite attention by many researchers, the best known lower bounds in the directions of Problems 3 and 4 are:
\begin{enumerate}[(a)]
\item For any constant $\Gamma > 0$, a depth-3 circuit
with $\eps$-approximate degree $\Omega(n^{1/2-\delta})$ for $\eps=1-2^{-n^{\Gamma}}$ \cite{btdl},
\item A depth-3 circuit with threshold degree $\Omega(n^{3/7})$ \cite{sherstov15}, and
\item A depth-4 circuit with threshold degree $\Omega(n^{1/2})$ \cite{sherstov15}.
\end{enumerate}
We believe that the following three results in the directions of Problems~\ref{problem:discrepancy} and~\ref{problem:threshold} should be achievable 
via relatively modest  extensions of our techniques. 

\medskip
First, it should be possible to nearly resolve Problem~\ref{problem:discrepancy} as follows.
Recall from Section \ref{sec:additionalpriorwork} that 
our recent work \cite{btdl} also 
proved stronger hardness amplification results for approximate degree
by moving beyond block composed functions. The methods of \cite{btdl}
amplify
approximation error but not degree, while in this paper we amplify degree but not approximation error.
We believe that it is possible to combine the two sets of techniques to exhibit a function in AC$^0$
on $n$ variables with $\eps$-approximate degree at least $n^{1-\delta}$, even for
$\eps=1-2^{-\Omega(n^{1-\delta})}$. Such a result would translate in a black-box manner into lower bounds
of $2^{\Omega(n^{1-\delta})}$ on the margin complexity, (multiplicative inverse of) discrepancy, threshold weight,
and Majority-of-Threshold circuit size of AC$^0$, nearly matching trivial $2^{O(n)}$ upper bounds.



%

\medskip
Second,
we are confident that the polylogarithmic width DNF $f \colon \bits^n \to \bits$ of approximate degree $\Omega(n^{1-\delta})$
exhibited in Theorem \ref{thm:dnf} in fact has large \emph{one-sided} approximate degree \cite{bt14}.
Moreover, this should be provable via
a modest extension of our techniques.
Combining such a lower bound with a result of Sherstov \cite{sherstov14} would imply
that $\AND_{n^{1-\delta}} \circ f$ has threshold degree $\Omega(n^{1-\delta})$,
thereby yielding a depth three circuit (of quasipolynomial size) on $N = n^{2-2\delta}$ variables 
with threshold degree $\Omega(N^{1/2-\delta})$. 

\medskip
Third, we believe that the following function $g$ on $O(n \log^4 n)$ variables 
has threshold degree $\Omega(n^{3/5})$.
Let $f_n = \AND_{n^{1/5}} \circ \OR_{n^{2/5}} \circ \AND_{n^{2/5}}$, and let $g$
be the harder function obtained by applying the construction of Theorem \ref{thm:main} to $f_n$. 
Note that $g$ is computed by a circuit of depth 5.

Sherstov \cite{sherstov14} constructed a dual polynomial $\psi$
witnessing the fact that $$\deg_{\pm}\left(\AND_{n^{1/5}} \circ \OR_{n^{2/5}} \circ \AND_{n^{2/5}} \circ \OR_{n^{2/5}}\right) = \Omega(n^{3/5}).$$ 
(Note that this block composed function is defined over $n^{7/5}$ variables.) In order to show that $g$ 
likewise has threshold degree $\Omega(n^{3/5})$, our results from Section \ref{sec:step1} imply that  
it is enough to
``zero out'' the mass that $\psi$ places on inputs of Hamming weight larger than a suitable
threshold $N=\tilde{O}(n)$, without affecting the sign of $\psi$ on the remaining inputs.
We believe that is possible to achieve this via a refinement of the zeroing technique used in this work.

\paragraph{A final ambitious direction.}
A more ambitious direction toward resolving Problems~\ref{problem:discrepancy} and~\ref{problem:threshold} would be to obtain a version of our hardness amplification result (Theorem \ref{thm:main}) that
(a) applies to threshold degree rather than approximate degree and (b) can be applied recursively. This would allow
one to obtain an $\Omega(n^{1-\delta})$
lower bound on the threshold degree of AC$^0$, nearly resolving Problem~\ref{problem:threshold} above. 

One might hope to obtain such a result by extending the above envisioned analysis for obtaining an $\Omega(n^{3/5})$
threshold degree lower bound, so as to allow recursive application of the construction and analysis.
However, we believe that achieving this goal will require substantial new ideas.
The only available techniques for recursively amplifying threshold degree bounds are due to Sherstov \cite{sherstov14, sherstov15}, who considers block composed functions of the form $\mathsf{OR} \circ f$. Specifically, he uses a dual witness for the \emph{outer function} $\OR$ to ``amplify the efficacy'' of a dual witness for the \emph{inner function} $f$.

In contrast, our recursive construction in this paper considers block compositions of the form $f \circ \OR$, and uses a dual witness for 
the \emph{inner function} $\OR$ to ``amplify the efficacy'' of a dual witness for the \emph{outer function} $f$. This difference appears to 
prevent us from combining the methods of \cite{sherstov14} with our own in a manner that would enable recursive application. 
Finding a way to reconcile the two approaches may present a promising avenue for obtaining a (near-)resolution of Problem~\ref{problem:threshold}.

\medskip \noindent \textbf{Acknowledgements.} We are grateful to Shalev Ben-David for illuminating conversations regarding separations between approximate degree and certificate complexity, and to Robin Kothari and Sasha Sherstov for valuable comments on an earlier version of this manuscript. 

\bibliographystyle{plain}
\bibliography{mainsad}

\appendix

\section{A Refined Dual Witness for $\OR$} \label{app:or-dual}

Our goal is to prove the following equivalent formulation of Lemma~\ref{lem:or-sym-dual}.

\begin{lemma} \label{lem:or-sym-dual-ref}
Let $k \in \N$. There exists a constant $c_1 \in (0, 1) $ and a function $\omega : \{0, 1, \dots, k\} \to \R$ such that
\begin{equation}\omega(0) - \sum_{t = 1}^k \omega(t) \ge \frac{1}{3}\|\omega\|_1  \label{eqn:or-sym-corr-ref}\end{equation}
\begin{equation}  \text{For all univariate polynomials } q \colon \R \to \R\text{, } \deg p < c_1\sqrt{k} \implies \sum_{t=0}^k \omega(t) \cdot q(t) = 0 \label{eqn:or-sym-phd-ref}\end{equation}
\begin{equation} \omega(0) > 0 \label{eqn:or-sym-onesided-ref}\end{equation}
\begin{equation} \omega(t) \le \frac{5\|\omega\|_1}{(t+1)^2} \qquad \forall t = 0, 1, \dots, k\label{eqn:or-sym-decay-ref}\end{equation}
\end{lemma}
In the proof of Lemma~\ref{lem:or-sym-dual-ref}, we make use of the following combinatorial identity.

\begin{fact} \label{fact:combinatorial}
Let $k \in \N$, and let $p$ be a polynomial of degree less than $k$. Then
\[\sum_{t = 0}^k (-1)^t \binom{k}{t} p(t) = 0.\]
\end{fact}

\begin{proof}[Proof of Lemma~\ref{lemma:omega}]
Let $c = 25$ below. Let $m = \lfloor \sqrt{k/c} \rfloor$ and define the set
\[T = \{1, 2\} \cup \{ci^2 : 0 \le i \le m\}.\]
Note that $|T| = \Omega(\sqrt{k})$. Define the polynomial
\[\omega(t) = \frac{(-1)^{t+(k-m)}}{k!} \binom{k}{t} \prod_{r \in [k]_0 \setminus T} (t - r).\]
It is immediate from Fact~\ref{fact:combinatorial} that $\omega$ satisfies~\eqref{eqn:or-sym-phd-ref} for $c_1=1/\sqrt{c}$. By inspection, we have $\omega(0) > 0$ and $\omega(1) < 0$.

Expanding out the binomial coefficient reveals that
\[|\omega(t)| = \begin{cases}
 \prod\limits_{r \in T \setminus \{t\}} \frac{1}{|t - r|} & \text{ for } t \in T, \\
0 & \text{ otherwise.}
\end{cases}\]

For $t = 1$, we observe
\[\frac{|\omega(1)|}{\omega(0)} = \frac{2\prod_{i = 1}^m ci^2}{\prod_{i = 1}^m (ci^2 - 1)} = 2\prod_{i=1}^m \frac{i^2}{i^2 - 1/c} \ge 2.\]
On the other hand, for $t = 2$, we have
\begin{align} \label{eqn:omega-2}
\frac{|\omega(2)|}{\omega(0)} &= \frac{2\prod_{i = 1}^m ci^2}{2\prod_{i = 1}^m (ci^2 - 2)} \nonumber \\
&= \left(\prod_{i = 1}^m \frac{i^2 - 2/c}{i^2}\right)^{-1} \nonumber \\
&\le \left(1 - \sum_{i=1}^m \frac{2}{ci^2}\right)^{-1} \nonumber \\
&\le \left(1 - \frac{\pi^2}{3c}\right)^{-1} = \frac{3c}{3c - \pi^2}
\end{align}
where the first inequality follows from the fact that $\prod_{i = 1}^m (1 - a_i) \ge 1 - \sum_{i = 1}^m a_i$ for $a_i \in (0, 1)$.

For $t = cj^2$ with $j \ge 1$, we get
\begin{align*}
\frac{|\omega(t)|}{\omega(0)} &= \frac{2\prod_{i = 1}^m ci^2}{(cj^2 - 1)(cj^2-2)\prod_{i \in [m] \setminus \{j\}} |ci^2 - cj^2|} \\
&= \frac{2(m!)^2}{(c^2j^4 - 3cj^2 + 2)\prod_{i \in [m] \setminus \{j\}} (i+j)|i-j|} \\
&= \frac{2(m!)^2}{(c^2j^4 - 3cj^2 + 2)(m+j)!(m-j)!} \\
&\le \frac{2}{c^2j^4 - 3cj^2 + 2}
\end{align*}
where the last inequality follows because
\[\frac{(m!)^2}{(m+j)!(m-j)!} = \frac{m}{m+j} \cdot \frac{m-1}{m+j-1} \cdot \ldots \cdot \frac{m-j+1}{m+1}\]
is a product of factors that are each smaller than 1. Since
\[\frac{|\omega(cj^2)|}{\|\omega\|_1} \le \frac{|\omega(t)|}{\omega(0)} \le \frac{2}{c^2j^4 - 3cj^2 + 2} \le \frac{5}{(cj^2 + 1)^2}\]
for $c \ge 8$, this establishes~\eqref{eqn:or-sym-decay-ref}.

What remains is to perform the correlation calculation to establish~\eqref{eqn:or-sym-corr-ref}.
First, observe that the total contribution of $t > 2$ to $\|\omega\|_1/\omega(0)$ is at most
\begin{equation} \label{eqn:omega-tail}
\sum_{t > 2} \frac{|\omega(t)|}{\omega(0)} = \sum_{j=1}^{m} \frac{2}{c^2j^4 - 3cj^2 + 2} < \sum_{j=1}^\infty \frac{2}{cj^2} < \frac{\pi^2}{3c}.
\end{equation}

Next, we calculate
\begin{align}
\omega(0) - \sum_{t = 1}^k \omega(t) &\ge \omega(0) - \omega(1) - \left( \sum_{t = 2}^k |\omega(t)|\right) \nonumber \\
&\ge \omega(0) - \omega(1) - \left( \omega(2) +  \omega(0) \cdot \frac{\pi^2}{3c}\right) & \text{by } \eqref{eqn:omega-tail} \nonumber \\
&\ge - \omega(1) + \omega(0) \left(1 - \frac{3c}{3c-\pi^2} - \frac{\pi^2}{3c}\right) & \text{by } \eqref{eqn:omega-2} \nonumber \\
&\ge - \omega(1) - \frac{1}{3}\omega(0) & \text{by our choice of } c = 25. \label{eqn:fuckthis}
\end{align}
On the other hand,
\begin{align}
\|\omega\|_1 &\le \omega(0) - \omega(1) + \omega(2) + \omega(0) \cdot \frac{\pi^2}{3c} & \text{by } \eqref{eqn:omega-tail} \nonumber \\
&\le -\omega(1) + \omega(0) \left(1 + \frac{3c}{3c - \pi^2} + \frac{\pi^2}{3c} \right) &\text{by } \eqref{eqn:omega-2}\nonumber \\
&\le -\omega(1) + \frac{7}{3}\omega(0) & \text{since } c = 25. \label{eqn:fuckthat}
\end{align}
Combining~\eqref{eqn:fuckthis} and~\eqref{eqn:fuckthat}, and using the fact that $-\omega(1) \ge 2\omega(0)$ shows that
\[\frac{\omega(0) - \sum_{t = 1}^k \omega(t)}{\|\omega\|_1} \ge \frac{-\omega(1) - \frac13 \omega(0)}{-\omega(1) + \frac73 \omega(0)} \ge \frac{1}{3}.\]
This establishes~\eqref{eqn:or-sym-corr-ref}, completing the proof.

\end{proof}

\end{document}